\definecolor{dkgreen}{rgb}{0,.6,0}
\definecolor{dkblue}{rgb}{0,0,.6}
\definecolor{dkyellow}{cmyk}{0,0,.8,.3}
\bfseries\color{dkblue},
\algrenewcommand\algorithmicindent{1.25em}
\definecolor{darkgreen}{rgb}{0,0.75,0}
\algrenewcommand\algorithmicdo{:}
\algrenewcommand\algorithmicwhile{\textbf{while}}
\algrenewcommand\algorithmicfor{\textbf{for}}
\algrenewcommand\algorithmicforall{\textbf{for all}}
\algrenewcommand\algorithmicloop{\textbf{loop}}
\algrenewcommand\algorithmicrepeat{\textbf{repeat}}
\algrenewcommand\algorithmicuntil{\textbf{until}}
\algrenewcommand\algorithmicprocedure{\textbf{procedure}}
\algrenewcommand\algorithmicfunction{\textbf{function}}
\algrenewcommand\algorithmicif{\textbf{if}}
\algrenewcommand\algorithmicthen{:}
\algrenewcommand\algorithmicelse{\textbf{else}}
\algrenewcommand\algorithmicrequire{\textbf{Require}:}
\algrenewcommand\algorithmicensure{\textbf{Ensure}:}
\algrenewcommand\algorithmicreturn{\textbf{return}}
\algnewcommand\Or{\textbf{or}\xspace}
\algnewcommand\myAnd{\textbf{and}\xspace}
\newif\ifextended
\newif\iflongbatching
\newif\ifsubmission
\newif\ifelementary
\newif\ifwithdbproof
\newif\ifbuildanonapdx
\def\noeditingmarks{}
   \newcommand{\pgwrapper}[3]{\begingroup \color{#1} #2: #3 \endgroup}
   \newcommand{\pgwrapperb}[1]{\textbf{#1}}
   \newcommand{\dangerwrapper}[1]{{\color{red}#1}}
   \newcommand{\pgwrapperb}[1]{}
   \newcommand{\pgwrapper}[3]{}
   \newcommand{\dangerwrapper}[1]{}
\def\hn{\sffamily\selectfont}
\newcommand{\mpfont}{\hn\scriptsize}
    \newcommand{\MPworker}[2]{\unskip{\color{#1}\vrule\vrule}{\marginpar{\raggedright\color{#1}\mpfont #2}}}
    \newcommand{\MPworker}[2]{\unskip}
\def\t{\textit}
\newcommand{\sys}{\textsc{cobra}\xspace}
\newcommand{\Sys}{\sys}
\newcommand{\oneshot}{one-shot verification\xspace}
\newcommand{\CF}[1]{\xmakefirstuc{#1}}
\newcommand{\twitter}{C-Twitter\xspace}
\newcommand{\rubis}{C-RUBiS\xspace}
\def\imod#1{\allowbreak\mkern10mu({\operator@font mod}\,\,#1)}
\def\compactify{\itemsep=0in \topsep=2pt \parsep=0.00in \partopsep=0pt
\leftmargin=2em}
\let\latexusecounter=\usecounter
\newenvironment{myitemize2}%
  {\begin{list}{\labelitemi}{\itemsep1pt \topsep2pt \parsep0.00in
  \partopsep=0pt \leftmargin1.2em}}%
  {\end{list}}
  {\begin{list}{\labelitemi}{\itemsep2pt \topsep2pt \parsep0.00in
  \partopsep=0pt \leftmargin1.2em}}%
  {\end{list}}
  {\begin{list}{\labelitemi}{\itemsep3pt \topsep3pt \parsep0.00in
  \partopsep=3pt \leftmargin1.5em}}%
  {\end{list}}
  {\begin{list}{\S3.3}{\itemsep3pt \topsep3pt \parsep0.00in
  \partopsep=3pt\leftmargin1em}}%
  {\end{list}}
\newenvironment{myenumerate}
  {\def\usecounter{\compactify\latexusecounter}
   \begin{enumerate}}
  {\end{enumerate}\let\usecounter=\latexusecounter}
\def\compactsortof{\itemsep=0in \topsep=2pt \parsep=0.00in \partopsep=0pt
\leftmargin=1.2em}
\def\compactsqueeze{\itemsep=0pt \topsep0pt \parsep=0ex \partopsep=0pt
\leftmargin=1.63em}
\def\compactRenum{\itemsep=1ex \topsep=1ex \parsep=0.00in \partopsep=0pt
\leftmargin=2.05em}
\def\compactRenum{\itemsep=0in \topsep=2pt \parsep=0.00in \partopsep=0pt
\leftmargin=2.05em}
\def\discretionaryslash{\discretionary{/}{}{/}}
{\catcode`\/\active
\gdef\URLprepare{\catcode`\/\active\let/\discretionaryslash
        \def~{\char`\~}}}%
\def\URL{\bgroup\URLprepare\realURL}%
\def\realURL#1{\tt #1\egroup}%
\NewDocumentCommand{\xrightarrows}{ O{}O{} }{%
\mathrel{%
\vcenter{\hbox{%
\begin{tikzpicture}
  \node[minimum width=0.5cm,minimum height=1ex,anchor=south,align=center] (a){\text{\vphantom{hg}#1}\\[0.5ex] \vphantom{hg}#2};
  \draw[->] ([yshift=0.25ex]a.west) -- ([yshift=0.25ex]a.east);
  \draw[<-] ([yshift=-0.25ex]a.west) -- ([yshift=-0.25ex]a.east);
\end{tikzpicture}
}}%
}%
}
\newcommand{\wrarrow}[1]{\xrightarrow[]{\text{wr(#1)}}}
\newcommand{\wrarrowx}{\xrightarrow[]{\text{wr}}}
\newcommand{\rwarrowx}{\xrightarrow[]{\text{rw}}}
\newcommand{\heading}[1]{
  \vspace{1ex}
  \noindent
  \textbf{#1}}
\newcommand{\constraint}{constraint\xspace}
\newcommand{\constraints}{constraints\xspace}
\newcommand{\trace}{history\xspace}
\newcommand{\traces}{histories\xspace}
\newcommand{\wrrels}{read-dependencies\xspace}
\newcommand{\rwrel}{anti-dependency\xspace}
\newcommand{\pscc}{P-SCC\xspace}
\newcommand{\polyscc}{poly-strongly connected component\xspace}
\newcommand{\polysccs}{poly-strongly connected components\xspace}
\newcommand{\polyg}{polygraph\xspace}
\newcommand{\pg}{precedence graph\xspace} %
\newcommand{\pgs}{precedence graphs\xspace}
\newcommand{\mypolyg}{\sys polygraph\xspace}
\newcommand{\wellformed}{not easily rejectable\xspace}
\newcommand{\notwellformed}{easily rejectable\xspace}
\newcommand{\adjacentwrites}{successive writes\xspace}
\newcommand{\adjacentwrite}{successive write\xspace}
\newcommand{\clientser}{strong session serializable\xspace}
\newcommand{\clientserty}{strong session serializability\xspace}
\newcommand{\depinfo}{extended history\xspace}
\newcommand{\E}{E}
\newcommand{\Po}{P}
\newcommand{\Qo}{Q}
\newcommand{\Pp}{P_p}
\newcommand{\Qp}{Q_p}
\newcommand{\del}{\ominus}
\newcommand{\wellff}{well-formed\xspace}
\newcommand{\solvedcon}{solved constraint\xspace}
\newcommand{\unsolvedcon}{unsolved constraint\xspace}
\newcommand{\solvedcons}{solved constraints\xspace}
\newcommand{\unsolvedcons}{unsolved constraints\xspace}
\newcommand{\delcand}{removable\xspace}
\newcommand{\vpscc}{pscc}
\newcommand{\obsolete}{obsolete\xspace}
\newcommand{\chain}{chain\xspace}
\newcommand{\chains}{\chain{s}\xspace}
\newcommand{\wfence}{Wfence\xspace}
\newcommand{\rfence}{Rfence\xspace}
\newcommand{\wfences}{Wfences\xspace}
\newcommand{\rfences}{Rfences\xspace}
\newcommand{\epochis}[1]{\t{epoch}[{#1}]}
\def\t{\textit}
\newcommand{\pluseq}{\mathrel{+}=}
\newcommand{\subeq}{\mathrel{-}=}
\newcommand{\vg}{\t{g}\xspace}
\newcommand{\vtx}{\t{tx}\xspace}
\newcommand{\ven}{\t{ent}\xspace}
\newcommand{\wwpairs}{\t{wwpairs}\xspace}
\newcommand{\readfrom}{\t{readfrom}\xspace}
\newcommand{\vconstraints}{\t{con}\xspace}
\newcommand{\vperkey}{\t{chains}}
\newcommand{\vkey}{\t{key}}
\newcommand{\vwlist}{\t{chain}\xspace}
\newcommand{\vhead}{\t{head}\xspace}
\newcommand{\vtail}{\t{tail}\xspace}
\newcommand{\edgeset}{\t{edge\_set}}
\newcommand{\vrtx}{\t{rtx}\xspace}
\newcommand{\vwtx}{\t{wtx}\xspace}
\newcommand{\vrm}{\t{tr}\xspace}
\newcommand{\vfe}{\t{fepoch}\xspace}
\newcommand{\epochagree}{\t{epoch}_{\textrm{agree}}}
\newcommand{\wone}{\vtx_\textrm{w1}}
\newcommand{\wtwo}{\vtx_\textrm{w2}}
\newcommand{\rone}{\vtx_\textrm{r}}
\newcommand{\rop}{\t{rop}\xspace}
\newcommand{\wop}{\t{wrop}\xspace}
\newcommand{\fkey}{\textrm{key}}
\newcommand{\ffirst}{\textrm{head}}
\newcommand{\flast}{\textrm{tail}}
\newcommand{\fepoch}{\textrm{epoch}\xspace}
\newcommand{\ffrozen}{\textrm{frozen}\xspace}
\newcommand{\fobsolete}{\textrm{\obsolete}\xspace}
\newcommand{\fdelcand}{\textrm{\delcand}\xspace}
\newcommand{\fdelcandcand}{\textrm{candidate}\xspace}
\newcommand{\rounds}{rounds\xspace}
\newcommand{\ConstructEncoding}{\mbox{\textrm{ConstructEncoding}}\xspace}
\newcommand{\CreateKnownGraph}{\mbox{\textrm{CreateKnownGraph}}\xspace}
\newcommand{\GenConstraint}{\mbox{\textrm{GenConstraints}}\xspace}
\newcommand{\PruneConstraintsRMW}{\mbox{\textrm{CombineWrites}}\xspace}
\newcommand{\PruneConstraintsTR}{\mbox{\textrm{Prune}}\xspace}
\newcommand{\BundleConstraints}{\mbox{\textrm{Coalesce}}\xspace}
\newcommand{\VerifySerializability}{\mbox{\textrm{VerifySerializability}}\xspace}
\newcommand{\ComputeStronglyConnectedComponents}{\mbox{\textrm{CalcStronglyConnectedComponents}}\xspace}
\newcommand{\TransitiveClosure}{\mbox{\textrm{TransitiveClosure}}\xspace}
\newcommand{\InferRWEdges}{\mbox{\textrm{InferRWEdges}}\xspace}
\newcommand{\GenChainToChainEdges}{\mbox{\textrm{GenChainToChainEdges}}\xspace}
\newcommand{\AssignEpoch}{\mbox{\textrm{AssignEpoch}}\xspace}
\newcommand{\CreateKnownGraphTwo}{\mbox{\textrm{CreateKnownGraph2}}\xspace}
\newcommand{\SafeDeletionTwo}{\mbox{\textrm{SafeDeletion2}}\xspace}
\newcommand{\EncodeSMT}{\mbox{\textrm{EncodeSMT}}\xspace}
\newcommand{\GenIndependentClusters}{\mbox{\textrm{GenPolySCCs}}\xspace}
\newcommand{\EncodeAndSolve}{\mbox{\textrm{EncodeAndSolve}}\xspace}
\newcommand{\GarbageCollection}{\mbox{\textrm{GarbageCollection}}\xspace}
\newcommand{\SetFrozen}{\mbox{\textrm{SetFrozen}}\xspace}
\newcommand{\SetObsolete}{\mbox{\textrm{Set\CF{\obsolete}}}\xspace}
\newcommand{\SetRemovable}{\mbox{\textrm{Set\CF{\delcand}}}\xspace}
\newcommand{\GenFrontier}{\mbox{\textrm{GenFrontier}}\xspace}
\newtheorem{theorem2}{Theorem} %
\newtheorem{lemma2}[theorem2]{Lemma}
\newtheorem{claim}[theorem2]{Claim}
\theoremstyle{acmdefinition}
\newtheorem{definition2}[theorem2]{Definition} %
\newtheorem{fact}[theorem2]{Fact}
\newtheorem{corollary2}[theorem2]{Corollary}
\patchcmd{\ttlh@hang}{\parindent\z@}{\parindent\z@\leavevmode}{}{}
\patchcmd{\ttlh@hang}{\noindent}{}{}{}
\begin{document}
\title{
  Detecting Incorrect Behavior of Cloud Databases as an Outsider
}
\author{Cheng Tan, Changgeng Zhao, Shuai Mu$^\star$, and
Michael Walfish \vspace{.5pc}\\
\fontsize{9.5}{11}\selectfont NYU Department of Computer Science, Courant
Institute \quad\quad\quad $^{\star}$Stony Brook University}

\date{}

\maketitle
\thispagestyle{empty}

\begin{abstract}
Cloud DBs offer strong properties, including serializability, sometimes
called the gold standard database correctness property. But cloud DBs
are complicated black boxes, running in a different administrative domain from
their clients; thus, clients might like to know whether the DBs are
meeting their contract. A core difficulty is that the underlying problem
here, namely \emph{verifying serializability}, is
NP-complete~\cite{papadimitriou79serializability}. Nevertheless, we
hypothesize that on real-world workloads, verifying serializability is
tractable, and we treat the question as a systems problem, for the first
time. We build \sys, which tames the underlying search problem by
blending a new encoding of the problem, hardware acceleration, and a
careful choice of a suitable SMT solver. \Sys also introduces
a technique to address the challenge of garbage collection 
in this context. \Sys improves over natural baselines by at least
10$\times$ in the problem size it can handle, while imposing
modest overhead on clients.
\end{abstract}

 \vspace{1ex} %

\section{Introduction and motivation}
\label{s:intro}

A new class of cloud databases has emerged, including products from
    Google~\cite{corbett13spanner,googlespanner,googledatastore},
    Amazon~\cite{verbitski17amazon,amazoneaurora},
    Microsoft Azure~\cite{azurecosmos},
    as well as their ``cloud-native'' open source
    alternatives~\cite{cockroachdb, yugabytedb, faunadb, foundationdb}.
Compared to earlier generations of NoSQL databases (such as
Facebook Cassandra,
  Google Bigtable, and
  Amazon S3),
members of the new class offer the same
scalability, availability, replication, and geodistribution but in addition support a powerful programming construct:
strong ACID transactions.  By ``strong'', we mean that the promised
isolation contract is \emph{serializability}~\cite{papadimitriou79serializability,
  bernstein87concurrency}:
all transactions appear to execute in a single,
sequential order.

Serializability is the ``gold standard'' isolation level~\cite{bailis14highly}, 
  and the one that many applications and programmers implicitly assume
  (in the sense that their code is incorrect if the database provides a
  weaker contract)~\cite{warszawski17acidrain}.%
\footnote{Of course, there is something stronger, namely strict
serializability~\cite{papadimitriou79serializability,bernstein79formal}
(ss), which reflects real-time constraints; while the ss property is
easier to \emph{verify}, it's harder to \emph{provide}
with dependable performance~\cite{lim17cicada}. For that reason,
many of these databases %
offer non-strict serializability,
hence our focus on that property.}
As the essential \emph{correctness contract}, serializability is the visible
  part of the entire iceberg (the cloud).
This has to do with how the cloud database is used~\cite{story1,story2}:
  a user (developer or administrator) deploys, for example, web servers
  as database clients.
And, based on whether the observed behaviors from clients are serializable,
  the user can deduce whether the cloud
  database has operated as expected.
In particular, if the database has satisfied serializability throughout
  one's observation, then the user knows  that the database maintains basic
  integrity: each value read derives from a valid write.
It also implies that the database has survived failures (if any)
  during this period.

A user can legitimately wonder 
whether cloud databases in fact
provide the promised contract. 
For one thing, users have no visibility into
a cloud database's internals. %
Any internal corruption---as could happen from misconfiguration, misoperation, compromise, or
adversarial control at any layer of the execution stack---can result in serializability violation.
And for another, one need not
adopt a paranoid stance (``the cloud as malicious adversary'') to
acknowledge
that it is
difficult, as a technical matter, to provide serializability \emph{and}
geo-distribution \emph{and} geo-replication \emph{and} high performance
under various failures~\cite{alquraan2018analysis,zheng2014torturing,ganesan2017redundancy}.
Doing so usually involves a consensus protocol that interacts with an
  atomic commit protocol~\cite{corbett13spanner,kraska13mdcc,
    mahmoud13low}---a complex, and hence potentially bug-prone,
    combination.

As serializability is a critical guarantee, verifying that the
  database is serializable is an important issue. 
Indeed, existing works~\cite{sumner11marathon,sinha10runtime,
  nagar18automated,xu05serializability,hammer08dynamic,zellag14consistency,
  brutschy17serializability}
  can verify serializability and/or other consistency anomalies.
However, these works share a limitation: they require
  ``inside information''---(parts of) the
  internal schedules of the database.
Crucially, such internal schedules are invisible to the clients of cloud databases.
This leads to our question: \emph{how can clients verify the
serializability of a cloud database without inside information?}

On the one hand,
this question has long been known to be intractable:
Papadimitriou proved its NP-completeness 40 years
  ago~\cite{papadimitriou79serializability}.
On the other hand, one of the remarkable aspects in the field of formal
  verification has %
  been the use of heuristics
  to ``solve'' problems whose general form is intractable; recent examples
  include~\cite{hawblitzel14ironclad,hawblitzel15ironfleet,fromherz19verified}.
This owes to major advances in solvers (advanced SAT and SMT
  solvers)~\cite{moskewicz01chaff,goldberg07berkmin,liang16exponential,
    stump02cvc,biere09handbook,de08z3,balyo17sat,bruttomesso08mathsat},
  coupled with an explosion of computing power.
Thus, our guiding intuition is that it ought to be possible to verify
  serializability (without inside information) in many real-world cases.

And so, we are motivated to treat the italicized question as a systems problem,
  for the first time.
Compared to prior works~\cite{setty18proving,xu05serializability,sinha10runtime,lu15existential}~(\S\ref{s:relwork}),
our underlying technical problem is different and computationally harder,
which consists of
(a)~positing an unmodifiable and black-box database, (b)~retaining the
database's throughput and latency, and (c)~checking
serializability, %
rather than a weaker property.

\bigskip

This paper describes a system called \sys. \Sys comprises a third-party
database  (which \sys does not modify); a set of (legacy)
database clients (which \sys modifies to link to a library); one or more
\emph{\trace collectors} that record requests and responses to the
database; and a \emph{verifier}. The \trace collectors periodically send
\emph{\trace{} fragments} to the verifier, which has to determine
whether the observed history is serializable. The deployer of \sys (also, the user of the cloud database)
defines the trust domain which encompasses database clients, collectors,
and the verifier; while the database is untrusted.
Section~\ref{s:setup} further details the setup.
\Sys's verifier solves two main
problems, outlined below.

\emph{1. Efficient witness search~(\S\ref{s:search}).} One can check
serializability by searching for an acyclic graph whose vertices are
transactions and whose edges obey certain constraints;
a constraint specifies that exactly one
of two edges must be in the searched-for graph. From this description, one suspects that a SAT/SMT
  solver~\cite{de08z3,yices,soos09extending,barrett11cvc4}
would be useful.
But complications arise. To begin with, encoding acyclicity in a SAT instance brings overhead~\cite{gebser14answer,
  gebser14sat,janota17quest} (we see this too;~\S\ref{subsec:oneshot}).
Instead, \sys uses a recent SMT solver, MonoSAT~\cite{bayless15sat}, that is
  well-suited to checking graph properties~(\S\ref{s:solving}).
However, even MonoSAT alone
  is too inefficient~(\S\ref{subsec:oneshot}).

To address this issue, \sys reduces the search problem size.
First, \sys introduces a new encoding that exploits common patterns in
real workloads, such as read-modify-write transactions, to
  efficiently infer ordering relationships from a
  \trace~(\S\ref{s:combiningwrites}--\S\ref{s:coalescing}).
(We prove that \sys's encoding is a valid reduction in Appendix~\ref{sec:appxa}.)
Second, \sys uses parallel
hardware (our implementation uses GPUs; \S\ref{s:impl}) to compute
\emph{all-paths reachability} over the known graph edges; then, \sys 
is able to efficiently resolve some of the constraints, by
testing whether a candidate edge would generate a cycle
with an existing path.

\emph{2. Garbage collection and scaling~(\S\ref{s:gc}).} \Sys's verifier
works in rounds. From round-to-round, however, the verifier must trim
history, otherwise verification would become too costly.  The challenge
is that the verifier seemingly needs to retain all history, because
serializability does not respect real-time ordering, so future
transactions can read from values that (in a real-time view) have been
overwritten~(\S\ref{s:truncationhard}).  To solve this problem, clients
issue periodic \emph{fence transactions}~(\S\ref{s:fence}). The fences
impose coarse-grained synchronization, creating a window from which
future reads, if they are to be serializable, are permitted to read.
This allows the verifier to discard transactions prior to the window.

We implement \sys~(\S\ref{s:impl}) and find~(\S\ref{s:eval}) that,
compared to our baselines, \sys delivers at least a 10$\times$
improvement in the problem size it can handle (verifying a history of 10k transactions in
14 seconds), while imposing minor throughput and latency overhead on
clients. End-to-end, on an ongoing basis, \sys can sustainably verify
1k--2.5k txn/sec on the workloads that we experiment with.

\sys's main limitations are: First, given the underlying problem
is NP-complete, theoretically there is no guarantee that \sys can terminate
(though all our experiments finish in reasonable time, \S\ref{s:eval}).
Second, range queries are not natively supported by \sys;
programmers need to add extra meta-data in database schemas
to help check serializability on range queries.

\begin{figure}
\centering
\includegraphics[width=0.38\textwidth]{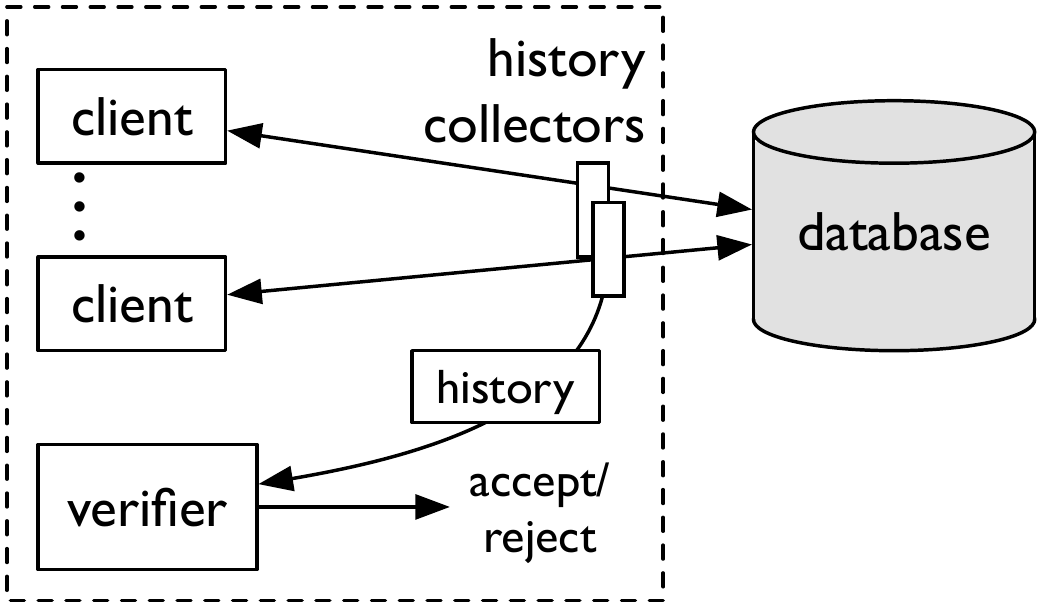}
\caption{Architecture of \sys. The trust domain is framed by dashed lines.}
\label{fig:arch}
\end{figure}
 
\section{Overview and background}
\label{s:background}
\label{s:setup}

Figure~\ref{fig:arch} depicts \sys's high-level architecture.
\textit{Clients} issue requests to a \emph{database} and receive
results. The database is untrusted: the results can be arbitrary.

Each client request is one of five operations: start, commit, abort (which
refer to \emph{transactions}), and read and write (which refer to \emph{keys}).
Each client is single-threaded: it waits to receive the result
  of the current request before issuing the next request.

A set of \emph{\trace collectors} sit between clients and the database,
and capture the requests that clients issue and the (possibly wrong)
results delivered by the database. This capture is a \emph{fragment of a \trace}. A
\emph{\trace} is a set of operations; it is the union of the fragments
from all collectors.

A \emph{verifier} retrieves history fragments from collectors 
and verifies whether the history is
\emph{serializable}; we defined this term loosely in the introduction and will
make it more precise below~(\S\ref{s:searchprelims}).
Verification proceeds in \emph{\rounds{}}; each round consists of a
witness search, the input to which is logically the output of the
previous round and new history fragments. 
Clients, history collectors, and the verifier are trusted.

\sys's architecture is relevant in real-world scenarios.
As an example, an enterprise web application uses a cloud
database for stability, performance, and fault-tolerance.
The end-users of this application
are geo-distributed employees of the enterprise.
To avoid confusion, note that
the employees are users of the application, and the
\emph{clients} here are the web
application servers, as clients of the database.

Database clients (the application) run on the enterprise's hardware (``on-premises'')
while the database runs on an untrusted cloud provider.
The verifier also runs on-premises.
In this setup, collectors can be middleboxes
situated at the edge of the enterprise and can thereby capture
the requests/responses between the clients and the database in the cloud.

The rest of this section defines the core problem more precisely and
gives the starting point for \sys's solution. Section~\ref{s:search}
describes \sys's techniques for a single instance of the problem while
Section~\ref{s:gc} describes the techniques needed to stitch rounds
together.

\subsection{Preliminaries}
\label{s:searchprelims}

First, assume that each value written to the database is unique; thus,
from the \trace, any read (in a transaction) can be associated with the
unique transaction that issued the corresponding write. \Sys discharges
this assumption with logic in the %
\sys client library~(\S\ref{s:impl}).

A \emph{\trace} is a set of read and write operations, each of which is
  associated with a transaction.
Each read operation must read from a particular write operation in the
  history. %
A \emph{\trace is serializable}
if it \emph{matches} a \emph{serial schedule}~\cite{papadimitriou79serializability}.
A \emph{schedule} is a total order of all operations in the \trace.
A \trace and schedule match each other if executing the operations following
  the schedule on a set of single-copy data produces the same read results
  as the \trace. (The write operations are assumed to have empty
    returns so are irrelevant in matching a history and a schedule.)
A \emph{serial} schedule means that the schedule does not have overlapping
transactions.
In addition to a serializable \trace, we also say a \emph{schedule is serializable}
  if the schedule is \emph{equivalent} to a serial schedule---executing the two
  schedules generates the same read results and leaves the data in the
  same final state.

A schedule implies an ordering for every pair of conflicting operations; two
  operations conflict if they are from different transactions and
  at least one is write.
These orderings (all of them) form
a set of \emph{dependencies} among the transactions.
For example, if an operation of a transaction $T_1$ writes a key, and later in
  the schedule, an operation of transaction $T_2$ writes the same key,
  the dependency set contains a dependency denoted as $T_1 \rightarrow T_2$.

From a schedule and its dependency set, one can construct a
\emph{precedence graph} that has a vertex for every transaction in the
schedule and a directed edge for every dependency implied by the
schedule. An important fact is that if the precedence graph is acyclic,
a serial schedule that is equivalent to the original schedule can be
derived, by topologically sorting the precedence graph.

\subsection{Verification problem statement}
\label{subsec:problemstatement}

Based on the immediately preceding fact, the question of \emph{whether a
\trace is serializable} can be converted to \emph{whether the \trace
matches a schedule whose precedence graph is acyclic}. So, the core
problem is to identify such a precedence graph, or assert that none
exists.

Note that this question would be straightforward if the database
revealed its actual schedule (thus ruling out any other possible
schedule): one could construct that schedule's precedence graph, and
test it for acyclicity. Indeed, this is the problem of testing
\emph{conflict-serializability}~\cite{weikum01transactional}. Our
problem, however, is testing 
\emph{view-serializability}~\cite{yannakakis84serializability}.\footnote{Confusingly,
in works targeting conflict serializability, the term ``history''
implies dependency information among conflicting transactions, and
refers to what we call a ``schedule''. Even more confusingly, a database
that \emph{claims to implement} conflict-serializability can, in our
context, be
\emph{tested} only for view-serializability, as the internal
scheduling choices are not exposed.}
In our context, where the database is a black box~(\S\ref{s:intro},
\S\ref{s:background}), we have to (implicitly)
find schedules that match the \trace, and test those schedules'
precedence graphs for acyclicity.  Intuitively, we will conduct this
search by first listing all edges that must exist---for example, a
transaction reads from another's write---and then consider the edges
between every other pair of conflicting transactions (operations) as
\emph{possibilities}.

\begin{figure*}
\includegraphics[width=\textwidth]{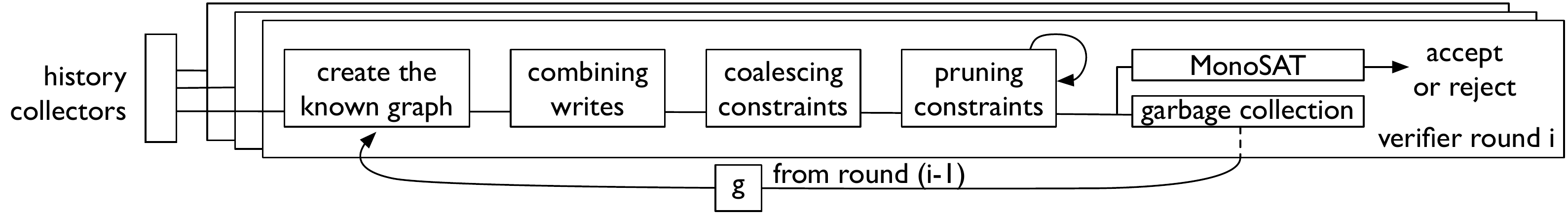}
\caption{The verifier's process, within a round and across rounds.}
\label{fig:veriprocess}
\vspace{-2ex}
\end{figure*}
 
\subsection{Starting point: Intuition and brute force}
\label{subsec:bruteforce}
\label{s:startingpoint}
\label{s:bruteforce}

This section describes a brute-force solution, which serves as the
starting point for \sys and gives intuition.  The approach relies on a
data structure called a \emph{\polyg}~\cite{papadimitriou79serializability},
which captures all possible \pgs when some of the dependencies are
unknown.

In a \polyg, vertices ($V$) are transactions and edges ($E$) are
\wrrels.  A set $C$, which we call \textit{\constraints}, indicates
possible (but unknown) dependencies. Here is an example \polyg:

\vspace{1ex}
\hspace{.75in}\includegraphics[width=0.25\textwidth]{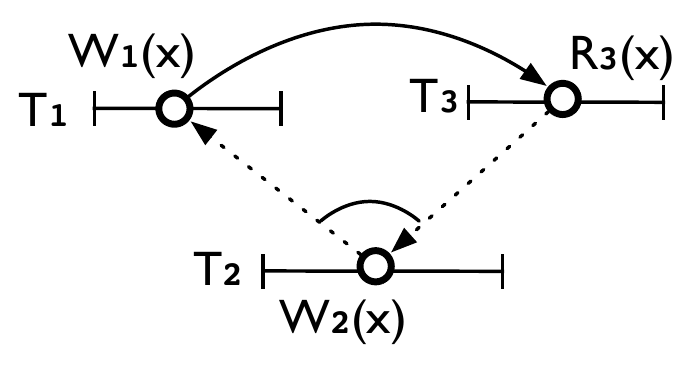}

It has three vertices $V=\{T_1, T_2, T_3\}$, one known edge $E=\{ (T_1, T_3) \}$ from $W_1(x) \wrarrow{x} R_3(x)$,
and one \constraint $\langle\, (T_3,\,T_2),\ (T_2,\,T_1)\, \rangle$
which is shown as two dashed arrows connected by an arc.
This \constraint captures the fact that
$T_3$ cannot happen in between $T_1$ and $T_2$,
because $T_3$ reads $x$ from $T_1$; and $T_2$ which writes $x$
either happens before $T_1$ or after $T_3$.
But it is unknown which option is the truth.

Formally, a \emph{\polyg} $P = (V,\ E,\ C)$ is a directed graph $(V,\ E)$ together with a set 
of \emph{bipaths}, $C$; that is, pairs of edges---not necessarily in
$E$---of the form $\langle (v,\,u),\ (u,\,w) \rangle$ such that $(w,\,v)
\in E$. A bipath of that form can be read as ``either $u$ happened after $v$,
or else $u$ happened before $w$''.

Now, define
\emph{the \polyg $(V,\,E,\,C)$ associated with a \trace}, as
follows~\cite{weikum01transactional}:

\begin{myitemize2}

  \item $V$ are all committed transactions in the \trace

  \item $E = \{ (T_i,\,T_j)\, |\, T_j \textrm{ reads from } T_i \}$;
  that is, $T_i \wrarrow{x} T_j$, for some $x$.

  \item $C = \{ \langle\,(T_j,\, T_k),\  (T_k,\, T_i)\, \rangle \mid
    (T_i \wrarrow{x} T_j)\, \land\,\\ 
    \hspace*{3em}(T_k\textrm{\ writes to $x$}) \land T_k \ne T_i \land T_k \ne T_j\}$.

\end{myitemize2}

The edges in $E$ capture a class of
dependencies~(\S\ref{s:searchprelims}) that are evident from
the \trace, known as WR dependencies (a transaction writes a key, and
another transaction reads the value written to that key).
The third bullet describes how uncertainty is encoded into
constraints. Specifically, for each WR 
dependency in the \trace, all other transactions that write the same key
either happen before the given write or else after the given read.

A \pg is called \textit{compatible} with a \polyg
if: the \pg has the same nodes and known edges in the \polyg, and
the \pg chooses one edge out of each \constraint.
 Formally,
a \pg $(V', E')$ is \emph{compatible} with a \polyg $(V,\,E,\,C)$
if: $V=V'$,
$E \subseteq E'$, and
$\forall \langle e_1,\,e_2 \rangle \in C,\ 
(e_1 \in E' \land e_2 \notin E') \lor (e_1 \notin E' \land e_2
\in E')$.

A crucial fact is: there exists an acyclic \pg that is compatible
 with
the polygraph
 associated to a \trace if and only if that \trace is
serializable~\cite{papadimitriou79serializability,weikum01transactional}.
This yields a brute-force approach for verifying
serializability: first,
     construct a \polyg from a \trace; second, %
     search for a compatible \pg that is acyclic.
However, not only does this approach need to consider $|C|$
binary choices ($2^{|C|}$ possibilities) but also $|C|$ is massive: it is a sum of quadratic
terms, specifically
$\sum_{k \in \mathcal{K}} p_\textit{k} \cdot (q_\textit{k}-1)$,
where $\mathcal{K}$ is the set of keys in the \trace,
and each $p_k,q_k$ are (respectively) the number of reads and writes
of key $k$.

\section{Verifying serializability in \sys}
\label{s:search}
\label{sec:verifyserializable}
\label{subsec:reduce}
\label{s:reduce}
\label{s:ourencoding}

Figure~\ref{fig:veriprocess} depicts the verifier and the major
components of verification. This section covers one round of
verification. As a simplification, assume that the round runs in a
vacuum; Section~\ref{s:gc} discusses how rounds are linked.

\Sys uses an SMT solver geared to graph
properties, specifically MonoSAT~\cite{bayless15sat} (\S\ref{s:solving}).
Yet, despite MonoSAT's power, encoding the problem as in
Section~\ref{s:bruteforce} would generate too much work for
it~(\S\ref{subsec:oneshot}). 

\Sys refines that encoding in several ways. It introduces \emph{write
combining}~(\S\ref{s:combiningwrites}) and
\emph{coalescing}~(\S\ref{s:coalescing}). These techniques are motivated
by common patterns in workloads, and efficiently extract restrictions
(on the search space) that are available in the history.  \Sys's
verifier also does its own inference~(\S\ref{s:pruning}), prior to
invoking the solver. This is motivated by observing that (a)~having
all-pairs reachability information (in the ``known edges'') yields quick
resolution of many constraints, and (b)~computing that information is
amenable to acceleration on parallel hardware such as GPUs (the
computation is iterated matrix multiplication; \S\ref{s:impl}).

Figure~\ref{fig:algocodemain} depicts the algorithm that constructs
\sys's encoding and shows how the techniques combine.  Note that \sys
relies on a generalized notion of \constraints. Whereas previously a
\constraint was a pair of edges, now a \constraint is a pair of
\emph{sets of edges}. Meeting a \constraint $\langle A, B \rangle$ means
including \textit{all} edges in $A$ and excluding \textit{all} in $B$,
or vice versa. More formally, we say that a \pg $(V',E')$ is
\emph{compatible} with a known graph $G=(V,E)$ and generalized
\constraints $C$ if: $V=V'$, $E \subseteq E'$, and $\forall \langle
A,\,B \rangle \in C, (A \subseteq E' \land B \cap E' = \emptyset) \lor
(A \cap E' = \emptyset \land B \subseteq E')$.
  
We prove the validity of \sys's encoding in Appendix~\ref{sec:appxa}.
Specifically we prove that \textit{there exists
an acyclic graph that is compatible with the constraints constructed by
\sys on a given \trace if and only if the \trace is serializable}.

\begin{figure*}[htb!]

\newcommand{\accept}{{\small\textsc{accept}}}
\newcommand{\reject}{{\small\textsc{reject}}}
\newcommand{\mtab}{\hspace{\algorithmicindent}}
\newcommand{\mmtab}{\mtab\mtab}
\newcommand{\mmmmtab}{\mmtab\mmtab}

\footnotesize
\rule{\linewidth}{.08em}

\begin{minipage}[t][][t]{.48\textwidth} 
\begin{algorithmic}[1]
  \Procedure{\ConstructEncoding}{\t{\trace}} \label{li:audit}
    \State $\vg,\,\readfrom,\,\wwpairs  \gets \CreateKnownGraph(\t{\trace})$
    \State $\vconstraints  \gets \GenConstraint(\vg,\,\readfrom,\,\wwpairs)$
    \State $\vconstraints,\,\vg \gets \PruneConstraintsTR(\vconstraints,\,\vg)$ // \S\ref{s:pruning}, executed one or more times
    \State \Return \vconstraints, \vg \label{li:encodingreturn}%
  \EndProcedure

  \State

  \Procedure{\CreateKnownGraph}{\t{\trace}}
    \State $\vg \gets$ empty graph \mmtab\mmtab\mmtab\mtab\,// the konwn graph
    \State $\wwpairs \gets$ map $\{\langle \textrm{Key}, \textrm{Tx}\rangle \to \textrm{Tx}\}$  \,\mtab// consecutive writes \label{li:wwisamap}
    \State // map from a write Tx to a set of read Txs that read this write
    \State $\readfrom \gets$ map $\{\langle \textrm{Key}, \textrm{Tx}\rangle \to
                                    \textrm{Set}\langle\textrm{Tx}\rangle\}$
    \For {transaction \vtx in \t{\trace}} \label{li:loopall}
        \State $\vg.\textrm{Nodes} \pluseq \vtx$
        \For {read operation \rop in \vtx}
            \State $\vg.\textrm{Edges} \pluseq (\rop.\textrm{read\_from\_tx},\ \vtx)$ \mtab// add wr-edge\label{li:addwr}
            \State $\readfrom[\langle \rop.\textrm{key},\,\rop.\textrm{read\_from\_tx} \rangle] \pluseq \vtx$
              \label{li:getreadfrom}
        \EndFor
        \State
        \State // detect RMW (read-modify-write) transactions
        \For {all Keys $\vkey$ that are both read and written by $\vtx$} \label{li:rwsamekey}
           \State \rop $\gets$ the operation in $\vtx$ that reads $\vkey$
                \If {$\wwpairs[\langle \vkey,\,\rop.\textrm{read\_from\_tx} \rangle] \neq \t{null}$} \label{li:detectrmws}
                  \State \reject \label{li:reject1} \mtab  // two \adjacentwrites, not serializable
                \EndIf
                \State $\wwpairs[\langle \vkey,\,\rop.\textrm{read\_from\_tx} \rangle] \gets \vtx$
                  \label{li:getwwpairs}
        \EndFor
    \EndFor
    \State
    \State \Return $\vg,\ \readfrom,\ \wwpairs$
  \EndProcedure

  \State

  \Procedure{\GenConstraint}{$\vg,\,\readfrom,\,\wwpairs$}
    \State // each key maps to set of chains; each chain is an ordered list
    \State $\vperkey \gets$ empty map \{$\textrm{Key} \to \textrm{Set}\langle \textrm{List} \rangle$\}
    \For {transaction \vtx in \vg} \label{li:write2chain1}
        \For {write \wop in \vtx} \label{li:write2chain2}
            \State \vperkey[\wop.key] $\pluseq$ [ \vtx ] \mtab// one-element list 
              \label{li:preparecombine} \label{li:preaprecombine}
        \EndFor
    \EndFor
    \State \label{li:beforecombine}
    \State $\PruneConstraintsRMW(\vperkey,\,\wwpairs)$ \, \, \,// \S\ref{s:combiningwrites}  \label{li:invokecombine}
    \State $\InferRWEdges(\vperkey,\, \readfrom,\, \vg)$ // infer anti-dependency%
    \State
    \State $\vconstraints \gets$ empty set
    \For {$\langle \vkey, \textit{chainset} \rangle$ in $\vperkey$}
    \For {every pair $\{\vwlist_i,\,\vwlist_j\}$ in $\textit{chainset}$} \label{li:chainpairs}
                \State $\vconstraints \pluseq
                \BundleConstraints(\vwlist_i,\,\vwlist_j,\,\vkey,\,\readfrom)$ // \S\ref{s:coalescing}\label{li:genconstraint}
    \EndFor
    \EndFor
  \State
  \State \Return \vconstraints
  \EndProcedure
\algstore{linenum}
\end{algorithmic}

\end{minipage}
\hspace{1ex}
\begin{minipage}[t][][t]{.50\textwidth} 

\begin{algorithmic}[1]
\algrestore{linenum}

  \Procedure{\PruneConstraintsRMW}{$\vperkey,\, \wwpairs$}
    \For {$\langle \vkey,\,\vtx_1,\,\vtx_2 \rangle$ in \wwpairs}
    \label{li:looprmw}  \label{li:isrmw} 
        \State // By construction of \wwpairs, $\vtx_1$ is the write immediately
        \State // preceding $\vtx_2$ on $\vkey$. Thus, we can sequence all writes
        \State // prior to $\vtx_1$ before all writes after $\vtx_2$, as follows:
        \State $\vwlist_1 \gets$ the list in $\vperkey[\vkey]$ whose last elem is $\vtx_1$  \label{li:getchain1}
        \State $\vwlist_2 \gets$ the list in $\vperkey[\vkey]$ whose first elem is $\vtx_2$ 
        \State $\vperkey[\vkey] \mathrel{\setminus}= \{\vwlist_1,\,\vwlist_2\}$ %
         \label{li:rmoldchains}
        \State $\vperkey[\vkey] \pluseq \textrm{concat}(\vwlist_1, \vwlist_2)$ %
          \label{li:concatenateww}
    \EndFor
  \EndProcedure

  \State

    \Procedure{\InferRWEdges}{$\vperkey,\, \readfrom,\, \vg$} \label{li:rwrels}
    \For{$\langle \vkey, \textit{chainset} \rangle$ in \vperkey}
        \For {$\vwlist$ in $\textit{chainset}$}
            \For {$i$ in $[0,\, \textrm{length}(\vwlist) - 2]$}
                \For {$\vrtx$ in $\readfrom[\langle \vkey, \vwlist[i]\rangle]$}
                  \State \textbf{if} ($\vrtx \ne \vwlist[i\textrm{+}1]$):
                  $\vg.\textrm{Edges} \pluseq (\vrtx,\,\vwlist[i\textrm{+}1])$ \label{li:addrw} %
                \EndFor
            \EndFor
        \EndFor
    \EndFor
  \EndProcedure

  \State

  \Procedure{\BundleConstraints}{$\vwlist_1,\,\vwlist_2,\,\vkey,\,\readfrom$} \label{li:coalesce}
    \State $\edgeset_1 \gets \GenChainToChainEdges(\vwlist_1,\,\vwlist_2,\,\vkey,\,\readfrom)$\label{li:c2c1}
    \State $\edgeset_2 \gets \GenChainToChainEdges(\vwlist_2,\,\vwlist_1,\,\vkey,\,\readfrom)$\label{li:c2c2}
    \State \Return $\langle \edgeset_1,\, \edgeset_2 \rangle$ \label{li:conset1set2}
  \EndProcedure

  \State
  \Procedure{\GenChainToChainEdges}{$\vwlist_i,\,\vwlist_j,\,\vkey,\,\readfrom$}
    \If {$\readfrom[\langle \vkey,\,\vwlist_i.\flast \rangle] = \emptyset$}  \label{li:noreadstart}
        \State $\edgeset \gets \{ (\vwlist_i.\flast,\,\vwlist_j.\ffirst) \}$ \label{li:noread}
        \State \Return \edgeset
    \EndIf
    \State
    \State $\edgeset \gets$ empty set
    \For {\vrtx in $\readfrom[\langle \vkey,\,\vwlist_i.\flast \rangle]$} \label{li:wmid}
      \State $\edgeset \pluseq (\vrtx,\, \vwlist_j.\ffirst)$ \label{li:wend}
    \EndFor
    \State \Return \edgeset
  \EndProcedure

  \State

  \Procedure{\PruneConstraintsTR}{$\vconstraints,\,\vg$} \label{li:prunefunc}
  \State // $\vrm$ is the transitive closure (reachability of every two nodes) of $\vg$
  \State $\vrm \gets \TransitiveClosure(\vg)$ \,// standard algorithm; see \cite[Ch.25]{clrs} \label{li:transitiveclosure}
  \For {\t{c} =$\langle \edgeset_1,\,\edgeset_2 \rangle$ in \vconstraints} \label{li:beginprune}
    \If {$\exists (\vtx_i,\, \vtx_j) \in \edgeset_1 \ s.t.\ \vtx_j \rightsquigarrow \vtx_i$ in $\vrm$}
       \label{li:conflict1}
      \State $\vg.\textrm{Edges} \gets \vg.\textrm{Edges} \cup \edgeset_2$  \label{li:addprune1}
      \State $\vconstraints \subeq \t{c}$
    \ElsIf {$\exists (\vtx_i,\, \vtx_j) \in \edgeset_2 \ s.t.\ \vtx_j \rightsquigarrow \vtx_i$ in $\vrm$}
        \label{li:conflict2}
      \State $\vg.\textrm{Edges} \gets \vg.\textrm{Edges} \cup \edgeset_1$ \label{li:addprune2}
      \State $\vconstraints \subeq \t{c}$ \label{li:endprune}
    \EndIf
  \EndFor
  \State \Return \vconstraints,\,\vg
  \EndProcedure
\end{algorithmic}
\end{minipage}
\vspace*{-1ex}
\rule{\linewidth}{.08em}
\caption{ \sys's procedure for converting a \trace into a constraint
satisfaction problem~(\S\ref{subsec:reduce}). After this procedure, \sys feeds the results (a
graph of known edges $G$ and set of constraints $C$) to a constraint solver~(\S\ref{s:solving}),
which searches for a graph that includes the known edges
from $G$, meets the constraints in $C$, and is acyclic. We prove the
algorithm's validity in Appendix~\ref{sec:appxa}.}
\label{fig:algocodemain}
\vspace*{-1ex}
\end{figure*}
 
\subsection{Combining writes}
\label{s:combiningwrites}
\label{s:writecombining}

\Sys exploits the read-modify-write (RMW) pattern, in which a
transaction reads a key and then writes the same key. The pattern is
common in real-world scenarios, for example shopping: in one
transaction, get the number of an item in stock, decrement, and write
back the number.
\Sys uses RMWs to impose order on writes; this reduces the 
orderings that the verification procedure would otherwise have to
consider. Here is an example:

\vspace{1ex}
\includegraphics[width=0.38\textwidth]{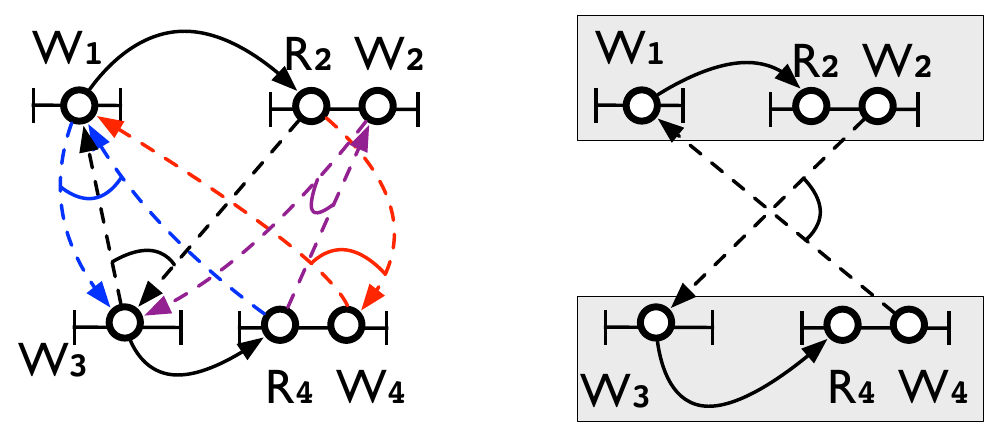}
\vspace{1ex}

There are four transactions, all operating on the same key. Two of the
transactions are RMW, namely $R_2,W_2$ and $R_4,W_4$. On the left is the
basic polygraph~(\S\ref{subsec:bruteforce}); it has four \constraints (each in
a different color), which
are derived from considering WR dependencies. %

\Sys, however, infers
\textit{\chains{}}. A single \chain comprises a \emph{sequence of
transactions whose write operations are consecutive}; in the figure, a
\chain is indicated by a shaded area. Notice that the only ordering
possibilities exist at the granularity of \chains (rather than individual
writes); in the example, the two possibilities of course are
$[W_1,\,W_2] \to [W_3,\,W_4]$ and $[W_3,\,W_4] \to [W_1,\,W_2]$.  This
is a reduction in the possibility space;
for instance, the original version considers the possibility that $W_3$ is
immediately prior to $W_1$ (the upward dashed black arrow), but \sys
``recognizes'' the impossibility of that.

To construct \chains, \sys initializes every write as a one-element
\chain (Figure~\ref{fig:algocodemain}, line~\ref{li:preaprecombine}).
Then, \sys consolidates chains: for each RMW transaction $t$ and the
transaction $t'$ that contains the prior write, \sys concatenates the
chain containing $t'$ and the chain containing
$t$~(lines~\ref{li:getwwpairs} and
\ref{li:looprmw}--\ref{li:concatenateww}).

Note that if a transaction $t$, which is \emph{not} an RMW, reads from a
transaction $u$, then $t$ requires an edge to $u$'s successor (call it
$v$); otherwise, $t$ could appear in the \pg downstream of $v$, which
would mean $t$ actually read from $v$ (or even from a later write), which
does not respect history.  \sys creates the $t \to v$ edge (known as an
\emph{\rwrel} in the literature~\cite{adya99weak}) in \InferRWEdges
(Figure~\ref{fig:algocodemain}, line~\ref{li:rwrels}).

\subsection{Coalescing \constraints}
\label{s:coalescing}

This technique exploits the fact that, in many real-world workloads,
there are far more reads than writes. At a high level, \sys combines all 
reads that read-from the same write. We give an example and then generalize.

\vspace{1ex}

\includegraphics[width=0.4\textwidth]{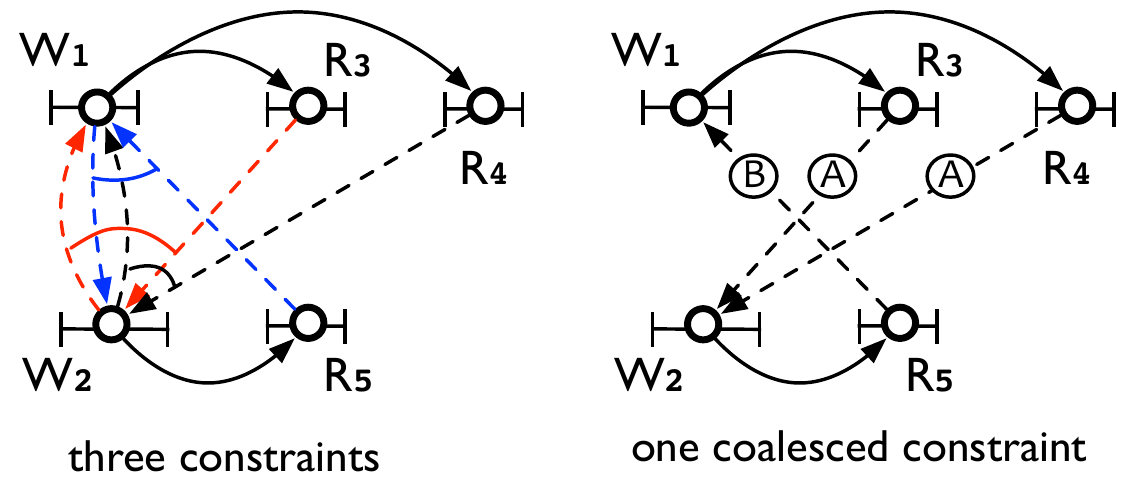}

\noindent In the above figure, there are five single-operation
transactions, to the same key.
On the left is the basic \polyg~(\S\ref{subsec:bruteforce}), which contains three \constraints;
each is in a different color.
Notice that all three \constraints involve the question:
which write happened first, $W_1$ or $W_2$?

One can represent the possibilities as a %
\constraint $\langle A',\,B'\rangle$
where 
$A'=\{(W_1,W_2),(R_3,\,W_2),\,(R_4,\,W_2)\}$ and
$B'=\{(W_2,W_1),(R_5,\,W_1)\}$.
In fact,
\sys does not include $(W_1,W_2)$ 
because there is a known edge $(W_1,R_3)$, which, together with
$(R_3,W_2)$ in $A'$, implies the ordering $W_1 \to R_3 \to W_2$, so there
is no need to include $(W_1,W_2)$. Likewise, \sys does not include $(W_2,W_1)$
on the basis of the known edge $(W_2,R_5)$. So \sys includes the \constraint 
$\langle A,B \rangle=
    \langle \{(R_3,\,W_2),\,(R_4,\,W_2)\},
    \{(R_5,\,W_1)\}
    \rangle$
in the figure.

To construct \constraints using the above reductions, \sys does the
following. Whereas the brute-force approach uses all reads and their
prior writes~(\S\ref{subsec:bruteforce}), \sys considers particular
\emph{pairs of writes}, and creates \constraints from these writes and
their following reads. The particular pairs of writes are the first and
last writes from all pairs of chains pertaining to that key.
In more detail, given two chains, $\vwlist_i,\vwlist_j$, \sys constructs
a \constraint $c$ by
(i)~creating a set of edges $\textit{ES}_1$ that point from reads of
$\vwlist_i.\flast$ to $\vwlist_j.\ffirst$
(Figure~\ref{fig:algocodemain}, lines~\ref{li:wmid}--\ref{li:wend});
this is why \sys does not include the $(W_1,W_2)$ edge above. If there
are no such reads, $\textit{ES}_1$ is $\vwlist_i.\flast \to
\vwlist_j.\ffirst$ (Figure~\ref{fig:algocodemain},
line~\ref{li:noread}); (ii)~building another edge set $\textit{ES}_2$
that is the other way around (reads of $\vwlist_j.\flast$ point to
$\vwlist_i.\ffirst$, etc.), and (iii)~setting $c$ to be $\langle
\textit{ES}_1, \textit{ES}_2\rangle$ (Figure~\ref{fig:algocodemain},
line~\ref{li:conset1set2}).

\subsection{Pruning \constraints}
\label{s:pruning}

Our final technique leverages the information that is encoded in paths
in the known graph.
This technique culls irrelevant possibilities en masse (\S\ref{subsec:oneshot}).
The underlying logic of the
technique is almost trivial. The interesting
aspect here is that the technique is enabled by a design decision to
accelerate the computation of reachability on parallel
hardware~(\S\ref{s:impl} and
Figure~\ref{fig:algocodemain}, line~\ref{li:transitiveclosure}); this
can be done since the computation is iterated (Boolean) matrix
multiplication.
Here is an example:

\begin{center}
\vspace{-2ex}
\includegraphics[width=0.18\textwidth]{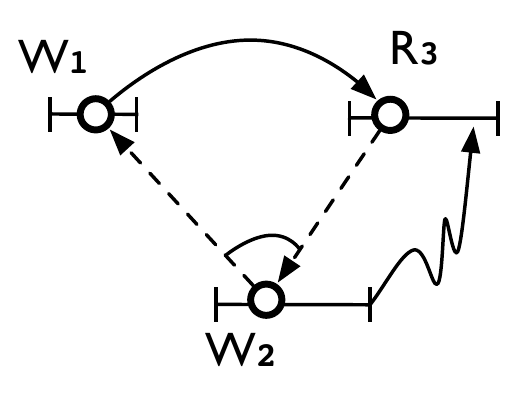}
\vspace{-3ex}
\end{center}

The \constraint is $\langle (R_3,W_2),(W_2,W_1)\rangle$. Having
precomputed reachability, \sys knows that the first choice cannot hold,
as it creates a cycle with the path $W_2 \rightsquigarrow R_3$; \sys
thereby concludes that the second choice holds.  Generalizing, if \sys
determines that an edge in a \constraint generates a
cycle, \sys throws away both components of the entire 
\constraint and adds all the other edges to the known graph
(Figure~\ref{fig:algocodemain},
lines~\ref{li:beginprune}--\ref{li:endprune}).
In fact, \sys does pruning multiple times, if
necessary~(\S\ref{s:impl}).

\subsection{Solving}
\label{s:solving}

The remaining step is to search for an acyclic graph that is compatible
with the known graph and \constraints, as computed in
Figure~\ref{fig:algocodemain}.  \Sys does this by leveraging a
constraint solver. However, traditional solvers do not perform well on
this task because encoding the acyclicity of a graph as a set of SAT formulas is expensive
(a claim by Janota et al.~\cite{janota17quest}, which we also observed,
using their acyclicity encodings on Z3~\cite{de08z3}; \S\ref{subsec:oneshot}).

\Sys instead leverages MonoSAT, which is a particular kind of SMT
solver~\cite{biere09handbook} that includes SAT modulo
\emph{monotonic} theories~\cite{bayless15sat}. This solver efficiently
encodes and checks graph properties, such as acyclicity.

\Sys represents a verification problem instance (a graph $G$ and
\constraints $C$) as follows. 
\Sys creates a Boolean variable $E_{(i,j)}$ for each
edge: True means the $i$th node has an edge to the $j$th
node; False means there is no such edge. \Sys sets all the
edges in $G$ to be True.
For the constraints $C$, recall that each \constraint $\langle A,
B\rangle$ is a pair of sets of edges, and represents a mutually
exclusive choice to include either all edges in $A$ or else all edges in
$B$. \Sys encodes this in the natural way:
$
     ((\forall e_a \in A, e_a) \land (\forall e_b \in B, \lnot e_b)) 
\lor\ ((\forall e_a \in A, \lnot e_a) \land (\forall e_b \in B, e_b)).
$
Finally, \sys enforces the acyclicity of the searched-for compatible graph 
(whose candidate edges are given by the known edges and the constrained
edge variables)
by invoking a primitive provided by the solver.

\heading{\sys vs. MonoSAT.} One might ask: if \sys's encoding makes MonoSAT faster, why use
MonoSAT? Can we take the domain knowledge further?
Indeed, in the limiting case, \sys could
re-implement the solver! However, 
MonoSAT, as an SMT solver, seamlessly leverages many
prior optimizations. %
One way to think about
the decomposition of function in \sys is that \sys's preprocessing
exploits some of the structure created by the problem of verifying
serializability, %
whereas the solver is exploiting residual structure
common to many graph problems. %

\section{Garbage collection and scaling}
\label{s:gc}

\Sys verifies periodically, in rounds. There are two motivations for
rounds.
First, new history is continually produced, of course.
Second, there are limits on the maximum problem size
(in terms of number of transactions) that the verifier can
handle~(\S\ref{s:evalscaling}); breaking the task into
rounds keeps each solving task manageable.

In the first round, a verifier starts with nothing and creates a graph
from \textsc{\CreateKnownGraph}, then does verification. After that, the
verifier receives more client \traces; it reuses the graph from the last
round (the $\vg$ in \textsc{\ConstructEncoding}, Figure~\ref{fig:algocodemain}, line~\ref{li:encodingreturn}),
and adds new nodes and
edges to it from the new \trace fragments
received~(Figure~\ref{fig:veriprocess}).

The technical problem is to keep the input to verification bounded.
So the question \sys must answer is: which transactions
can be deleted safely from \trace? Below, we describe the
challenge~(\S\ref{s:truncationhard}), the core mechanism of fence
transactions~(\S\ref{s:fence}), and how the verifier deletes
safely~(\S\ref{s:safetruncation}).
Due to space restrictions, we only describe the general rules and insights.
A complete specification and correctness proof are
  in Appendix~\ref{sec:appxb}.

\subsection{The challenge}
\label{s:truncationhard}

The core challenge is that past transactions can be relevant to future
verifications, even when those transactions' writes have been
overwritten. Here is an example:

{
\centering
\includegraphics[width=0.47\textwidth]{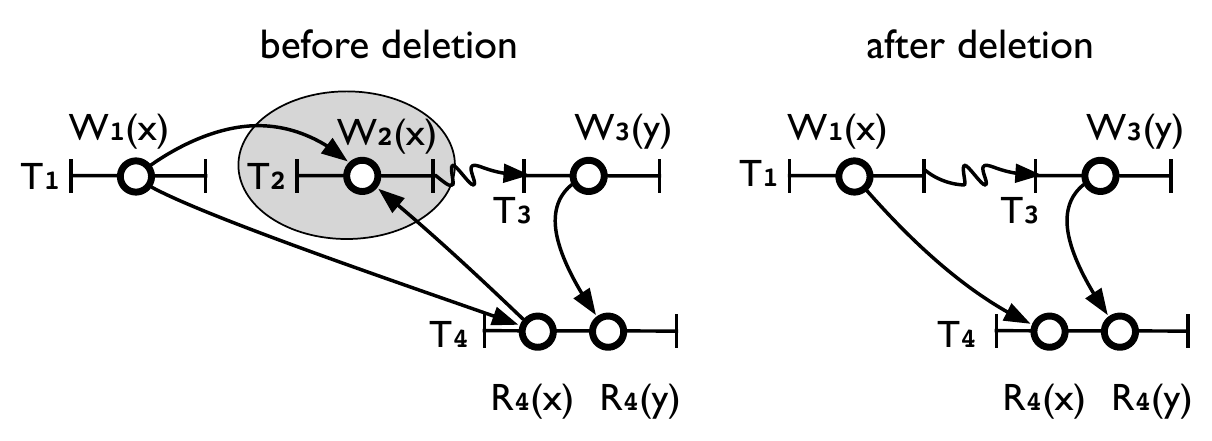}
}

\noindent Suppose a verifier saw three transactions ($T_1,\,T_2,\,T_3$) and wanted
to remove $T_2$ (the shaded transaction) from consideration in future
verification rounds. Later, the verifier observes a new transaction
$T_4$ that violates serializability by reading from $T_1$ and $T_3$.  To
see the violation, notice that $T_2$ is logically subsequent to $T_4$,
which generates a cycle ($T_4 \rwarrowx T_2 \rightsquigarrow T_3
\wrarrowx T_4$). Yet, if we remove $T_2$, there is no cycle.  Hence,
removing $T_2$ is not safe: future verifications would fail to detect
certain kinds of serializability violations.

Note that this does not require malicious or exotic behavior from the
database. For example, consider an underlying database that uses
multi-version values and is geo-replicated: a client can retrieve a
stale version from a local replica. 

Finally, we note that if the database told the verifier when values are
permanently overwritten, the verifier could use this information to
delete safely~\cite{hadzilacos89deleting,farzan08monitoring}. But in
our setup~(\S\ref{s:setup}), the verifier does not get that information.

\subsection{Fence transactions and epochs}
\label{s:fence}
\label{s:guarantee}

\Sys addresses this challenge by introducing \emph{fence transactions} that
  impose a coarse-grained ordering on all transactions; the verifier can
  then discard ``old'' transactions suggested by fence transactions.
A fence transaction is a transaction that reads-and-writes a single
key named ``fence'' (a dedicated key that is used by fence transactions only).
Each client issues fence transactions periodically (for example,
  every 20 transactions).

The fence transactions are designed to divide transactions into
  different \emph{epochs} in the serial schedule.
What prevents the database from defeating the
point of fences by placing all of the fence transactions at the
beginning of a notional serial schedule? The answer is that \sys
\emph{requires} that the database's serialization order not violate the order
of transactions issued by a given client (which, recall, are
single-threaded and block; \S\ref{s:setup}).  Production databases
are supposed to respect this requirement; doing otherwise would violate
causality.  With this, the epoch ordering is naturally
intertwined with the rest of the workload.

Given the preceding requirement, the verifier adds ``client-order
  edges'' to the set of known edges in $g$ (the verifier knows the client order from
  the history collector). 
The verifier also assigns an \emph{epoch number} to each transaction. 
To do so, the verifier traverses the known graph~($g$), locates all the fence
  transactions, chains them into a list based on RMW relation~(\S\ref{s:reduce}),
  and assigns their position in the list as their epoch numbers.
Then, the verifier scans the graph again, and for each normal transaction on a client
  that is between fences with epoch $i$ and epoch $j$ ($j > i$), the verifier
  assigns the normal transaction with an epoch number $j-1$.

During the scan, assume the largest epoch number that has been seen or
  surpassed by every client is $epoch_{agree}$, then we have the following
  guarantee.

\textbf{Guarantee}.
For any transaction $T_i$ whose epoch %
  $\le (epoch_{agree}-2)$, and for any transaction (including future ones)
  $T_j$ whose epoch $\ge epoch_{agree}$, the known graph $g$ contains
  a path $T_i \rightsquigarrow T_j$.

To see why the guarantee holds, consider the problem in three parts.
First, for the fence transaction with epoch number $epoch_{agree}$ (denoted as
  $F_{ea}$), $g$ must have a path $F_{ea} \rightsquigarrow T_j$.
Second, for the fence transaction with epoch number $(epoch_{agree}-1)$ (denoted as
$F_{ea-1}$), $g$ must have a path as $T_i \rightsquigarrow F_{ea-1}$.
Third, $F_{ea-1} \rightarrow F_{ea}$ in $g$.

The guarantee suggests that no future transaction (with epoch
  $\ge epoch_{agree}$) can be a direct predecessor of such $T_i$, otherwise
  a cycle will appear in the \polyg. 
We can extend this property to use in garbage collection.
In particular, if all predecessors of $T_i$ have epoch number
  $\le (epoch_{agree}-2)$, we call $T_i$ a \emph{frozen} transaction,
  referring to the fact that no future transaction can be its (transitive)
  predecessor.

\subsection{Safe garbage collection}
\label{s:safetruncation}

\sys's garbage collection algorithm targets frozen transactions---as
  they are guaranteed to be no descendants of future transactions.
Of all frozen transactions, the verifier needs to keep those which have the
  most recent writes to some key (because they might be read by future transactions). 
If there are multiple writes to the same key and the verifier cannot distinguish
  which is the most recent one, the verifier keeps them all. 
Meanwhile, if a future transaction reads from a deleted transaction (which is a serializability violation---stale read),
  the verifier detects this (the verifier maintains tombstones for the deleted transaction ids)
  and rejects the history.

One would think the above approach is enough, as we did during developing
  the garbage collection algorithm. %
However, this turns out to be insufficient,
which we illustrate using an example below.

{
  \vspace{1ex}
  \centering
  \includegraphics[width=0.42\textwidth]{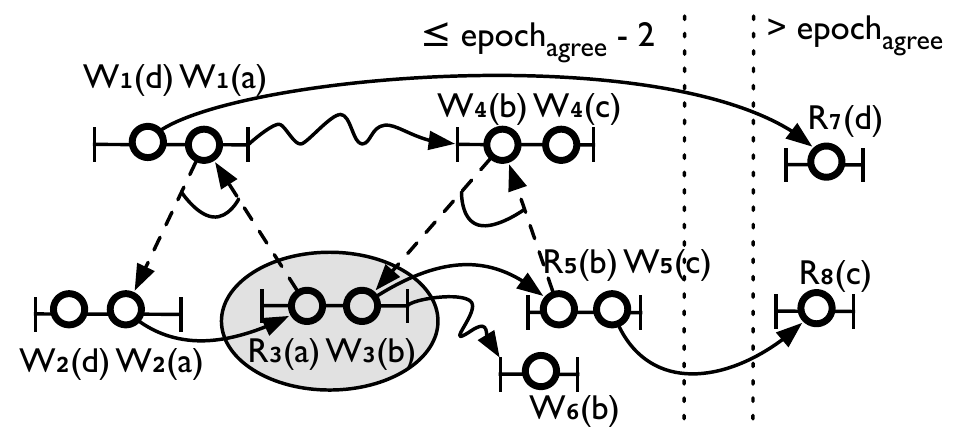}
}

In this example, the shaded transaction ($T_3$; transaction ids indicated by
  operation subscripts)
is frozen and is not the most recent write to any key. %
However, with the two future transactions ($T_7$ and $T_8$), deleting the shaded
  transaction results in failing to detect cycles in the \polyg.

To see why, consider operations on key $c$: $W_4(c)$, $W_5(c)$, and $R_8(c)$.
By the epoch guarantee (\S\ref{s:guarantee}), both $T_4$ and $T_5$
happen before $T_8$. Plus, $R_8(c)$ reads from $W_5(c)$,
hence $W_4(c)$ must happen before $W_5(c)$ (otherwise, $R_8(c)$ should have read from $W_4(c)$).
In which case, the constraint $\langle (T_5,  T_4),\, (T_4, T_3)\rangle$ is solved
($T_5 \to T_4$ conflicts with the fact that $W_4(c)$ happens before $W_5(c)$; hence, $T_4 \to T_3$ is chosen).
Similarly, because of $R_7(d)$, the other constraint is solved and $T_3 \to T_1$.
With these two solved constraints, there is a cycle ($T_1 \rightsquigarrow T_4 \to T_3 \to T_1$).
Yet, if the verifier deletes $T_3$, such cycle would be undetected.

The reason for the prior undetected cycle is that the future transaction may ``finalize'' some
  constraints from the past, causing cycles whereas in the past the constraints
  were ``chosen'' in a different way.
To prevent cases like this, \sys's verifier keeps
  transactions that are involved in any potentially cyclic constraints.

\section{Implementation}
\label{s:impl}

\begin{figure}
\footnotesize
\begin{tabular*}{\columnwidth}{@{\extracolsep{\fill}} l  l @{}}
\toprule
\Sys component     &  LOC written/changed \\
\midrule
\sys client library & \\
\hspace{2ex} \trace recording&   620 lines of Java\\
\hspace{2ex} database adapters   &   900 lines of Java\\
\sys verifier &\\
\hspace{2ex} data structures and algorithms & 2k lines of Java\\
\hspace{2ex} GPU optimizations & 550 lines of CUDA/C++ \\
\hspace{2ex} \trace parser and others & 1.2k lines of Java\\
\bottomrule
\end{tabular*}
\caption{Components of \sys implementation.}
\label{fig:impl}
\end{figure}

The components of \sys's implementation are listed in
Figure~\ref{fig:impl}.
Our implementation includes a client library and a verifier.
\Sys's client library wraps other database
libraries: JDBC, Google Datastore library, and RocksJava.
It enforces the assumption of uniquely written
values~(\S\ref{s:searchprelims}), by adding a unique id to a client's
writes, and stripping them out of reads.  It also issues fence
transactions~(\S\ref{s:fence}). %
Finally, in our current implementation, we simulate history
collection~(\S\ref{s:setup}) by collecting histories in this library;
future work is to move this function to a proxy.

For the verifier, we discuss two aspects of pruning~(\S\ref{s:pruning}).
First, the verifier iterates the pruning logic within a round, stopping
when either it finds nothing more to prune or else when it reaches a
configurable maximum number of iterations (to bound the verifier's
work); a better implementation would stop when the cost of the marginal
pruning iteration exceeds the improvement in the solver's running
time brought by this iteration.

The second aspect is GPU acceleration. Recall that pruning 
works by computing the transitive closure of the known edges~(Fig.~\ref{fig:algocodemain}, line~\ref{li:transitiveclosure}). \Sys
uses the standard algorithm: repeated squaring of the Boolean
adjacency matrix~\cite[Ch.25]{clrs} as long as the matrix keeps
changing, up to $\log |V|$ matrix multiplications.
($\log{|V|}$ is the worst
case and occurs when two nodes are connected by a ($\geq |V|/2+1$)-step
path; at least in our experiments, this case does not arise much.)
The execution platform 
is cuBLAS~\cite{cublas} (a dense linear algebra library on GPUs)
and cuSPARSE~\cite{cusparse}
(a sparse linear algebra library on GPUs),
which contain matrix multiplication routines.

\Sys includes several optimizations. It invokes a specialized
routine for triangular matrix multiplication.
(\sys first tests the graph for acyclicity, and then indexes the
vertices according to a topological sort, creating a triangular matrix.)
\sys also exploits sparse matrix multiplication (cuSPARSE), and moves to 
ordinary (dense) matrix multiplication when the density of the matrix
exceeds a threshold
(chosen to be $\geq5$\% of the matrix elements are
non-zero, the empirical cross-over point that we observed).

Whenever \sys's verifier detects a serializable violation, it creates a
certificate with problematic transactions.
The problematic transactions are either a cycle in
the known graph detected by \sys's algorithm, or
a minimal unsatisfiable core (a set of unsatisfiable
clauses that translates to problematic transactions)
produced by the SMT solver.
\section{Experimental evaluation}
\label{s:eval}

We answer three questions:

\begin{myitemize2}

\item What are the verifier's costs and limits, and how do these compare
to baselines?

\item What is the verifier's end-to-end, round-to-round
\emph{sustainable capacity}? This determines the offered load (on the
actual database) that the verifier can support.

\item How much runtime overhead (in terms of throughput and latency) does \sys
impose for clients? And what are \sys's storage and network overheads?

\end{myitemize2}

\heading{Benchmarks and workloads.}
We use four benchmarks:

\begin{myitemize2}

\item \textit{TPC-C}~\cite{tpcc} is a standard.
A warehouse has 10 districts with 30k customers.
There are five types of transactions (frequencies in 
parentheses): new order (45\%), payment (43\%), order status (4\%), delivery
(4\%), and stock level (4\%).
In our experiments, each client randomly chooses a warehouse
and a district, and issues a transaction based on the frequencies above.

\item \textit{\CF{\twitter}}~\cite{twitter} is a simple clone of Twitter,
according to Twitter's own description~\cite{twitter}.
It allows users to tweet a new post, follow/unfollow other users,
show a timeline (the latest tweets from followed users).
Our experiments include a thousand users. Each user tweets 140-word posts and
follows/unfollows other users based on Zipfian distribution
($\alpha=100$).

\item \textit{\CF{\rubis}}~\cite{rubis,amza02specification},
simulates bidding systems like eBay~\cite{rubis}.
Users can register accounts, register items,
bid for items, and comment on items.
We initialize the market with 20k users and 200k items.

\item \textit{BlindW} is a microbenchmark we wrote to demonstrate \sys's
performance in extreme scenarios. It
creates a set of keys,
and runs random read-only and write-only transactions on them.
In our experiments, every transaction has eight operations, and
there are 10k keys in total.
This benchmark has two variants:
(1) \textit{BlindW-RM} represents a read-mostly workload that contains 90\% read-only
transactions; and (2) \textit{BlindW-RW} represents a read-write
workload, evenly divided between read-only and write-only transactions.

\end{myitemize2}

\noindent\textbf{Databases and setup.}
We evaluate \sys on Google Cloud Datastore~\cite{googledatastore},
PostgreSQL~\cite{postgresql,ports12serializable}, and RocksDB~\cite{rocksdb,dong17optimizing}.
They represent three database environments---cloud, local, and
co-located.
In our experimental setup, clients interact with Google Cloud Datastore
through the wide-area Internet,
and connect to a local PostgreSQL server through a local 1Gbps network.

In the cloud and local database setups,
clients run on two machines with a 3.3GHz Intel i5-6600 (4-core) CPU,
16GB memory, a 250GB SSD, and Ubuntu 16.04.
In the local database setup, a PostgreSQL server runs on a machine
with a 3.8GHz Intel Xeon E5-1630 (8-core) CPU, 32GB memory,
a 1TB disk, and Ubuntu 16.04.
In the co-located setup, the same machine hosts the 
client threads and RocksDB threads, which all run in the same process.
We use a \textit{p3.2xlarge} Amazon EC2 instance as the verifier,
with an NVIDIA Tesla V100 GPU, a 8-core CPU, and 64GB memory.
\subsection{\CF{\oneshot}}
\label{subsec:oneshot}

In this section, we consider ``\oneshot'',
the original serializability verification problem:
a verifier gets a history and decides whether
that history is serializable.
In our setup, clients record \traces fragments and store them as files;
a verifier reads them from the local file system.
In this section, the database is RocksDB (PostgreSQL gives similar results;
Google Cloud Datastore limits the throughput for a fresh database instance
which causes some time-outs).

\heading{Baselines.}
We have two baselines:

\begin{myitemize2}

\item Z3~\cite{de08z3}: %
  we encode the serializability verification problem into a set of SAT formulas,
  where edges are Boolean variables.
  We use \emph{binary labeling}~\cite{janota17quest} to express
  acyclicity, requiring $\Theta(|V|^2)$ SAT formulas.

\item MonoSAT~\cite{bayless15sat} (with the ``brute force'' encoding):
we implement the original \polyg~(\S\ref{s:bruteforce}), directly encode
the \constraints (without the techniques of~\S\ref{s:search}), and feed
them to MonoSAT.

\end{myitemize2}

\begin{figure}
    \centering
    \includegraphics[width=0.9\linewidth]{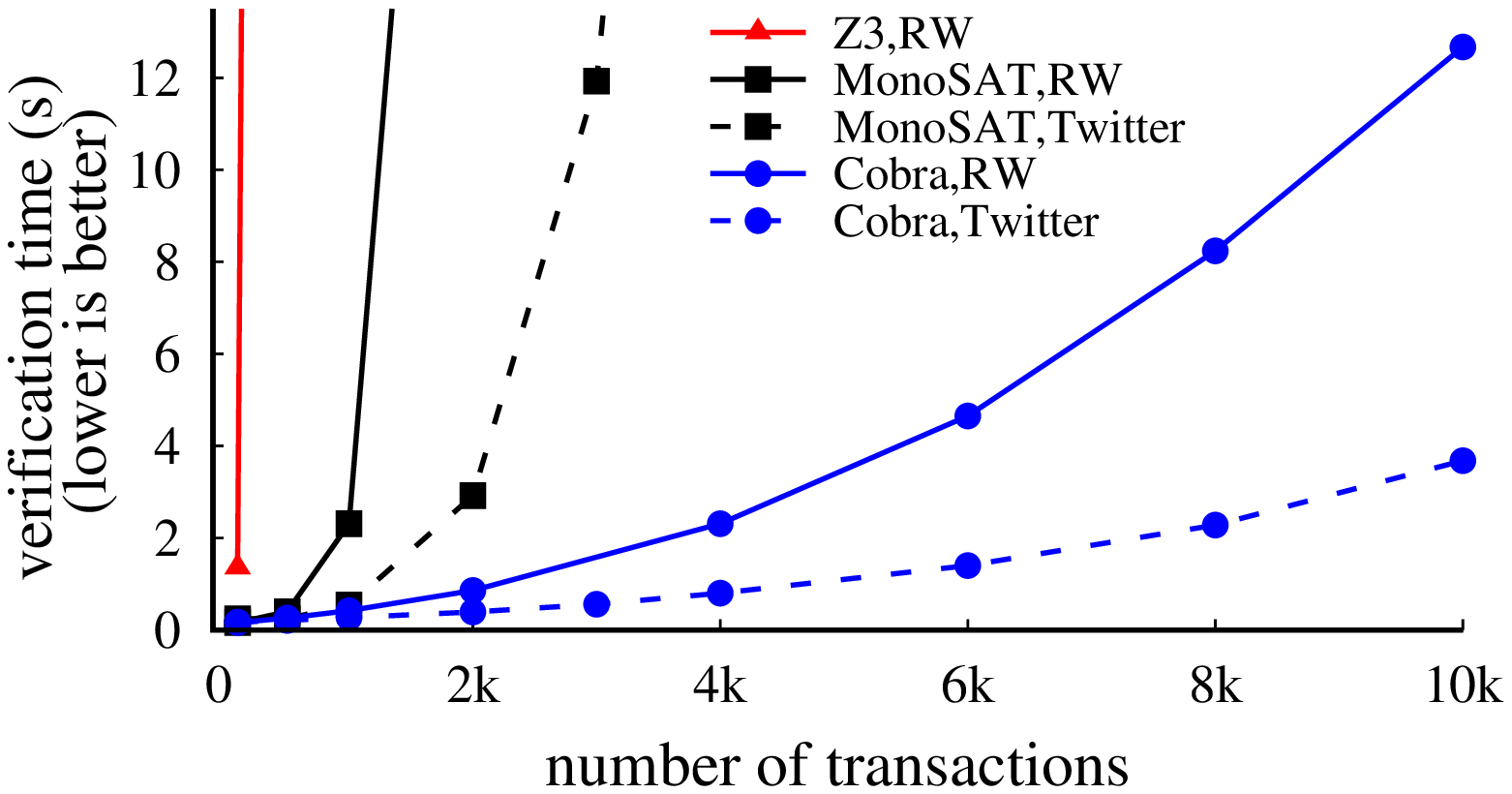}
    \caption{\Sys's running time is faster than MonoSAT's and Z3's on
    the BlindW-RW workload (solid lines) 
    and the Twitter workload (dashed lines). The same holds on the other benchmarks (not
    depicted). Verification runtime grows superlinearly.}
    \label{fig:timevsnumtxn}
    \vspace{-3ex}
\end{figure}

\heading{Verification runtime vs. number of transactions.}
We compare \sys to other baselines, on the various
workloads. There are 24 clients. We vary the total number of
transactions in the workload, and measure the total verification time.
Figure~\ref{fig:timevsnumtxn} depicts the results on two benchmarks.
On all five benchmarks,
\sys does better than
MonoSAT which does better than Z3.\footnote{\label{fn:holycow}As a
special case, there is, for TPC-C, an alternative
that beats MonoSAT and Z3
and has the same performance as \sys. Namely, add edges that be inferred from RMW
operations in history to a candidate graph (without constraints, and so
missing a lot of dependency information), topologically sort it, and
check whether the result matches history; if not, repeat. This process
has even worse order complexity than the one in \S\ref{s:bruteforce},
but it works for TPC-C because that workload has \emph{only} RMW
transactions, and thus the candidate graph \emph{is} (luckily) a
precedence graph.}

\begin{figure}
    \centering
    \includegraphics[width=\linewidth]{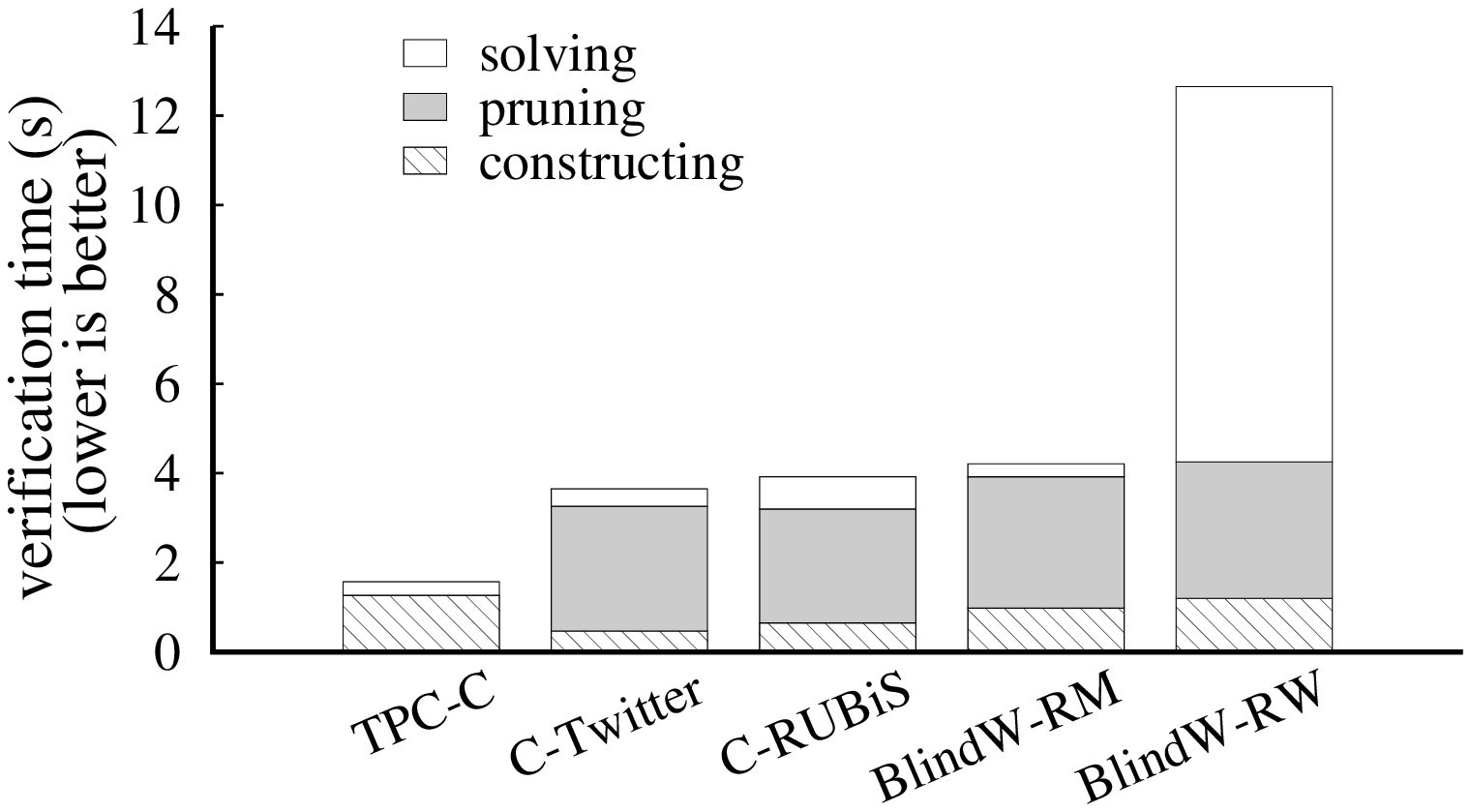}
    \caption{Decomposition of \sys runtime, on 10k-transaction workloads. 
    In benchmarks with RMWs only (the left one), there is no constraint, so
    \sys doesn't do pruning (see also footnote~\ref{fn:holycow},
    page~\pageref{fn:holycow});
    in benchmarks with many reads and RMWs (the middle three),
    the dominant component is pruning not solving, because \sys's own logic
    can identify concrete dependencies; in benchmarks with many blind writes
    (the last one), solving is a much larger contributor because \sys is
    not able to eliminate as many constraints.}
    \label{fig:veribreakdown}
\end{figure}

\heading{Detecting serializability violations.}
In order to investigate \sys's performance on an unsatisfiable instance: does
it trigger an exhaustive search, at least on the real-world workloads we found?
We evaluate \sys on five real-world workloads that are known to have
serializability violations. \sys detects them in reasonable time.
Figure~\ref{fig:violations} shows the results.

\begin{figure}
\footnotesize
\begin{tabular*}{\columnwidth}{@{\extracolsep{\fill}} l @{\hskip 3pt}l @{\hskip 3pt}l l@{}}
\toprule
Violation & Database & \#Txns & Time  \\
\midrule
G2-anomaly~\cite{yuga-g2} & YugaByteDB 1.3.1.0 & 37.2k & 66.3s\\
Disappear writes~\cite{yuga-disw} & YugaByteDB 1.1.10.0 & 2.8k &  5.0s\\
G2-anomaly~\cite{cock-g2} & CockroachDB-beta 20160829 & 446 & 1.0s\\
Read uncommitted~\cite{cock-blog} & CockroachDB 2.1 & 20$\star$ & 1.0s\\
Read skew~\cite{fauna-page} & FaunaDB 2.5.4 & 8.2k & 11.4s\\
\bottomrule
\end{tabular*}
\caption{Serializability violations that \sys checks.
``Violation'' describes  the phenomena that clients experience.
``Database'' is the database (with version number) that generates the violation.
``\#Txns'' is the size of the violation history.
``Time'' is the runtime for \sys to detect such violation.\\
$\star$ The bug report only provides the history snippet that violates
serializability.}
\label{fig:violations}
\vspace{2ex}
\end{figure}

\heading{Decomposition of \sys's verification runtime.} We measure the
wall clock time of \sys's verification on our setup, broken into three
stages: \emph{constructing}, which includes creating the graph of known
edges, combining writes, and creating
\constraints~(\S\ref{s:combiningwrites}--\S\ref{s:coalescing});
\emph{pruning}~(\S\ref{s:pruning}), which includes the time taken by the
GPU; and \emph{solving}~(\S\ref{s:solving}), which includes the time
spent within MonoSAT. We experiment with all benchmarks, with 10k
transactions. Figure~\ref{fig:veribreakdown} depicts the
results.

\begin{figure}
    \centering
    \includegraphics[width=\linewidth]{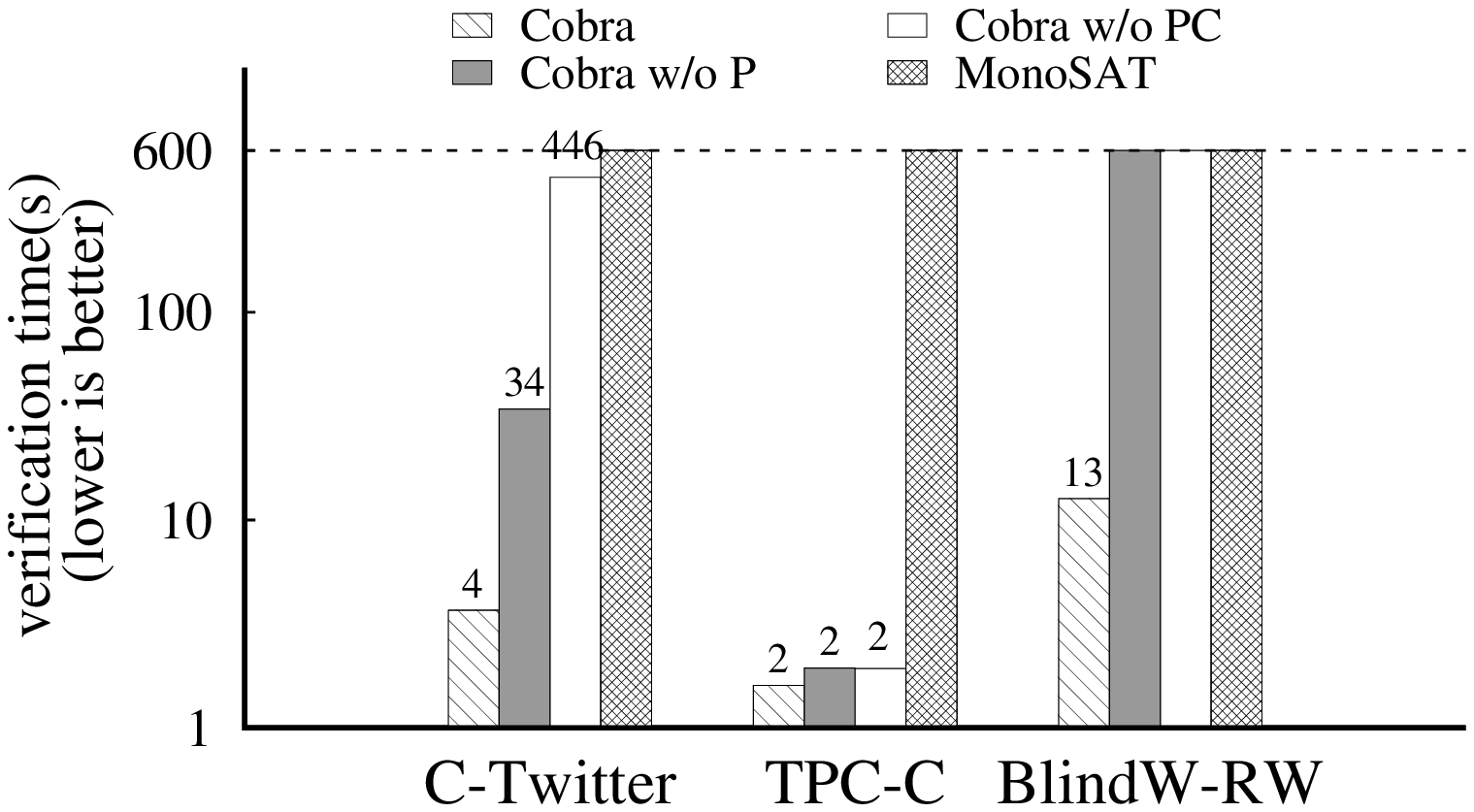}
    \caption{Differential analysis on different workloads. Log-scale,
    with values above bars. On C-Twitter, each of \sys's
    components contributes meaningfully. 
    On TPC-C, combining writes exploits the RMW pattern and
    solves all the constraints
    (see also footnote~\ref{fn:holycow}, page~\pageref{fn:holycow}).
    On the other hand, pruning is essential for BlindW-RW.}
    \label{fig:diffanalysis}
\end{figure}

\heading{Differential analysis.}
We experiment with four variants: \sys itself; \sys without
pruning~(\S\ref{s:pruning}); \sys without pruning and
coalescing~(\S\ref{s:coalescing}), which is equivalent to MonoSAT plus
write combining~(\S\ref{s:writecombining}); and the MonoSAT baseline.
We experiment with three benchmarks, with 10k transactions.
Figure~\ref{fig:diffanalysis} depicts the results.

\subsection{Scaling}
\label{s:evalscaling}

We want to know: what offered load (to the database) can \sys support on
an ongoing basis? To answer this question, we must quantify \sys's
\emph{verification capacity}, in txns/second. This depends on the
characteristics of the workload, the number of transactions one
round~(\S\ref{s:gc}) verifies ($\#tx_r$), and the average time for one
round of verification ($t_{r}$). Note that the variable here is
$\#tx_r$; $t_r$ is a function of that choice. So the
\textit{verification capacity} for a particular workload is defined as:
$\max_{\#tx_r}(\#tx_r/t_r)$.

To investigate this quantity, we run our benchmarks on RocksDB
with 24 concurrent clients, a fence transaction every 20
transactions. We generate a 100k-transaction history
ahead of time. For that same history, we vary $\#tx_r$, plot
$\#tx_r/t_r$, and choose the optimum.

\begin{figure}[]
    \centering
    \includegraphics[width=\linewidth]{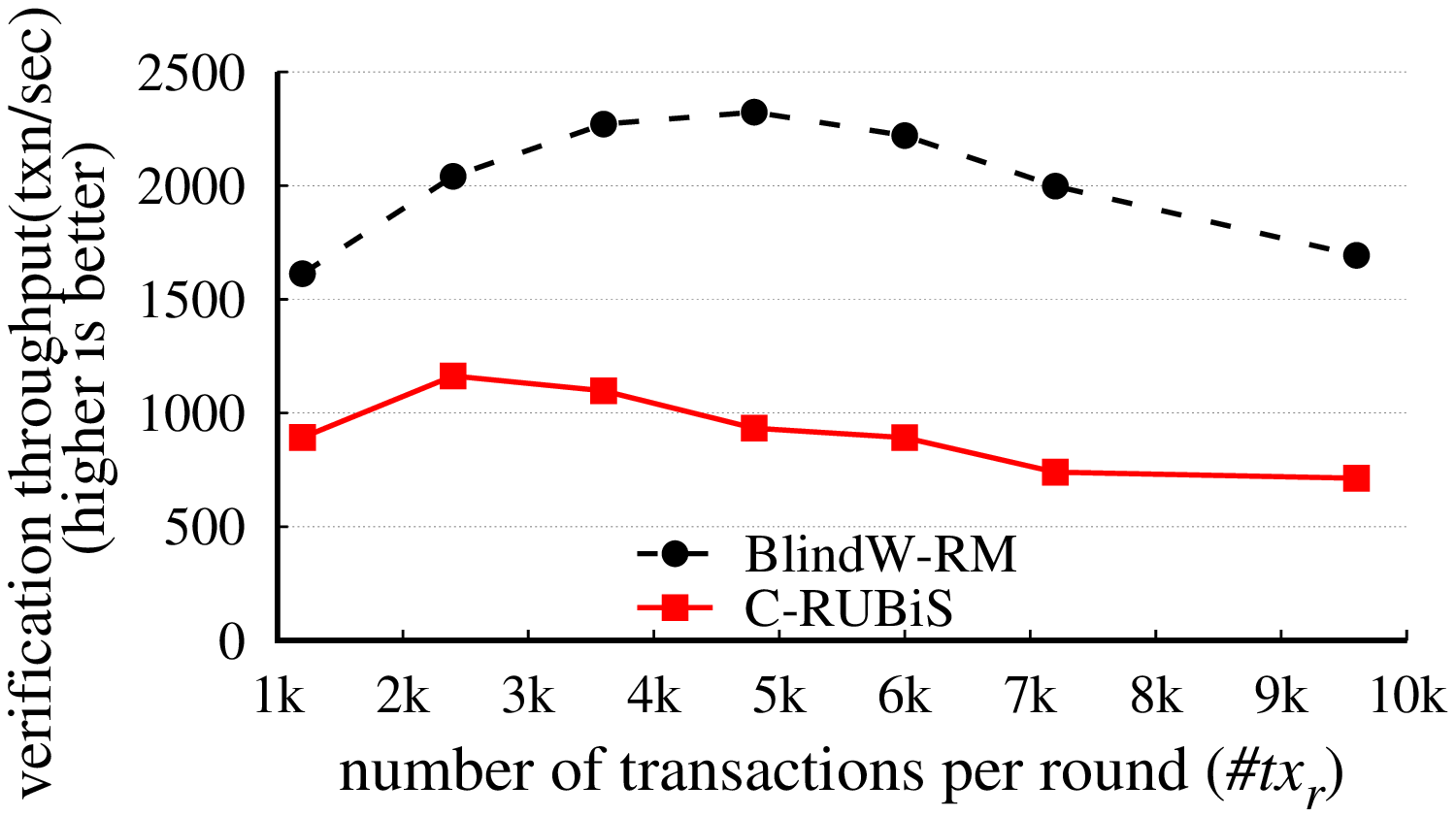}
    \caption{Verification throughput vs. round size ($\#tx_r$).
    The verification capacity for BlindW-RM (the dashed line) is 2.3k txn/sec
    when $\#tx_r$ is 5k; the capacity for C-RUBiS (the solid line) is 1.2k txn/sec
    when $\#tx_r$ is 2.5k.
    }
    \label{fig:scaling}
    \vspace{2ex}
\end{figure}

\begin{figure*}[t!]
  \centering
  \begin{subfigure}[b]{0.32\linewidth}
    \centering\includegraphics[width=\linewidth]{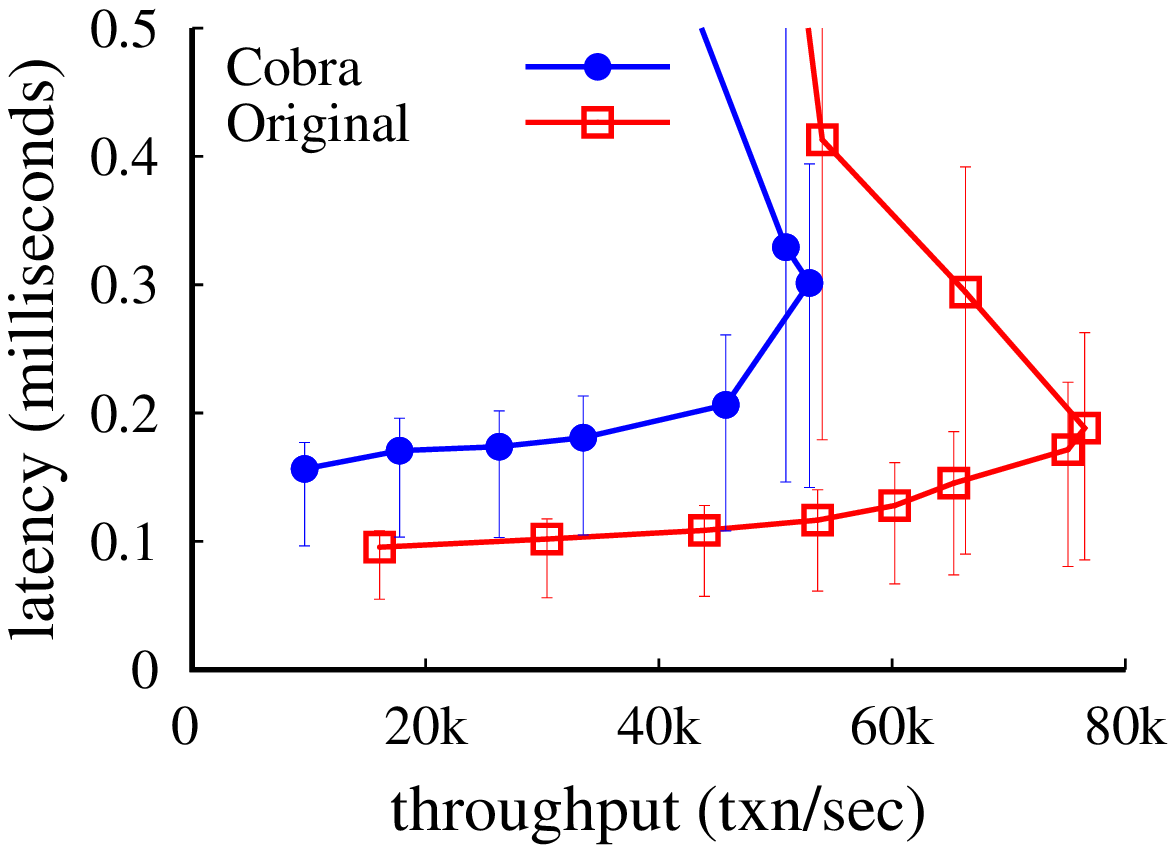}
    \caption{\label{fig:thpt-lat-rocksdb} RocksDB}
  \end{subfigure}%
  \begin{subfigure}[b]{0.32\linewidth}
    \centering\includegraphics[width=\linewidth]{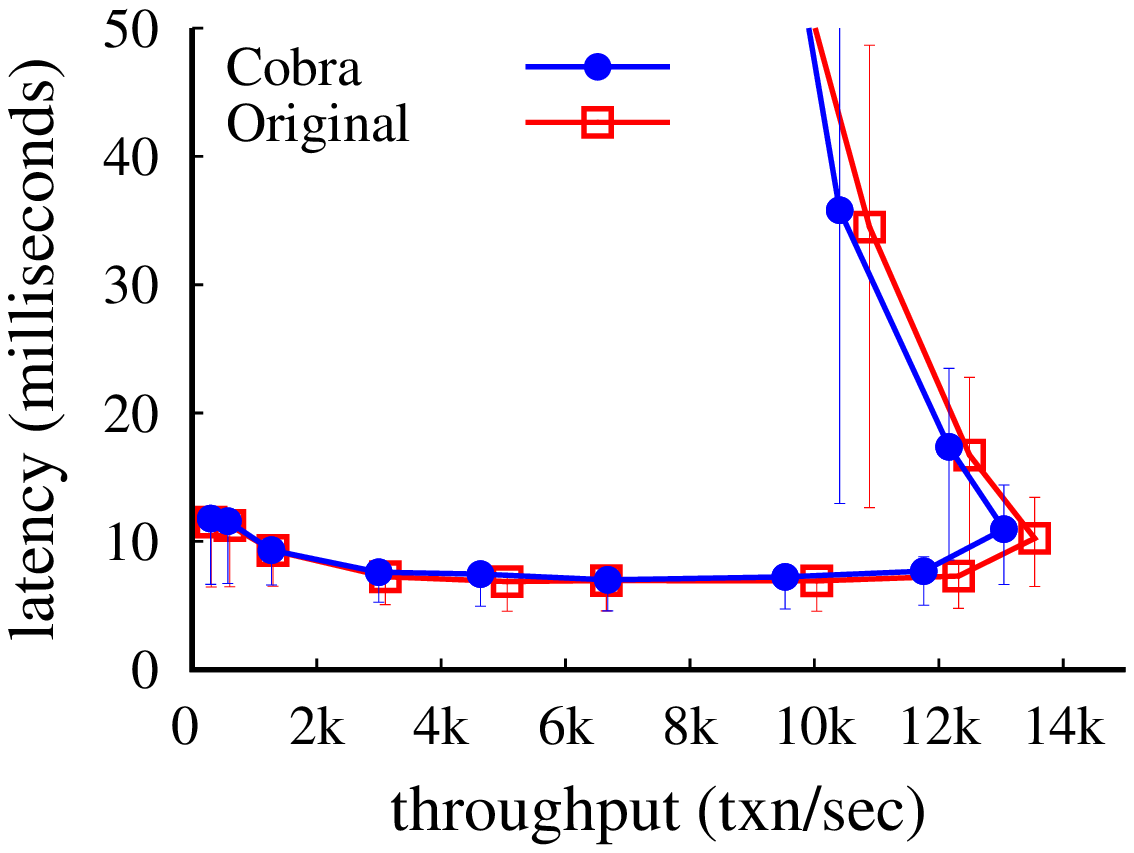}
    \caption{\label{fig:thpt-lat-postgres} PostgreSQL}
  \end{subfigure}
  \begin{subfigure}[b]{0.32\linewidth}
    \centering\includegraphics[width=\linewidth]{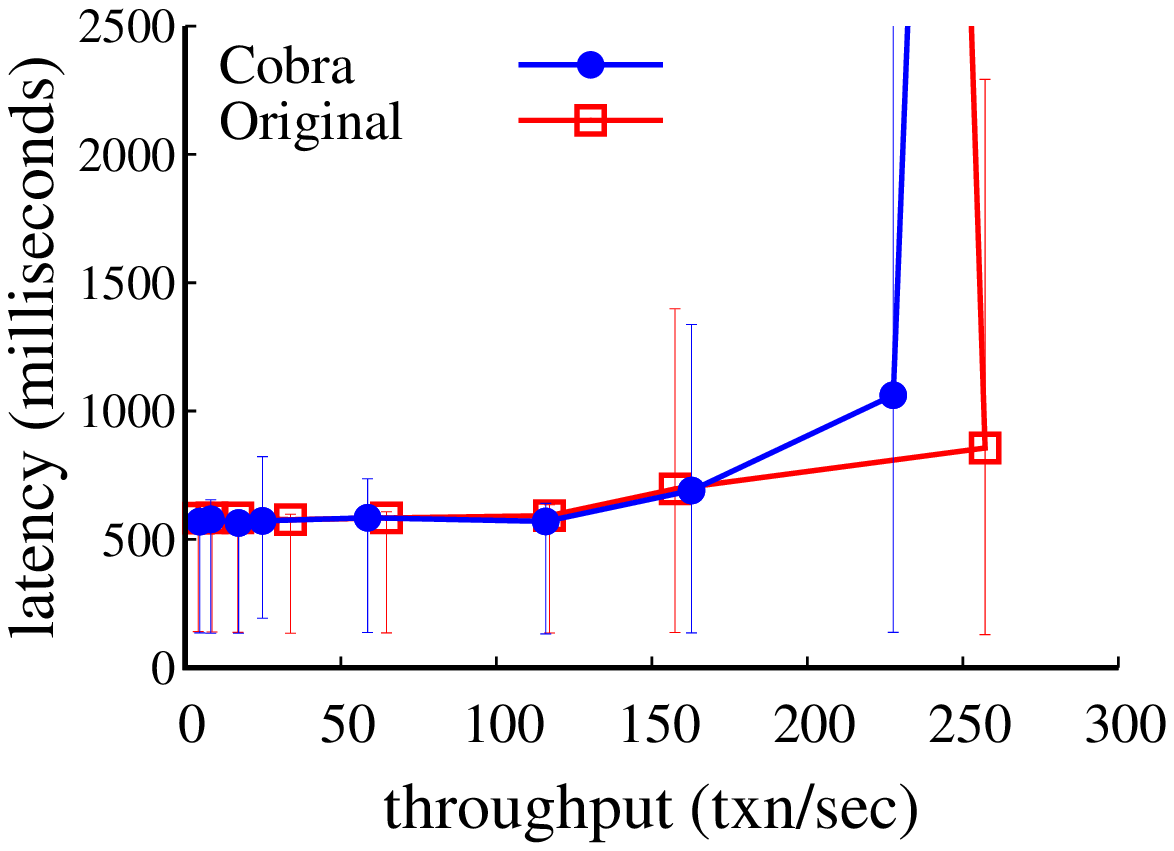}
    \caption{\label{fig:thpt-lat-google} Google Datastore}
  \end{subfigure}
  \caption{Throughput and latency, for C-Twitter benchmark.
On the left is the in-process setup;
90th percentile latency increases 64\%, with 31\% throughput penalty,
an artifact of \trace collection (disk bandwidth contention between
clients and the DB).
In the middle is the local setup (PostgreSQL), where \sys imposes minor overhead.
Finally, on the right is the cloud setup; 
there is an artifact here too: the
throughput penalty reflects a ceiling imposed by the cloud service
for a fresh DB instance.}
\label{fig:thpt-lat}
\vspace{4ex}
\end{figure*}
 
Figure~\ref{fig:scaling} depicts the results.
When $\#tx_r$ is smaller, \sys wastes cycles on redundant verification;
when $\#tx_r$ is larger, \sys suffers from a problem size that is too
large (recall that verification time increases superlinearly;
\S\ref{subsec:oneshot}).
For different workloads, the optimal
choices of $\#tx_r$ are different.

In workload BlindW-RW, \sys runs out of GPU memory. The reason is
that due to many blind writes in this workload, \sys
is unable to garbage collect enough transactions and
fit the remaining history into the GPU memory.
Our future work is to investigate this case and design a more efficient (in
terms of deleting more transactions) algorithm.

\subsection{\sys online overheads}

The baseline in this section is the legacy system;
that is, clients use the unmodified database library (for example, JDBC),
with no recording of history.

\heading{Throughput latency analysis.}
We evaluate \sys's client-side throughput and latency in the three 
setups, tuning the number of clients (up to 256) to saturate the databases.
Figure~\ref{fig:thpt-lat} depicts the results.

\newcommand{\numopchengorig}{99k\xspace}
\newcommand{\numopchengcobra}{111k\xspace}
\newcommand{\numopoverheadcheng}{11.94\%\xspace}
\newcommand{\numoptpccorig}{530k\xspace}
\newcommand{\numoptpcccobra}{568k\xspace}
\newcommand{\numopoverheadtpcc}{7.25\%\xspace}
\newcommand{\numoptwitterorig}{70k\xspace}
\newcommand{\numoptwittercobra}{82k\xspace}
\newcommand{\numopoverheadtwitter}{17.13\%\xspace}
\newcommand{\numopycsborig}{35k\xspace}
\newcommand{\numopycsbcobra}{46k\xspace}
\newcommand{\numopoverheadycsb}{31.41\%\xspace}
\newcommand{\numoprubisorig}{69k\xspace}
\newcommand{\numoprubiscobra}{82k\xspace}
\newcommand{\numopoverheadrubis}{17.32\%\xspace}

\newcommand{\networkchengcobra}{3352.0\xspace}
\newcommand{\networkchengorig}{3124.6\xspace}
\newcommand{\networkoverheadcheng}{227.4\xspace}
\newcommand{\networkpercentcheng}{7.28\%\xspace}
\newcommand{\networkycsbcobra}{1126.4\xspace}
\newcommand{\networkycsborig}{1050.9\xspace}
\newcommand{\networkoverheadycsb}{75.4\xspace}
\newcommand{\networkpercentycsb}{7.18\%\xspace}
\newcommand{\networktwittercobra}{6865.8\xspace}
\newcommand{\networktwitterorig}{6572.9\xspace}
\newcommand{\networkoverheadtwitter}{292.9\xspace}
\newcommand{\networkpercenttwitter}{4.46\%\xspace}
\newcommand{\networkrubiscobra}{2481.9\xspace}
\newcommand{\networkrubisorig}{2374.4\xspace}
\newcommand{\networkoverheadrubis}{107.5\xspace}
\newcommand{\networkpercentrubis}{4.53\%\xspace}
\newcommand{\networktpcccobra}{3688.4\xspace}
\newcommand{\networktpccorig}{3610.2\xspace}
\newcommand{\networkoverheadtpcc}{78.2\xspace}
\newcommand{\networkpercenttpcc}{2.17\%\xspace}

\newcommand{\tracesizechengcobra}{245.5\xspace}
\newcommand{\tracesizeycsbcobra}{60.2\xspace}
\newcommand{\tracesizetwittercobra}{200.7\xspace}
\newcommand{\tracesizerubiscobra}{148.9\xspace}
\newcommand{\tracesizetpcccobra}{1380.8\xspace}

{\small

\begin{figure}[]
\begin{tabular}{lccc}
\toprule
\multirow{2}{*}{workload} & \multicolumn{2}{c}{network overhead} & \trace \\ \cline{2-3}
 & traffic & percentage & size \\ \midrule
BWrite-RW & \networkoverheadcheng KB & \networkpercentcheng & \tracesizechengcobra KB \\
C-Twitter & \networkoverheadtwitter KB & \networkpercenttwitter & \tracesizetwittercobra KB \\
C-RUBiS & \networkoverheadrubis KB & \networkpercentrubis & \tracesizerubiscobra KB \\
TPC-C & \networkoverheadtpcc KB & \networkpercenttpcc & \tracesizetpcccobra KB \\ \bottomrule
\end{tabular}
\caption{Network and storage overheads per one thousand transactions. The network overheads
comes from fence transactions and the metadata (transaction ids and write ids)
added by \sys's client library.
}
\label{fig:cost}
\end{figure}

}
 
\vspace{1ex}
\heading{Network cost and \trace size.}
We evaluate the network traffic on the client side by tracking the number of bytes
  sent over the NIC.
We measure the \trace size by summing sizes of the \trace files.
Figure~\ref{fig:cost} summarizes.

\section{Related work}
\label{s:relwork}

Below, we cover many works that wish to verify or enforce the
correctness of storage, some with very similar motivations to ours. As
stated earlier~(\S\ref{s:intro}), our problem statement is differentiated by combining requirements:
(a)~a black box database, (b)~performance and concurrency approximating that
of a \sys-less system, and (c)~checking view-serializability.

\heading{Isolation testing and Consistency testing.}
Serializability is a particular 
\emph{isolation} level in a transactional system---the I in ACID transactions.
Because checking view-serializability is
  NP-complete~\cite{papadimitriou79serializability}, to the best of our
  knowledge, all works testing serializability prior to \sys
  are checking conflict-serializability where the write-write ordering is known. 
Sinha et al.~\cite{sinha10runtime} record the ordering of operations in a modified
  software transactional memory library to reduce the search space in
  checking serializability;
this work uses the polygraph data
structure~(\S\ref{s:bruteforce}).
The idea of recording order to help test serializability
  has also been used in detecting data races in multi-threaded
  programs~\cite{xu05serializability, hammer08dynamic,sumner11marathon}.

In shared memory systems and systems that offer replication (but do not
necessarily support transactions), there is an analogous correctness
contract,
namely \emph{consistency}. (Confusingly,
the ``C(onsistency)'' in ACID transactions refers to something else,
namely %
semantic invariants~\cite{bailis14linearizability}.) Example consistency models
are linearizability~\cite{herlihy90linearizability}, sequential
consistency~\cite{lamport79how}, and eventual
consistency~\cite{petersen97flexible}.

Testing for these consistency models is an analogous problem to ours.
In both cases, one searches for a schedule
  that fits the ordering constraints of both the model and
  the history~\cite{golab11analyzing}. 
As in checking serializability, the computational complexity of checking
  consistency decreases if a stronger model is targeted (for example, linearizability vs.
  sequential consistency)~\cite{gibbons97testing}, or
  if more ordering information can be (intrusively)
  acquired (by opening black boxes)~\cite{wing93testing}.

Concerto~\cite{arasu17concerto} uses \emph{deferred
  verification}, allowing it to exploit
  an offline memory checking algorithm~\cite{blum94checking} to 
  check online the sequential consistency of a highly concurrent
  key-value store.
Concerto's design achieves orders-of-magnitude performance
  improvement compared to Merkle tree-based
  approaches~\cite{merkle87digital,blum94checking},
  but it also requires modifications of the database. 
(See elsewhere~\cite{li06dynamic,dwork09efficient} for
related algorithms.)

A body of work examines cloud storage
  consistency~\cite{anderson10what,aiyer08consistability,
    lu15existential,liu14consistency}.
These works rely on extra ordering information
  obtained through techniques like loosely- or well-synchronized
  clocks~\cite{anderson10what,golab11analyzing,aiyer08consistability,
    lu15existential,kim15caelus}, or client-to-client
  communication~\cite{shraer10venus,liu14consistency}. 
As another example, a gateway that sequences the requests
  can ensure consistency by enforcing
  ordering~\cite{shraer10venus,popa11enabling,sinha18veritasdb,jain13trustworthy}. 
Some of \sys's techniques are reminiscent of these works, such as
  its use of precedence graphs~\cite{anderson10what,golab11analyzing}.
However, a substantial difference is that \sys neither
modifies the ``memory'' (the database) to get information about the
actual internal schedule nor depends on external synchronization.
\sys of course exploits epochs for safe deletion~(\S\ref{s:gc}),
but this is a performance optimization, not core to the
verification task, and invokes
standard database interfaces.

\heading{Execution integrity.}
Our problem relates to the broad category of \emph{execution
integrity}---ensuring that a module in another administrative domain is
executing as expected. For example,
Orochi~\cite{tan17efficient} is an end-to-end audit that gives a
verifier assurance that a given web application, including its database,
is executing according to the code it is allegedly running.
Orochi operates in a setting reminiscent of the one
  that we consider in this paper, in which there are collectors and an
  untrusted cloud service.
Verena~\cite{karapanos16verena} operates in a similar model (but
makes fewer assumptions, in that its hash server and
application server are mutually distrustful); Verena uses authenticated
data structures and a careful placement of function to guarantee to the
\emph{deployer} of a given web service, backed by a database,
that the delivered web pages are correct.
Orochi and Verena require that the database is \emph{strictly} serializable, they 
provide end-to-end verification of a full stack, but they cannot treat
that stack as a black box.
\Sys is the other way around: it of course tolerates (non-strict)
serializability, its verification purview is limited to the database,
but it treats the database as a black box.

Other examples of execution integrity include
AVM~\cite{haeberlen10accountable} and Ripley~\cite{vikram09ripley},
which involve checking an untrusted module by re-executing the inputs to
it. These systems likewise are ``full stack'' but ``not black box.''
 
Another approach is to use trusted components.
For example, Byzantine fault tolerant (BFT)
  replication~\cite{castro99practical} (where the assumption is that a
  super-majority is not faulty)
and TEEs (trusted execution environments, comprising 
TPM-based
  systems~\cite{sailer04design,seshadri05pioneer,mccune08flicker,
    mccune10trustvisor,parno11bootstrapping,sirer11logical,chen13towards,
    hawblitzel14ironclad} and SGX-based systems~\cite{baumann14shielding,
    schuster2015vc3,arnautov16scone,hunt16ryoan,
    shinde17panoply,aublin18libseal,krahn18pesos,sinha18veritasdb}) ensure
that the right code is running.
However, this does not ensure that the code itself is right; concretely, if a
database violates serializability owing to an implementation bug, neither BFT 
nor SGX hardware helps.

There is also a class of systems that uses complexity-theoretic and
cryptographic
mechanisms~\cite{zhang15integridb,zhang17vsql,braun13verifying,setty18proving}.
None of these works handle systems of realistic scale, and only one of
them~\cite{setty18proving} handles concurrent workloads.  An exception
is Obladi~\cite{crooks18obladi}, which remarkably provides ACID
transactions atop an ORAM abstraction by exploiting a trusted proxy that
carefully manages the interplay between concurrency control and the ORAM
protocol; its performance is surprisingly good (as cryptographic-based
systems go) but still pays 1-2 orders of magnitude overhead in
throughput and latency.

\heading{SMT solver on detecting serializability violations.}
Several works~\cite{nagar18automated,brutschy18static,brutschy2017serializability}
propose using SMT solvers to detect serializability violations under weak consistency.
The underlying problem is different: they focus on encoding
\emph{static programs} and the \emph{correctness criterion} of weak consistency;
while \sys focuses on how to encode \emph{histories} more efficiently.
The most relevant work~\cite{sinha11predicting} is to use SMT solvers to
permutate all possible interleaving for a concurrent program and search for
serializablility violations.
Despite of different setups,
our baseline implementation on Z3 (\S\ref{subsec:oneshot})
has similar encoding and similar performance
(verifying hundreds of transactions in tens of seconds).

\heading{Correctness testing for distributed systems.}
There is a line of research on testing the correctness of distributed systems
under various failures, including network partition~\cite{alquraan2018analysis},
power failures~\cite{zheng2014torturing},
and storage faults~\cite{ganesan2017redundancy}.
In particular, Jepsen~\cite{jepsenweb} is a black-box testing framework
(also an analysis service)
that has successfully detected massive amount of correctness bugs
in some production distributed systems.
\sys is complementary to Jepsen, providing the ability to check serializability
of black-box databases.

\heading{Definitions and interpretations of isolation levels.}
\sys of course uses precedence graphs, which are
a common tool for reasoning about isolation
  levels~\cite{papadimitriou79serializability,bernstein79formal,adya99weak}. 
However,
isolation levels can be interpreted via other means such as excluding
  anomalies~\cite{berenson95critique} and client-centric
  observations~\cite{crooks17seeing};
 it remains an open and
  intriguing question whether the other definitions would yield a more
  intuitive and more easily-implemented encoding and algorithm than the
  one in \sys.

\section*{Acknowledgments}

Sebastian Angel, Miguel Castro, Byron Cook, Andreas Haeberlen,
Dennis Shasha, Ioanna Tzialla, Thomas Wies, and Lingfan Yu
made helpful comments and gave useful pointers. This work was supported
by NSF grants CNS-1423249 and CNS-1514422, ONR grant N00014-16-1-2154,
and AFOSR grants FA9550-15-1-0302 and FA9550-18-1-0421.

\frenchspacing

\begin{flushleft}
\footnotesize
\setlength{\parskip}{0pt}
\setlength{\itemsep}{0pt}
\bibliographystyle{abbrv}
\bibliography{conferences-long-with-abbr,bibs}

\begin{thebibliography}{100}

\bibitem{yuga-disw}
Acknowledged inserts can be present in reads for tens of seconds, then
  disappear.
\newblock \url{https://github.com/YugaByte/yugabyte-db/issues/824}.

\bibitem{amazoneaurora}
{Amazon Aurora}.
\newblock \url{ https://aws.amazon.com/rds/aurora/}.

\bibitem{azurecosmos}
{Azure Cosmos DB}.
\newblock \url{https://azure.microsoft.com/en-us/services/cosmos-db/}.

\bibitem{twitter}
Big data in real time at {Twitter}.
\newblock
  \url{https://www.infoq.com/presentations/Big-Data-in-Real-Time-at-Twitter}.

\bibitem{cockroachdb}
{CockroachDB: Distributed SQL}.
\newblock \url{https://www.cockroachlabs.com}.

\bibitem{cublas}
{cuBLAS: Dense Linear Algebra on GPUs}.
\newblock \url{https://developer.nvidia.com/cublas}.

\bibitem{cusparse}
{cuSPARSE: Sparse Linear Algebra on GPUs}.
\newblock \url{https://developer.nvidia.com/cusparse}.

\bibitem{jepsenweb}
Distributed system safety research.
\newblock \url{https://jepsen.io/}.

\bibitem{faunadb}
{FaunaDB}.
\newblock \url{https://fauna.com}.

\bibitem{foundationdb}
{FoundationDB}.
\newblock \url{https://www.foundationdb.org}.

\bibitem{cock-g2}
G2: anti-dependency cycles.
\newblock \url{https://github.com/cockroachdb/cockroach/issues/10030}.

\bibitem{yuga-g2}
G2-item anomaly with master kills.
\newblock \url{https://github.com/YugaByte/yugabyte-db/issues/2125}.

\bibitem{googledatastore}
{Google Cloud Datastore}.
\newblock \url{https://cloud.google.com/datastore/}.

\bibitem{googlespanner}
{Google Cloud Spanner}.
\newblock \url{https://cloud.google.com/spanner/}.

\bibitem{story1}
{How Halo 5 implemented social gameplay using Azure Cosmos DB}.
\newblock
  \url{https://azure.microsoft.com/en-us/blog/how-halo-5-guardians-implemented-social-gameplay-using-azure-documentdb/}.

\bibitem{fauna-page}
Jepsen: Faunadb 2.5.4.
\newblock \url{http://jepsen.io/analyses/faunadb-2.5.4}.

\bibitem{cock-blog}
Lessons learned from 2+ years of nightly jepsen tests.
\newblock \url{https://www.cockroachlabs.com/blog/jepsen-tests-lessons/}.

\bibitem{story2}
Norwegian electronics giant scales for sales, sets record with cloud-based
  transaction processing.
\newblock
  \url{https://customers.microsoft.com/en-us/story/elkjop-retailers-azure}.

\bibitem{postgresql}
{PostgreSQL}.
\newblock \url{https://www.postgresql.org/}.

\bibitem{rocksdb}
{RocksDB}.
\newblock \url{https://rocksdb.org/}.

\bibitem{rubis}
{RUBiS}.
\newblock \url{https://rubis.ow2.org/}.

\bibitem{tpcc}
{TPC-C}.
\newblock \url{http://www.tpc.org/tpcc/}.

\bibitem{yices}
The {Yices SMT} solver.
\newblock \url{http://yices.csl.sri.com/}.

\bibitem{yugabytedb}
{YugaByte DB: Home}.
\newblock \url{https://www.yugabyte.com}.

\bibitem{adya99weak}
A.~Adya.
\newblock {\em Weak consistency: a generalized theory and optimistic
  implementations for distributed transactions}.
\newblock PhD thesis, Massachusetts Institute of Technology, 1999.

\bibitem{aiyer08consistability}
A.~S. Aiyer, E.~Anderson, X.~Li, M.~A. Shah, and J.~J. Wylie.
\newblock Consistability: Describing usually consistent systems.
\newblock In {\em Proc. HotDep}, Dec. 2008.

\bibitem{alquraan2018analysis}
A.~Alquraan, H.~Takruri, M.~Alfatafta, and S.~Al-Kiswany.
\newblock An analysis of network-partitioning failures in cloud systems.
\newblock In {\em Proc. OSDI}, Oct. 2018.

\bibitem{amza02specification}
C.~Amza, E.~Cecchet, A.~Chanda, A.~L. Cox, S.~Elnikety, R.~Gil, J.~Marguerite,
  K.~Rajamani, and W.~Zwaenepoel.
\newblock Specification and implementation of dynamic web site benchmarks.
\newblock In {\em Proc. IEEE WWC}, Nov. 2002.

\bibitem{anderson10what}
E.~Anderson, X.~Li, M.~A. Shah, J.~Tucek, and J.~J. Wylie.
\newblock What consistency does your key-value store \emph{actually} provide?
\newblock In {\em Proc. HotDep}, Oct. 2010.
\newblock Full version: Technical Report HPL-2010-98, Hewlett-Packard
  Laboratories, 2010.

\bibitem{arasu17concerto}
A.~Arasu, K.~Eguro, R.~Kaushik, D.~Kossmann, P.~Meng, V.~Pandey, and
  R.~Ramamurthy.
\newblock Concerto: a high concurrency key-value store with integrity.
\newblock In {\em Proc. SIGMOD}, May 2017.

\bibitem{arnautov16scone}
S.~Arnautov, B.~Trach, F.~Gregor, T.~Knauth, A.~Martin, C.~Priebe, J.~Lind,
  D.~Muthukumaran, D.~O'keeffe, M.~L. Stillwell, et~al.
\newblock {SCONE}: Secure {Linux} containers with {Intel SGX}.
\newblock In {\em Proc. OSDI}, Oct. 2016.

\bibitem{aublin18libseal}
P.-L. Aublin, F.~Kelbert, D.~O'Keeffe, D.~Muthukumaran, C.~Priebe, J.~Lind,
  R.~Krahn, C.~Fetzer, D.~Eyers, and P.~Pietzuch.
\newblock {LibSEAL}: Revealing service integrity violations using trusted
  execution.
\newblock In {\em Proc. EuroSys}, Apr. 2018.

\bibitem{bailis14linearizability}
P.~Bailis.
\newblock Linearizability versus serializability.
\newblock
  \url{http://www.bailis.org/blog/linearizability-versus-serializability/},
  Sept. 2014.

\bibitem{bailis14highly}
P.~Bailis, A.~Davidson, A.~Fekete, A.~Ghodsi, J.~M. Hellerstein, and I.~Stoica.
\newblock Highly available transactions: virtues and limitations.
\newblock {\em PVLDB}, Sept. 2014.

\bibitem{balyo17sat}
T.~Balyo, M.~J. Heule, and M.~Jarvisalo.
\newblock {SAT} competition 2016: Recent developments.
\newblock In {\em Proc. AAAI}, Feb. 2017.

\bibitem{barrett11cvc4}
C.~Barrett, C.~L. Conway, M.~Deters, L.~Hadarean, D.~Jovanovi{'{c}}, T.~King,
  A.~Reynolds, and C.~Tinelli.
\newblock {CVC4}.
\newblock In {\em Proc. CAV}, July 2011.

\bibitem{baumann14shielding}
A.~Baumann, M.~Peinado, and G.~Hunt.
\newblock Shielding applications from an untrusted cloud with {Haven}.
\newblock In {\em Proc. OSDI}, Oct. 2014.

\bibitem{bayless15sat}
S.~Bayless, N.~Bayless, H.~H. Hoos, and A.~J. Hu.
\newblock {SAT} modulo monotonic theories.
\newblock In {\em Proc. AAAI}, Jan. 2015.

\bibitem{berenson95critique}
H.~Berenson, P.~Bernstein, J.~Gray, J.~Melton, E.~O'Neil, and P.~O'Neil.
\newblock A critique of {ANSI SQL} isolation levels.
\newblock In {\em Proc. SIGMOD}, May 1995.

\bibitem{bernstein87concurrency}
P.~A. Bernstein, V.~Hadzilacos, and N.~Goodman.
\newblock {\em Concurrency control and recovery in database systems}.
\newblock Addison-Wesley Longman Publishing Co., Inc., 1987.

\bibitem{bernstein79formal}
P.~A. Bernstein, D.~W. Shipman, and W.~S. Wong.
\newblock Formal aspects of serializability in database concurrency control.
\newblock {\em TSE}, SE-5(3), May 1979.

\bibitem{biere09handbook}
A.~Biere, M.~Heule, and H.~van Maaren.
\newblock {\em Handbook of satisfiability}, volume 185.
\newblock IOS press, 2009.

\bibitem{blum94checking}
M.~Blum, W.~Evans, P.~Gemmell, S.~Kannan, and M.~Naor.
\newblock Checking the correctness of memories.
\newblock {\em Algorithmica}, 12(2-3), Sept. 1994.

\bibitem{braun13verifying}
B.~Braun, A.~J. Feldman, Z.~Ren, S.~Setty, A.~J. Blumberg, and M.~Walfish.
\newblock Verifying computations with state.
\newblock In {\em Proc. SOSP}, Nov. 2013.

\bibitem{brutschy17serializability}
L.~Brutschy, D.~Dimitrov, P.~M\"{u}ller, and M.~Vechev.
\newblock Serializability for eventual consistency: criterion, analysis, and
  applications.
\newblock In {\em Proc. POPL}, Jan. 2017.

\bibitem{brutschy2017serializability}
L.~Brutschy, D.~Dimitrov, P.~M{\"u}ller, and M.~Vechev.
\newblock Serializability for eventual consistency: criterion, analysis, and
  applications.
\newblock In {\em Proc. POPL}, Jan. 2017.

\bibitem{brutschy18static}
L.~Brutschy, D.~Dimitrov, P.~M{\"u}ller, and M.~Vechev.
\newblock Static serializability analysis for causal consistency.
\newblock In {\em Proc. PLDI}, 2018.

\bibitem{bruttomesso08mathsat}
R.~Bruttomesso, A.~Cimatti, A.~Franz{\'e}n, A.~Griggio, and R.~Sebastiani.
\newblock The {MathSAT} 4 {SMT} solver.
\newblock In {\em Proc. CAV}, July 2008.

\bibitem{castro99practical}
M.~Castro and B.~Liskov.
\newblock Practical byzantine fault tolerance.
\newblock In {\em Proc. OSDI}, Feb. 1999.

\bibitem{chen13towards}
C.~Chen, P.~Maniatis, A.~Perrig, A.~Vasudevan, and V.~Sekar.
\newblock Towards verifiable resource accounting for outsourced computation.
\newblock In {\em Proc. VEE}, Mar. 2013.

\bibitem{corbett13spanner}
J.~C. Corbett, J.~Dean, M.~Epstein, A.~Fikes, C.~Frost, J.~J. Furman,
  S.~Ghemawat, A.~Gubarev, C.~Heiser, P.~Hochschild, et~al.
\newblock Spanner: Google's globally distributed database.
\newblock {\em TOCS}, 31(3), June 2013.

\bibitem{clrs}
T.~H. Cormen, C.~E. Leiserson, R.~L. Rivest, and C.~Stein.
\newblock {\em Introduction to Algorithms, third edition}.
\newblock The {MIT} Press, 2009.

\bibitem{crooks18obladi}
N.~Crooks, M.~Burke, E.~Cecchetti, S.~Harel, L.~Alvisi, and R.~Agarwal.
\newblock Obladi: Oblivious serializable transactions in the cloud.
\newblock In {\em Proc. OSDI}, Oct. 2018.

\bibitem{crooks17seeing}
N.~Crooks, Y.~Pu, L.~Alvisi, and A.~Clement.
\newblock Seeing is believing: a client-centric specification of database
  isolation.
\newblock In {\em Proc. PODC}, July 2017.

\bibitem{de08z3}
L.~De~Moura and N.~Bj{\o}rner.
\newblock {Z3}: An efficient {SMT} solver.
\newblock In {\em Proc. TACAS}, Mar. 2008.

\bibitem{dong17optimizing}
S.~Dong, M.~Callaghan, L.~Galanis, D.~Borthakur, T.~Savor, and M.~Strum.
\newblock Optimizing space amplification in {RocksDB}.
\newblock In {\em Proc. CIDR}, Jan. 2017.

\bibitem{dwork09efficient}
C.~Dwork, M.~Naor, G.~N. Rothblum, and V.~Vaikuntanathan.
\newblock How efficient can memory checking be?
\newblock In {\em Proc. TCC}, Mar. 2009.

\bibitem{farzan08monitoring}
A.~Farzan and P.~Madhusudan.
\newblock Monitoring atomicity in concurrent programs.
\newblock In {\em Proc. CAV}, July 2008.

\bibitem{fromherz19verified}
A.~Fromherz, N.~Giannarakis, C.~Hawblitzel, B.~Parno, A.~Rastogi, and N.~Swamy.
\newblock A verified, efficient embedding of a verifiable assembly language.
\newblock In {\em Proc. POPL}, Jan. 2019.

\bibitem{ganesan2017redundancy}
A.~Ganesan, R.~Alagappan, A.~C. Arpaci-Dusseau, and R.~H. Arpaci-Dusseau.
\newblock Redundancy does not imply fault tolerance: Analysis of distributed
  storage reactions to single errors and corruptions.
\newblock In {\em Proc. FAST}, Feb. 2017.

\bibitem{gebser14answer}
M.~Gebser, T.~Janhunen, and J.~Rintanen.
\newblock Answer set programming as {SAT} modulo acyclicity.
\newblock In {\em Proc. ECAI}, 2014.

\bibitem{gebser14sat}
M.~Gebser, T.~Janhunen, and J.~Rintanen.
\newblock {SAT} modulo graphs: acyclicity.
\newblock In {\em Proc. JELIA}, 2014.

\bibitem{gibbons97testing}
P.~B. Gibbons and E.~Korach.
\newblock Testing shared memories.
\newblock {\em SIJC}, 26(4), Aug. 1997.

\bibitem{golab11analyzing}
W.~Golab, X.~Li, and M.~Shah.
\newblock Analyzing consistency properties for fun and profit.
\newblock In {\em Proc. PODC}, June 2011.

\bibitem{goldberg07berkmin}
E.~Goldberg and Y.~Novikov.
\newblock {BerkMin}: A fast and robust {SAT-solver}.
\newblock In {\em Proc. DATE}, Mar. 2002.

\bibitem{hadzilacos89deleting}
T.~Hadzilacos and N.~Yannakakis.
\newblock Deleting completed transactions.
\newblock {\em Journal of Computer and System Sciences}, 38(2), 1989.

\bibitem{haeberlen10accountable}
A.~Haeberlen, P.~Aditya, R.~Rodrigues, and P.~Druschel.
\newblock Accountable virtual machines.
\newblock In {\em Proc. OSDI}, Oct. 2010.

\bibitem{hammer08dynamic}
C.~Hammer, J.~Dolby, M.~Vaziri, and F.~Tip.
\newblock Dynamic detection of atomic-set-serializability violations.
\newblock In {\em Proc. ICSE}, May 2008.

\bibitem{hawblitzel15ironfleet}
C.~Hawblitzel, J.~Howell, M.~Kapritsos, J.~R. Lorch, B.~Parno, M.~L. Roberts,
  S.~Setty, and B.~Zill.
\newblock {IronFleet}: proving practical distributed systems correct.
\newblock In {\em Proc. SOSP}, Oct. 2015.

\bibitem{hawblitzel14ironclad}
C.~Hawblitzel, J.~Howell, J.~R. Lorch, A.~Narayan, B.~Parno, D.~Zhang, and
  B.~Zill.
\newblock Ironclad apps: end-to-end security via automated full-system
  verification.
\newblock In {\em Proc. OSDI}, Oct. 2014.

\bibitem{herlihy90linearizability}
M.~P. Herlihy and J.~M. Wing.
\newblock Linearizability: A correctness condition for concurrent objects.
\newblock {\em TOPLAS}, 12(3), July 1990.

\bibitem{hunt16ryoan}
T.~Hunt, Z.~Zhu, Y.~Xu, S.~Peter, and E.~Witchel.
\newblock Ryoan: a distributed sandbox for untrusted computation on secret
  data.
\newblock In {\em Proc. OSDI}, Oct. 2016.

\bibitem{jain13trustworthy}
R.~Jain and S.~Prabhakar.
\newblock Trustworthy data from untrusted databases.
\newblock In {\em Proc. ICDE}, Apr. 2013.

\bibitem{janota17quest}
M.~Janota, R.~Grigore, and V.~M. Manquinho.
\newblock On the quest for an acyclic graph.
\newblock {\em CoRR}, abs/1708.01745, Aug. 2017.

\bibitem{karapanos16verena}
N.~Karapanos, A.~Filios, R.~A. Popa, and S.~Capkun.
\newblock Verena: End-to-end integrity protection for {W}eb applications.
\newblock In {\em Proc. S\&P}, May 2016.

\bibitem{kim15caelus}
B.~H. Kim and D.~Lie.
\newblock Caelus: Verifying the consistency of cloud services with
  battery-powered devices.
\newblock In {\em Proc. S\&P}, May 2015.

\bibitem{krahn18pesos}
R.~Krahn, B.~Trach, A.~Vahldiek-Oberwagner, T.~Knauth, P.~Bhatotia, and
  C.~Fetzer.
\newblock Pesos: Policy enhanced secure object store.
\newblock In {\em Proc. EuroSys}, Apr. 2018.

\bibitem{kraska13mdcc}
T.~Kraska, G.~Pang, M.~J. Franklin, S.~Madden, and A.~Fekete.
\newblock {MDCC}: Multi-data center consistency.
\newblock In {\em Proc. EuroSys}, Apr. 2013.

\bibitem{lamport79how}
L.~Lamport.
\newblock How to make a multiprocessor computer that correctly executes
  multiprocess programs.
\newblock {\em TC}, C-28(9), Sept. 1979.

\bibitem{li06dynamic}
F.~Li, M.~Hadjieleftheriou, G.~Kollios, and L.~Reyzin.
\newblock Dynamic authenticated index structures for outsourced databases.
\newblock In {\em Proc. SIGMOD}, June 2006.

\bibitem{liang16exponential}
J.~H. Liang, V.~Ganesh, P.~Poupart, and K.~Czarnecki.
\newblock Exponential recency weighted average branching heuristic for {SAT}
  solvers.
\newblock In {\em Proc. AAAI}, Feb. 2016.

\bibitem{lim17cicada}
H.~Lim, M.~Kaminsky, and D.~G. Andersen.
\newblock Cicada: Dependably fast multi-core in-memory transactions.
\newblock In {\em Proc. SIGMOD}, May 2017.

\bibitem{liu14consistency}
Q.~Liu, G.~Wang, and J.~Wu.
\newblock Consistency as a service: Auditing cloud consistency.
\newblock {\em TNSM}, 11(1), Mar. 2014.

\bibitem{lu15existential}
H.~Lu, K.~Veeraraghavan, P.~Ajoux, J.~Hunt, Y.~J. Song, W.~Tobagus, S.~Kumar,
  and W.~Lloyd.
\newblock Existential consistency: measuring and understanding consistency at
  {Facebook}.
\newblock In {\em Proc. SOSP}, Oct. 2015.

\bibitem{mahmoud13low}
H.~Mahmoud, F.~Nawab, A.~Pucher, D.~Agrawal, and A.~El~Abbadi.
\newblock Low-latency multi-datacenter databases using replicated commit.
\newblock {\em PVLDB}, 6(9), July 2013.

\bibitem{mccune10trustvisor}
J.~M. McCune, Y.~Li, N.~Qu, Z.~Zhou, A.~Datta, V.~Gligor, and A.~Perrig.
\newblock {TrustVisor}: Efficient {TCB} reduction and attestation.
\newblock In {\em Proc. S\&P}, May 2010.

\bibitem{mccune08flicker}
J.~M. McCune, B.~J. Parno, A.~Perrig, M.~K. Reiter, and H.~Isozaki.
\newblock Flicker: An execution infrastructure for {TCB} minimization.
\newblock In {\em Proc. EuroSys}, Apr. 2008.

\bibitem{merkle87digital}
R.~C. Merkle.
\newblock A digital signature based on a conventional encryption function.
\newblock In {\em Proc. Crypto}, Aug. 1987.

\bibitem{moskewicz01chaff}
M.~W. Moskewicz, C.~F. Madigan, Y.~Zhao, L.~Zhang, and S.~Malik.
\newblock Chaff: Engineering an efficient {SAT} solver.
\newblock In {\em Proc. DAC}, June 2001.

\bibitem{nagar18automated}
K.~Nagar and S.~Jagannathan.
\newblock Automated detection of serializability violations under weak
  consistency.
\newblock {\em arXiv preprint arXiv:1806.08416}, 2018.

\bibitem{papadimitriou79serializability}
C.~H. Papadimitriou.
\newblock The serializability of concurrent database updates.
\newblock {\em JACM}, 26(4), Oct. 1979.

\bibitem{parno11bootstrapping}
B.~Parno, J.~M. McCune, and A.~Perrig.
\newblock {\em Bootstrapping trust in modern computers}.
\newblock Springer, 2011.

\bibitem{petersen97flexible}
K.~Petersen, M.~J. Spreitzer, D.~B. Terry, M.~M. Theimer, and A.~J. Demers.
\newblock Flexible update propagation for weakly consistent replication.
\newblock In {\em Proc. SOSP}, Oct. 1997.

\bibitem{popa11enabling}
R.~A. Popa, J.~R. Lorch, D.~Molnar, H.~J. Wang, and L.~Zhuang.
\newblock Enabling security in cloud storage {SLAs} with {CloudProof}.
\newblock In {\em Proc. USENIX ATC}, June 2011.

\bibitem{ports12serializable}
D.~R. Ports and K.~Grittner.
\newblock Serializable snapshot isolation in {PostgreSQL}.
\newblock {\em PVLDB}, 5(12), Aug. 2012.

\bibitem{sailer04design}
R.~Sailer, X.~Zhang, T.~Jaeger, and L.~Van~Doorn.
\newblock Design and implementation of a {TCG-based} integrity measurement
  architecture.
\newblock In {\em Proc. USENIX Security}, Aug. 2004.

\bibitem{schuster2015vc3}
F.~Schuster, M.~Costa, C.~Fournet, C.~Gkantsidis, M.~Peinado, G.~Mainar-Ruiz,
  and M.~Russinovich.
\newblock {VC3}: Trustworthy data analytics in the cloud using {SGX}.
\newblock In {\em Proc. S\&P}, May 2015.

\bibitem{seshadri05pioneer}
A.~Seshadri, M.~Luk, E.~Shi, A.~Perrig, L.~Van~Doorn, and P.~Khosla.
\newblock Pioneer: Verifying integrity and guaranteeing execution of code on
  legacy platforms.
\newblock In {\em Proc. SOSP}, Oct. 2005.

\bibitem{setty18proving}
S.~Setty, S.~Angel, T.~Gupta, and J.~Lee.
\newblock Proving the correct execution of concurrent services in
  zero-knowledge.
\newblock In {\em Proc. OSDI}, Oct. 2018.

\bibitem{shinde17panoply}
S.~Shinde, D.~Le~Tien, S.~Tople, and P.~Saxena.
\newblock Panoply: Low-{TCB Linux} applications with {SGX} enclaves.
\newblock In {\em Proc. NDSS}, Feb. 2017.

\bibitem{shraer10venus}
A.~Shraer, C.~Cachin, A.~Cidon, I.~Keidar, Y.~Michalevsky, and D.~Shaket.
\newblock Venus: Verification for untrusted cloud storage.
\newblock In {\em Proc. CCSW}, Oct. 2010.

\bibitem{sinha10runtime}
A.~Sinha and S.~Malik.
\newblock Runtime checking of serializability in software transactional memory.
\newblock In {\em Proc. IPDPS}, Apr. 2010.

\bibitem{sinha11predicting}
A.~Sinha, S.~Malik, C.~Wang, and A.~Gupta.
\newblock Predicting serializability violations: {SMT}-based search vs.
  {DPOR}-based search.
\newblock In {\em Haifa Verification Conference}, 2011.

\bibitem{sinha18veritasdb}
R.~Sinha and M.~Christodorescu.
\newblock {VeritasDB}: High throughput key-value store with integrity.
\newblock {\em IACR Cryptology ePrint Archive}, 2018.

\bibitem{sirer11logical}
E.~G. Sirer, W.~de~Bruijn, P.~Reynolds, A.~Shieh, K.~Walsh, D.~Williams, and
  F.~B. Schneider.
\newblock Logical attestation: an authorization architecture for trustworthy
  computing.
\newblock In {\em Proc. SOSP}, Oct. 2011.

\bibitem{soos09extending}
M.~Soos, K.~Nohl, and C.~Castelluccia.
\newblock Extending {SAT} solvers to cryptographic problems.
\newblock In {\em Proc. SAT}, June 2009.

\bibitem{stump02cvc}
A.~Stump, C.~W. Barrett, and D.~L. Dill.
\newblock {CVC}: a cooperating validity checker.
\newblock In {\em Proc. CAV}, July 2002.

\bibitem{sumner11marathon}
W.~N. Sumner, C.~Hammer, and J.~Dolby.
\newblock Marathon: Detecting atomic-set serializability violations with
  conflict graphs.
\newblock In {\em Proc. RV}, Sept. 2011.

\bibitem{tan17efficient}
C.~Tan, L.~Yu, J.~Leners, and M.~Walfish.
\newblock The efficient server audit problem, deduplicated re-execution, and
  the web.
\newblock In {\em Proc. SOSP}, Oct. 2017.

\bibitem{verbitski17amazon}
A.~Verbitski, A.~Gupta, D.~Saha, M.~Brahmadesam, K.~Gupta, R.~Mittal,
  S.~Krishnamurthy, S.~Maurice, T.~Kharatishvili, and X.~Bao.
\newblock {Amazon Aurora} : Design considerations for high throughput
  cloud-native relational databases.
\newblock In {\em Proc. SIGMOD}, May 2017.

\bibitem{vikram09ripley}
K.~Vikram, A.~Prateek, and B.~Livshits.
\newblock Ripley: automatically securing web 2.0 applications through
  replicated execution.
\newblock In {\em Proc. CCS}, Nov. 2009.

\bibitem{warszawski17acidrain}
T.~Warszawski and P.~Bailis.
\newblock {ACIDRain}: Concurrency-related attacks on database-backed web
  applications.
\newblock In {\em Proc. SIGMOD}, May 2017.

\bibitem{weikum01transactional}
G.~Weikum and G.~Vossen.
\newblock {\em Transactional information systems: theory, algorithms, and the
  practice of concurrency control and recovery}.
\newblock Elsevier, 2001.

\bibitem{wing93testing}
J.~M. Wing and C.~Gong.
\newblock Testing and verifying concurrent objects.
\newblock {\em JPDC}, 17(1-2), Jan. 1993.

\bibitem{xu05serializability}
M.~Xu, R.~Bod{\'\i}k, and M.~D. Hill.
\newblock A serializability violation detector for shared-memory server
  programs.
\newblock {\em SIGPLAN Notices}, 40(6), 2005.

\bibitem{yannakakis84serializability}
M.~Yannakakis.
\newblock Serializability by locking.
\newblock {\em JACM}, 2, Apr. 1984.

\bibitem{zellag14consistency}
K.~Zellag and B.~Kemme.
\newblock Consistency anomalies in multi-tier architectures: automatic
  detection and prevention.
\newblock {\em The VLDB Journal}, 23(1), Feb. 2014.

\bibitem{zhang17vsql}
Y.~Zhang, D.~Genkin, J.~Katz, D.~Papadopoulos, and C.~Papamanthou.
\newblock {vSQL}: Verifying arbitrary {SQL} queries over dynamic outsourced
  databases.
\newblock In {\em Proc. S\&P}, May 2017.

\bibitem{zhang15integridb}
Y.~Zhang, J.~Katz, and C.~Papamanthou.
\newblock {IntegriDB}: Verifiable {SQL} for outsourced databases.
\newblock In {\em Proc. CCS}, Oct. 2015.

\bibitem{zheng2014torturing}
M.~Zheng, J.~Tucek, D.~Huang, F.~Qin, M.~Lillibridge, E.~S. Yang, B.~W. Zhao,
  and S.~Singh.
\newblock Torturing databases for fun and profit.
\newblock In {\em Proc. OSDI}, Oct. 2014.

\end{thebibliography}
\end{flushleft}

\newpage
\appendix
\newpage
\section{The validity of \sys's encoding}
\label{sec:appxa}

Recall the crucial fact in Section~\ref{subsec:bruteforce}: an acyclic
precedence graph that is compatible with a \polyg constructed from a
\trace exists iff that \trace is
serializable~\cite{papadimitriou79serializability}. 
 In this section, we
establish the analogous statement for \sys's encoding.  We do this by
following the contours of Papadimitriou's proof of the
baseline statement~\cite{papadimitriou79serializability}. However \sys's
algorithm requires that we attend to additional details, complicating
the argument somewhat.

\subsection{Definitions and preliminaries}
\label{subsec:defandpreliminaries}

In this section, we define the terms used in our main argument (\S\ref{subsec:mainproof}):
\emph{history}, \emph{schedule}, \textit{\sys polygraph}, and \emph{\chains}.

\heading{History and schedule.}
The description of histories and schedules below restates what is in
section~\ref{s:searchprelims}.

A \textit{history} is a set of read and write operations, each of which 
belongs to a transaction.\footnote{The term ``history''~\cite{papadimitriou79serializability} was originally
defined on a fork-join parallel program schema. We have adjusted the definition
to fit our setup (\S\ref{s:setup}).}
Each write operation in the history has a key and a value as its arguments;
  each read operation has a key as argument, and a value as its result.
The result of a read operation is the same as the value argument of a particular
  write operation;
  we say that this read operation \textit{reads from} this write operation.
We assume each value is unique and can be associated to the corresponding write;
  in practice, this is guaranteed by \sys's client library described in
  Section~\ref{s:impl}.
We also say that a transaction $\vtx_i$ \textit{reads (a key $k$) from} another transaction $\vtx_j$
if: $\vtx_i$ contains a read $\rop$, $\rop$ reads from write $\wop$ on $k$, and 
$\vtx_j$ contains the write $\wop$.

A \textit{schedule} is a total order of all operations in a history.
A \textit{serial schedule}
means that the schedule does not have overlapping transactions.
A history \textit{matches} a schedule if: they have the same operations, and executing the
  operations in schedule order on a single-copy set of data results in
  the same read results as in the history.
So a read reading-from a write %
indicates that this write is the read's \textit{most recent write} (to this key)
in any matching schedule.

\begin{definition2}[Serializable history]
A \emph{serializable history} is a history that matches a serial schedule.
\label{def:serializable}
\end{definition2}

\heading{\mypolyg.}
In the following, we define a \mypolyg; this is a helper notion for the known graph
($\vg$ in the definition below)
and generalized \constraints
($\vconstraints$ in the definition below)
mentioned in Section~\ref{subsec:reduce}.

\begin{definition2}[\mypolyg]
Given a history $h$,
a \emph{\mypolyg} $Q(h) = (\vg,\,\vconstraints)$ where $\vg$ and $\vconstraints$
are generated by \textsc{\ConstructEncoding} from Figure~\ref{fig:algocodemain}.
\label{def:mypolyg}
\end{definition2}

We call a directed graph $\hat{g}$
\textit{compatible} with a \mypolyg $Q(h) = (\vg,\,\vconstraints)$,
if $\hat{g}$ has the same vertices as $\vg$,
includes the edges from $\vg$,
and selects one edge set from each constraint in $\vconstraints$.

\begin{definition2}[Acyclic \mypolyg]
A \mypolyg $Q(h)$ is \emph{acyclic} if there exists an acyclic graph
that is compatible with $Q(h)$.
\end{definition2}

\heading{Chains.}
When constructing a \mypolyg from a history,
function \textsc{\PruneConstraintsRMW} in
\sys's algorithm
(Figure~\ref{fig:algocodemain})
produces \textit{\chains}.
One \chain is an ordered list of transactions, associated to a key $k$,
that (supposedly) contains a sequence of consecutive writes (defined below in
Definition~\ref{def:consecutivewrites}) on key $k$.
In the following, we will first define what is a sequence of consecutive writes
and then prove that a \chain is indeed such a sequence.

\begin{definition2}[\CF{\adjacentwrite}]
In a history, a transaction $\vtx_i$ is a \emph{\adjacentwrite}
of another
transaction $\vtx_j$ on a key $k$,
if (1) both $\vtx_i$ and $\vtx_j$ write to $k$ and
   (2) $\vtx_i$ reads $k$ from $\vtx_j$.
\label{def:adjacentwrite}
\end{definition2}

\begin{definition2}[A sequence of consecutive writes]
\emph{A sequence of consecutive writes} on a key $k$ of length $n$
is a list of transactions $[\vtx_1,\ldots,\vtx_n]$
for which $\vtx_i$ is a \adjacentwrite of $\vtx_{i-1}$ on $k$,
for $1<i\leq n$.
\label{def:consecutivewrites}
\end{definition2}

Although the overall problem of detecting serializability is NP-complete~\cite{papadimitriou79serializability},
there are \emph{local} malformations, which
immediately indicate that a history is not serializable. We capture two
of them in the following definition:

\begin{definition2}[An \notwellformed history]
\emph{An \notwellformed history} $h$ is a history that
either (1) contains a transaction that has multiple \adjacentwrites on one key,
or (2) has a cyclic known graph $\vg$ of $Q(h)$.
\label{def:notwellformed}
\end{definition2}

An \notwellformed history is not serializable.
First, if a history has condition (1) in the above definition, there exist at
least two transactions that are \adjacentwrites of the same transaction (say
$\vtx_i$) on some key $k$. And, these two \adjacentwrites cannot be ordered in
a serial schedule, because whichever is scheduled later would
read $k$ from the other rather than from $\vtx_i$.
Second, if there is a cycle in the known graph,
this cycle must include multiple transactions
(because there are no self-loops, since we assume that transactions
never read keys after writing to them). The members of this cycle 
cannot be ordered in a serial schedule.

\begin{lemma2}
\sys rejects \notwellformed histories.
\label{lemma:algoreject}
\end{lemma2}

\begin{proof}
\sys (the algorithm in Figure~\ref{fig:algocodemain} and the constraint solver)
detects and rejects \notwellformed histories as follows.
(1) If a transaction has multiple \adjacentwrites on the same key in $h$,
\sys's algorithm explicitly detects this case.
The algorithm checks,
for transactions reading and writing the same key (line~\ref{li:rwsamekey}),
whether multiple of them read this key from the same transaction (line~\ref{li:detectrmws}).
If so, the transaction being read has multiple \adjacentwrites,
hence the algorithm rejects (line~\ref{li:reject1}).
(2) If the known graph has a cycle,
\sys detects and rejects this history when checking acyclicity in the
constraint solver.
\end{proof}

On the other hand, if a history is not \notwellformed,
we want to argue that each \chain produced by the algorithm
is a sequence of consecutive writes.

\begin{claim}
If \sys's algorithm makes it to line~\ref{li:beforecombine} (immediately before \textsc{\PruneConstraintsRMW}),
then from this line on,
any transaction writing to a key $k$ appears in exactly one \chain on $k$.
\label{claim:onechain}
\end{claim}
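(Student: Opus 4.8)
The plan is to prove Claim~\ref{claim:onechain} as a strengthened loop invariant of \PruneConstraintsRMW, since that procedure is the only place where \vperkey is mutated after line~\ref{li:beforecombine} (\InferRWEdges and \BundleConstraints only read it). Thus ``from this line on'' reduces to showing the invariant holds immediately before the loop and is preserved by each iteration. I would strengthen the bare ``exactly one \chain'' statement to a combined invariant $\mathcal{I}_k$: for every key $k$, (a)~the lists in $\vperkey[k]$ are pairwise disjoint and their union is exactly the set of transactions that write $k$ (a partition), and (b)~each list $[u_1,\dots,u_m]$ is a directed path in the successive-write relation, i.e.\ $\wwpairs[\langle k, u_i\rangle]=u_{i+1}$ for $1\le i<m$. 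Part~(a) is precisely the Claim; part~(b) is what I will need in the induction and is also the natural hook for the subsequent consecutive-writes argument.

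For the base case, at line~\ref{li:beforecombine} each write has been turned into its own one-element list, so $\vperkey[k]$ consists of singletons, one per writer of $k$, and both (a) and (b) hold trivially. Before the induction I would record three structural facts about the relation $\wwpairs[\langle k,\cdot\rangle]$ restricted to writers of $k$. First, it has out-degree at most one: reaching line~\ref{li:beforecombine} means the reject at line~\ref{li:reject1} never fired, so no $\langle k,\vtx_1\rangle$ was ever assigned twice. Second, it has in-degree at most one: $\wwpairs[\langle k,\vtx_1\rangle]=\vtx_2$ means $\vtx_2$ reads $k$ from $\vtx_1$, and by the uniqueness-of-values assumption each $\vtx_2$ reads $k$ from a single write. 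Third, it is acyclic: each such pair is a wr-edge of $\vg$, so a cycle in the relation would be a cycle in the known graph, making the history \notwellformed and hence rejected by Lemma~\ref{lemma:algoreject}; so in the regime of interest the relation is a disjoint union of simple directed paths.

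For the inductive step I would examine one iteration processing $\langle k,\vtx_1,\vtx_2\rangle$. Invariant~(b) makes the two retrievals well defined and unique: $\vtx_1$ lies in exactly one list, and since $\vtx_2$ is its only possible successor and that link is forged only by this not-yet-processed edge, $\vtx_1$ is the tail of that list; symmetrically $\vtx_2$ is the head of its list. The concatenation removes these two lists and inserts their join, so (a) is preserved exactly when the two lists are \emph{distinct} (otherwise the join would list every transaction of that chain twice, breaking the partition), and (b) is preserved because appending along the edge $\vtx_1\to\vtx_2$ keeps the list a path. The main obstacle is therefore ruling out the self-merge $\vwlist_1=\vwlist_2$: this would require $\vtx_2$ to already sit inside $\vtx_1$'s list, which by~(b) and the degree bounds can only happen if that list already contains a path from $\vtx_2$ back to $\vtx_1$, closing a cycle $\vtx_2\rightsquigarrow\vtx_1\to\vtx_2$ in the successive-write relation and contradicting its acyclicity. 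With the self-merge excluded, both parts of $\mathcal{I}_k$ carry through the iteration, and part~(a) yields the Claim.
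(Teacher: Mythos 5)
There is a genuine gap, and it sits in your acyclicity step. The only hypothesis of Claim~\ref{claim:onechain} is that the reject at line~\ref{li:reject1} never fired; a cyclic known graph is \emph{not} rejected at this stage---per the proof of Lemma~\ref{lemma:algoreject}, that rejection happens later, when the constraint solver checks acyclicity. So \textsc{\PruneConstraintsRMW} does execute on histories in which the relation $\wwpairs[\langle k,\cdot\rangle]$ contains a cycle, and the claim must (and does) hold for them as well. Your argument cannot reach those executions: you appeal to Lemma~\ref{lemma:algoreject} to assume the relation is a disjoint union of simple paths, but at the point where \textsc{\PruneConstraintsRMW} runs that rejection has not yet occurred, and your strengthened invariant~(b) actually fails in the iteration that closes a cycle (the spliced list stops being a simple path). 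Notice also that the claim itself survives a self-merge: if $\vwlist_1=\vwlist_2$, the concatenation lists each transaction twice, yet every writer of $k$ still belongs to exactly one \chain, so distinctness of the two spliced lists is not needed for this statement at all---your ``(a) is preserved exactly when the two lists are distinct'' is where the unnecessary dependence is introduced. (A smaller quibble: your in-degree bound really rests on the assumption that a transaction reads a key at most once, as in Claim~\ref{claim:distincttuples}, not on uniqueness of written values.)

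The paper's own proof is precisely the unadorned counting argument you set out to strengthen: at line~\ref{li:beforecombine} every writer of $k$ sits in exactly one singleton \chain, and each iteration of \textsc{\PruneConstraintsRMW} removes the \chains containing the affected transactions and inserts their single concatenation, so each affected transaction's membership count remains one; no head/tail facts, degree bounds, or acyclicity are used. The machinery you built is not wasted---it essentially reconstructs the paper's Claims~\ref{claim:distincttuples} and~\ref{claim:tx1head}, and the exclusion of the self-merge is indeed needed later, but the paper performs it in Lemma~\ref{lemma:longestchain} under the explicit hypothesis that the history is \wellformed, which is what legitimately licenses the known-graph-acyclicity argument you wanted. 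To repair your proof, either drop part~(b) and the distinctness requirement (collapsing to the paper's argument), or restrict the claim to \wellformed histories---but the latter would weaken it relative to how it is used (Claim~\ref{claim:tx1head} invokes it without that hypothesis).
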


\begin{proof}
Prior to line~\ref{li:beforecombine},
\sys's algorithm loops over all the write operations
(line~\ref{li:write2chain1}--\ref{li:write2chain2}),
creating a \chain for each one (line~\ref{li:preparecombine}).
As in the literature~\cite{papadimitriou79serializability, weikum01transactional},
we assume that each transaction writes to a key only once.
Thus, any \vtx writing to a key $k$ has exactly one write operation to $k$
and hence appears in exactly one \chain on $k$ in line~\ref{li:beforecombine}.

Next, we argue that \textsc{\PruneConstraintsRMW} preserves this
invariant. 
This suffices to prove the claim, because after
line~\ref{li:beforecombine}, only \textsc{\PruneConstraintsRMW} updates
\chains (variable $\vperkey$ in the algorithm).

The invariant is preserved by \textsc{\PruneConstraintsRMW} because each
of its loop iterations splices two \chains on the same key into a new
\chain (line~\ref{li:concatenateww}) and deletes the two old \chains
(line~\ref{li:rmoldchains}). From the perspective of a transaction
involved in a splicing operation, its old chain on key  $k$ has been
destroyed, and it has joined a new one on key $k$, meaning that the 
number of chains it belongs to on key $k$ is unchanged: the number
remains 1.
\end{proof}

One clarifying fact %
is that a transaction can appear in multiple \chains
on different keys,
because a transaction can write to multiple keys.

\begin{claim}
If \sys's algorithm does not reject in line~\ref{li:reject1},
then after \textsc{\CreateKnownGraph},
for any two distinct entries 
$\ven_1$ and $\ven_2$ (in the form of $\langle \vkey,\,\vtx_i,\,\vtx_j \rangle$)
in $\wwpairs$:
if $\ven_1.\fkey = \ven_2.\fkey$,
then $\ven_1.\textrm{tx}_i \neq \ven_2.\textrm{tx}_i$
and $\ven_1.\textrm{tx}_j \neq \ven_2.\textrm{tx}_j$.
\label{claim:distincttuples}
\end{claim}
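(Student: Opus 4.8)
The plan is to trace exactly how the map $\wwpairs$ is populated inside \CreateKnownGraph and then read off both inequalities from the map's key--value structure together with one simple counting observation. First I would fix notation: an entry written as $\langle \vkey,\,\vtx_i,\,\vtx_j\rangle$ records the binding $\wwpairs[\langle \vkey,\,\vtx_i\rangle] = \vtx_j$, and the \emph{only} place such a binding is created is line~\ref{li:getwwpairs}, which executes while processing some transaction $\vtx$ that reads $\vkey$ from $\vtx_i$. At that line the domain key is $\langle \vkey,\,\rop.\textrm{read\_from\_tx}\rangle$ and the stored value is the current transaction, so $\vtx_i = \rop.\textrm{read\_from\_tx}$ and, crucially, $\vtx_j = \vtx$. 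I would also note at the outset that the hypothesis ``\sys does not reject in line~\ref{li:reject1}'' is precisely what lets \CreateKnownGraph run to completion, so that the phrase ``after \CreateKnownGraph'' refers to a fully-formed map rather than a partial one.

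For the first inequality $\ven_1.\textrm{tx}_i \neq \ven_2.\textrm{tx}_i$, I would argue directly from the fact that $\wwpairs$ is a map keyed by the pair $\langle \vkey,\,\vtx_i\rangle$. If two entries shared both the key $\vkey$ and the component $\vtx_i$, they would have the identical domain key $\langle \vkey,\,\vtx_i\rangle$ and hence would be the same map entry, contradicting that $\ven_1$ and $\ven_2$ are distinct. Thus distinct entries with $\ven_1.\fkey = \ven_2.\fkey$ must differ in their $\textrm{tx}_i$ component; this part is essentially immediate from map-key uniqueness.

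For the second inequality $\ven_1.\textrm{tx}_j \neq \ven_2.\textrm{tx}_j$, I would count how many entries a single transaction can create on one key. The store at line~\ref{li:getwwpairs} sits inside the loop of line~\ref{li:rwsamekey}, which ranges over the distinct keys that $\vtx$ both reads and writes; hence a given $\vkey$ is visited at most once per transaction $\vtx$, and each visit performs at most one store, whose value is $\vtx$ itself. Consequently, for a fixed transaction and a fixed key there is at most one entry whose stored value equals that transaction. So if two entries had the same $\fkey = \vkey$ and the same $\textrm{tx}_j = \vtx$, they would both have to originate from the single processing of $\vtx$ on $\vkey$ and would therefore coincide, again contradicting distinctness.

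I do not expect a deep obstacle: once the above bookkeeping is in place, both claims fall out as corollaries of map-key uniqueness and of the ``one store per (transaction, key)'' invariant. The only point needing genuine care is the reference to ``$\rop$, the operation in $\vtx$ that reads $\vkey$'': to keep the counting argument for $\textrm{tx}_j$ honest even without assuming a transaction reads a key at most once, I would lean on the loop in line~\ref{li:rwsamekey} being a quantification over \emph{keys} rather than over read operations, so that each key is handled once regardless. This is the one spot where I would be explicit rather than wave it through.
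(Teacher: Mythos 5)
Your proof is correct and takes essentially the same approach as the paper's: both inequalities are read off from the fact that line~\ref{li:getwwpairs} is the sole insertion site for $\wwpairs$, handled one inequality at a time. The only (minor) differences are that for $\ven_1.\textrm{tx}_i \neq \ven_2.\textrm{tx}_i$ the paper additionally leans on the check at lines~\ref{li:detectrmws}--\ref{li:reject1} to argue no second insertion under the same domain key ever occurs, where you appeal directly to map-key uniqueness, and for $\ven_1.\textrm{tx}_j \neq \ven_2.\textrm{tx}_j$ the paper invokes the standing assumption that a transaction reads a key at most once, where you instead observe that the loop at line~\ref{li:rwsamekey} quantifies over keys, so each (transaction, key) pair triggers at most one store.
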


\begin{proof}
First, we prove $\ven_1.\textrm{tx}_i \neq \ven_2.\textrm{tx}_i$.
In \sys's algorithm, line~\ref{li:getwwpairs} is the only point
  where new entries are inserted into $\wwpairs$.
Because of the check in line~\ref{li:detectrmws}--\ref{li:reject1},
  the algorithm guarantees that a new entry will not be inserted into $\wwpairs$
  if an existing entry has the same $\langle \vkey,\, \vtx_i \rangle$.
Also, existing entries are never modified.
Thus, there can never be two
  entries in $\wwpairs$ indexed by the same $\langle \vkey,\, \vtx_i \rangle$.

Second, we prove $\ven_1.\textrm{tx}_j \neq \ven_2.\textrm{tx}_j$.
As in the literature~\cite{papadimitriou79serializability, weikum01transactional},
we assume that one transaction reads a key at most once.\footnote{In our
implementation, this
assumption is guaranteed by \sys's client library~(\S\ref{s:impl}).}
As a consequence, the body of the loop in line~\ref{li:rwsamekey},
including line~\ref{li:getwwpairs}, is
executed at most once for
each (key,tx) pair. Therefore,
there cannot be two entries in $\wwpairs$ that match 
$\langle \vkey,\textrm{\_}, \vtx \rangle$.
\end{proof}

\begin{claim}
In one iteration
of \textsc{\PruneConstraintsRMW} (line~\ref{li:looprmw}),
for $\ven_i = \langle \vkey,\,\vtx_1,\,\vtx_2 \rangle$ retrieved from $\wwpairs$,
there exist $\vwlist_1$ and $\vwlist_2$,
such that $\vtx_1$ is the tail of $\vwlist_1$ and $\vtx_2$ is the head of $\vwlist_2$.
\label{claim:tx1head}
\end{claim}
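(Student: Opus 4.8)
The plan is to prove a loop invariant for \PruneConstraintsRMW and read the claim off from it. Throughout the procedure, call a transaction $\vtx$ that writes a key $k$ \emph{left-consumed on $k$} if it has already appeared in the first transaction slot (the $\vtx_1$ position) of some \wwpairs entry with key $k$ processed in an \emph{earlier} loop iteration, and \emph{right-consumed on $k$} if it has already appeared in the second slot ($\vtx_2$). By Claim~\ref{claim:onechain}, from line~\ref{li:beforecombine} onward every write-transaction lies in exactly one \chain per key, so ``the \chain containing $\vtx$ on $k$'' is well defined. The invariant I would maintain is: on each key $k$, a write-transaction is the \emph{tail} of its \chain iff it is not left-consumed on $k$, and it is the \emph{head} of its \chain iff it is not right-consumed on $k$.

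I would prove this invariant by induction on the number of completed iterations of the loop at line~\ref{li:looprmw}. For the base case, just before the loop (line~\ref{li:beforecombine}) every \chain is a one-element list (line~\ref{li:preparecombine}), so every write-transaction is simultaneously its own head and tail, and nothing has been consumed in either slot, making both sides of each iff vacuously true. For the inductive step, consider the iteration that processes $\langle k,\vtx_1,\vtx_2\rangle$, splicing $\vwlist_1$ (whose last element is $\vtx_1$) and $\vwlist_2$ (whose first element is $\vtx_2$) into $\mathrm{concat}(\vwlist_1,\vwlist_2)$ (lines~\ref{li:getchain1}--\ref{li:concatenateww}). In the resulting list the only positions whose head/tail status changes are $\vtx_1$, which stops being a tail because it is now immediately followed by $\vtx_2$, and $\vtx_2$, which stops being a head because it is now immediately preceded by $\vtx_1$; every other endpoint (in particular the head of $\vwlist_1$ and the tail of $\vwlist_2$) is preserved. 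This exactly mirrors the bookkeeping, since this iteration left-consumes $\vtx_1$ and right-consumes $\vtx_2$ and alters no other transaction's consumed status, so the invariant is restored.

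Finally I would discharge the claim. Fix the iteration processing $\ven_i=\langle k,\vtx_1,\vtx_2\rangle$. By Claim~\ref{claim:distincttuples}, for a fixed key no two \wwpairs entries share a first component and none share a second component; hence the current iteration is the \emph{unique} one that left-consumes $\vtx_1$ and the unique one that right-consumes $\vtx_2$. In particular, at the start of this iteration $\vtx_1$ is not yet left-consumed and $\vtx_2$ is not yet right-consumed, so by the invariant $\vtx_1$ is the tail of its (unique) \chain $\vwlist_1$ and $\vtx_2$ is the head of its (unique) \chain $\vwlist_2$, which is precisely what the claim asserts. The part to get right is the invariant's phrasing together with the inductive step: one must check that a single splice changes \emph{only} the two consumed endpoints' head/tail status, and then lean on Claim~\ref{claim:distincttuples} to rule out that $\vtx_1$ or $\vtx_2$ had already been consumed in an earlier iteration (without that uniqueness, a prior splice could have demoted $\vtx_1$ to an interior node, breaking the argument). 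Note that the claim asserts only the \emph{existence} of $\vwlist_1,\vwlist_2$ and does not require $\vwlist_1\neq\vwlist_2$, so no separate distinctness argument is needed at this step.
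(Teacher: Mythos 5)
Your proposal takes a genuinely different route from the paper. The paper argues per iteration, by contradiction: if $\vtx_1$ were not the tail of its \chain, it would have an immediate successor $\vtx'$ in that \chain; any adjacency inside a \chain can only have been created by the concatenation at line~\ref{li:concatenateww}, which requires an entry $\langle \vkey,\,\vtx_1,\,\vtx'\rangle$ processed in an earlier iteration, and that entry together with the current one would give two distinct \wwpairs entries sharing the same key and first component, contradicting Claim~\ref{claim:distincttuples} (symmetrically for $\vtx_2$ and heads, using uniqueness of second components). You instead maintain a global loop invariant (``tail iff not left-consumed, head iff not right-consumed'') and read the claim off from it. Both arguments rest on the same two ingredients, Claim~\ref{claim:onechain} and Claim~\ref{claim:distincttuples}, so the invariant formulation is a legitimate alternative of essentially the same strength.

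One step needs repair, however. Your inductive step asserts that a splice demotes exactly $\vtx_1$ from tail status and $\vtx_2$ from head status; that is only true when $\vwlist_1 \neq \vwlist_2$. The claim carries no well-formedness hypothesis, and \textsc{\PruneConstraintsRMW} is executed even on histories whose known graph is cyclic (rejection happens only later, in the solver). For such a history the two lookups can return the \emph{same} \chain: three RMWs on one key whose read-from relations form a cycle yield, at the last of the three iterations, a single \chain with head $\vtx_2$ and tail $\vtx_1$; concatenating that \chain with itself leaves $\vtx_1$ the tail and $\vtx_2$ the head, so nothing is demoted and your biconditional invariant is false afterwards. Your closing remark that no distinctness argument is needed is right about the claim's statement but wrong about your own inductive step, which silently assumes distinctness. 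The fix is local: only the directions ``not left-consumed $\Rightarrow$ tail'' and ``not right-consumed $\Rightarrow$ head'' are used in your final discharge, and these survive every splice, degenerate or not (equivalently, observe that adjacency inside any \chain implies the corresponding \wwpairs entry was already processed, which is exactly the paper's observation). With that weakening your argument goes through; note that the paper's per-iteration contradiction sidesteps the issue entirely, which matters because Claim~\ref{claim:tx1head} is later invoked in Lemma~\ref{lemma:longestchain} precisely in the hypothetical configuration $\vwlist_1 = \vwlist_2$.
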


\begin{proof}
  Invoking Claim~\ref{claim:onechain}, denote the \chain on $\vkey$ that
  $\vtx_1$ is in as $\vwlist_i$; similarly, denote $\vtx_2$'s chain as
  $\vwlist_j$.

  Assume to the contrary that $\vtx_1$ is not the tail of $\vwlist_i$.
  Then there is a transaction $\vtx'$ next to $\vtx_1$ in $\vwlist_i$.
  But the only way for two transactions ($\vtx_1$ and $\vtx'$) to appear adjacent
  in a \chain is through the concatenation in line~\ref{li:concatenateww},
  and that requires an entry $\ven_j = \langle \vkey,\,\vtx_1,\,\vtx'\rangle$ in $\wwpairs$.
  Because $\vtx'$ is already in $\vwlist_i$ when the current iteration
  happens, 
  $\ven_j$ must have been retrieved in some prior iteration.
  Since $\ven_i$ and $\ven_j$ appear in different iterations,
  they are two distinct entries in $\wwpairs$.
  Yet, both of them are indexed by $\langle \vkey,\,\vtx_1 \rangle$,
  which is impossible, by Claim~\ref{claim:distincttuples}.

  Now
  assume to the contrary that $\vtx_2$ is not the head of $\vwlist_j$.
  Then $\vtx_2$ has an immediate predecessor $\vtx'$ in $\vwlist_j$.
  In order to have $\vtx'$ and $\vtx_2$ appear adjacent in $\vwlist_j$, there
  must be an entry $\ven_k = \langle \vkey,\,\vtx',\,\vtx_2 \rangle$ in $\wwpairs$.
  Because $\vtx'$ is already in $\vwlist_j$ when the current iteration
  happens,
  $\ven_k$ must have been retrieved in an earlier iteration.
  So, $\ven_k = \langle \vkey,\,\vtx',\,\vtx_2 \rangle$
  and $\ven_i = \langle \vkey,\,\vtx_1,\,\vtx_2 \rangle$ are distinct entries in $\wwpairs$,
  which is impossible,
  by Claim~\ref{claim:distincttuples}.
\end{proof}

\begin{lemma2}
If $h$ is not \notwellformed,
every \chain is a sequence of consecutive writes
after \textsc{\PruneConstraintsRMW}.
\label{lemma:longestchain}
\end{lemma2}

\begin{proof}
  Because $h$ is not \notwellformed, it doesn't contain any transaction
  that has multiple \adjacentwrites. Hence, \sys's algorithm does not reject
  in line~\ref{li:reject1} and can make it to \textsc{\PruneConstraintsRMW}.

  At the beginning (immediately before \textsc{\PruneConstraintsRMW}),
  all chains are single-element lists (line~\ref{li:preparecombine}).
  By Definition~\ref{def:consecutivewrites}, each chain is a sequence of
  consecutive writes with only one transaction.

  Assume that, before loop iteration $t$, each \chain is
  a sequence of consecutive writes. We show that after iteration $t$ (before
  iteration $t+1$), \chains are still sequences of consecutive writes.

  If $t \leq \textrm{size}(\wwpairs)$, then in line~\ref{li:looprmw},
  \sys's algorithm gets an entry
  $\langle \vkey,\,\vtx_1,\,\vtx_2\rangle$ from $\wwpairs$,
  where $\vtx_2$ is $\vtx_1$'s \adjacentwrite on $\vkey$.
  Also, we assume one transaction does not read from itself
  (as in the literature~\cite{papadimitriou79serializability, weikum01transactional}),
  and since $\vtx_2$ reads from $\vtx_1$, $\vtx_1 \ne \vtx_2$.
  Then, the algorithm references the \chains that they are in: $\vwlist_1$ and $\vwlist_2$.

  First, we argue that $\vwlist_1$ and $\vwlist_2$ are distinct \chains.
  By Claim~\ref{claim:onechain}, no transaction can appear in
  two \chains on the same key, so $\vwlist_1$ and $\vwlist_2$
  are either distinct \chains or the same \chain.
  Assume they are the same \chain ($\vwlist_1 = \vwlist_2$).
  If $\vwlist_1$ ($=\vwlist_2$) is a single-element \chain,
  then $\vtx_1$ (in $\vwlist_1$) is $\vtx_2$ (in $\vwlist_2$),
  a contradiction to $\vtx_1 \neq \vtx_2$.

  Consider the case that $\vwlist_1$ ($=\vwlist_2$) contains multiple transactions.
  Because $\vtx_2$ reads from $\vtx_1$,
  there is an edge $\vtx_1 \to \vtx_2$ (generated from line~\ref{li:addwr})
  in the known graph of $Q(h)$.
  Similarly, because $\vwlist_1$ is a sequence of consecutive writes
  (the induction hypothesis), any transaction
  $\vtx$ in $\vwlist_1$ reads from its immediate prior transaction,
  hence there is an edge from this prior transaction to $\vtx$.
  Since every pair of adjacent transactions in $\vwlist_1$ has such an edge,
  the head of $\vwlist_1$ has a path to the tail of $\vwlist_1$.
  Finally, by Claim~\ref{claim:tx1head}, $\vtx_2$ is the head of $\vwlist_2$ and
  $\vtx_1$ is the tail of $\vwlist_1$,
  as well as $\vwlist_1 = \vwlist_2$,
  there is a path $\vtx_2 \rightsquigarrow \vtx_1$.
  Thus, there is a cycle ($\vtx_1 \to \vtx_2 \rightsquigarrow \vtx_1$) in the known graph,
  so $h$ is \notwellformed, a contradiction.

  Second, we argue that the concatenation of $\vwlist_1$ and $\vwlist_2$, denoted as $\vwlist_{1+2}$,
  is a sequence of consecutive writes.
  Say the lengths of $\vwlist_1$ and $\vwlist_2$ are $n$ and $m$ respectively.
  Since $\vwlist_1$ and $\vwlist_2$ are distinct sequences of consecutive writes,
  all transactions in $\vwlist_{1+2}$ are distinct and
  $\vwlist_{1+2}[i]$ reads from $\vwlist_{1+2}[i-1]$ for $i \in \{2,
  \ldots, n+m \} \setminus \{n+1\}$.
  For $i=n+1$, the preceding also holds, because
  $\vtx_1$ is $\vwlist_1$'s tail ($=\vwlist_{1+2}[n]$),
  $\vtx_2$ is $\vwlist_2$'s head ($=\vwlist_{1+2}[n+1]$),
  and $\vtx_2$ is the \adjacentwrite of $\vtx_1$ ($\vtx_2$ reads from
  $\vtx_1$).
Thus,
   $\vwlist_{1+2}$ is a sequence of consecutive writes, 
 according to Definition~\ref{def:consecutivewrites}.
  
  If $t > \textrm{size}(\wwpairs)$ and the loop ends, then \chains don't change.
  As they are sequences of consecutive writes after the final step (when $t = \textrm{size}(\wwpairs)$),
  they still are after \textsc{\PruneConstraintsRMW}.
\end{proof}

In the following, when we refer to \chains, we mean the state of \chains
after executing
\textsc{\PruneConstraintsRMW}.

\subsection{The main argument}
\label{subsec:mainproof}

In this section, the two theorems (Theorem~\ref{theorem:=>} and~\ref{theorem:<=})
together prove the validity of \sys's encoding.

\begin{theorem2}
If a history $h$ is serializable, then $Q(h)$ is acyclic.
\label{theorem:=>}
\end{theorem2}

\begin{proof}

Because $h$ is serializable, there exists a serial schedule $\hat{s}$ that
$h$ matches.

\begin{claim}
For any transaction $\vrtx$ that reads from a transaction $\vwtx$ in $h$,
$\vrtx$ appears after $\vwtx$ in $\hat{s}$.
\label{claim:wrorder}
\end{claim}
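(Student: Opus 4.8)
The plan is to argue by contradiction, leveraging the definition of \emph{matches} (\S\ref{s:searchprelims}, \S\ref{subsec:defandpreliminaries}) together with the assumption that written values are unique. Suppose, for the sake of contradiction, that $\vrtx$ does \emph{not} appear after $\vwtx$ in $\hat{s}$. Since $\hat{s}$ is serial (its transactions do not overlap) and $\vrtx \neq \vwtx$ (a transaction never reads a key after writing it, so it cannot read from itself, as noted in the proof of Lemma~\ref{lemma:longestchain}), this negation means that every operation of $\vwtx$ follows every operation of $\vrtx$ in $\hat{s}$.

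Next I would unpack ``$\vrtx$ reads from $\vwtx$'' at the operation level: there is a read $\rop \in \vrtx$ on some key $k$ that reads from a write $\wop \in \vwtx$ on $k$, i.e., the value $\rop$ returns in $h$ equals the value $\wop$ writes. Because $h$ matches $\hat{s}$, executing the operations in $\hat{s}$-order on single-copy data yields the same read result for $\rop$. Hence, in $\hat{s}$, the value returned by $\rop$ is the value of the \emph{most recent write} to $k$ that precedes $\rop$ in $\hat{s}$.

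The key step is then to identify that most-recent write as $\wop$ itself. Here I would invoke the uniqueness-of-values assumption: since every value is produced by exactly one write operation, and $\rop$ returns the value of $\wop$, the most recent write to $k$ before $\rop$ in $\hat{s}$ can only be $\wop$. In particular $\wop$ precedes $\rop$ in $\hat{s}$. But $\wop \in \vwtx$ and $\rop \in \vrtx$, so $\vwtx$ contributes an operation that precedes an operation of $\vrtx$ in $\hat{s}$, contradicting the assumption that every operation of $\vwtx$ follows every operation of $\vrtx$. This contradiction forces $\vrtx$ to appear after $\vwtx$.

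I expect the only delicate point to be making precise the leap from ``$\rop$ returns $\wop$'s value in $\hat{s}$'' to ``$\wop$ is the most recent write to $k$ before $\rop$ in $\hat{s}$''; this is exactly the content of the fact recorded in \S\ref{subsec:defandpreliminaries} (a read reading-from a write is the read's most recent write in any matching schedule), and it hinges on values being unique, so that no other write could have supplied the returned value. Everything else is bookkeeping about the serial, non-overlapping structure of $\hat{s}$, which turns an operation-level precedence into a transaction-level one. In fact the claim essentially restates that fact, specialized to the serial schedule $\hat{s}$, so the argument is short once the matching definition and uniqueness are in hand.
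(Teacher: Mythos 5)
Your proof is correct and follows essentially the same route as the paper's: unpack the reads-from relation at the operation level, use the definition of matching (with uniqueness of values) to place $\wop$ before $\rop$ in $\hat{s}$, and then use the non-overlapping structure of the serial schedule to lift this to a transaction-level ordering. The only difference is that you phrase it as a contradiction while the paper argues directly, which is purely cosmetic.
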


\begin{proof}
This follows from the definitions given at the start of the section: if
$\vrtx$ reads from $\vwtx$ in $h$, then there is a read operation $\rop$
in $\vrtx$ that reads from a write operation $\wop$ in $\vwtx$.  Thus,
as stated earlier and by definition of matching, $\rop$ appears later
than $\wop$ in $\hat{s}$.  Furthermore, by definition of serial
schedule, transactions don't overlap in $\hat{s}$.  Therefore, all of
$\vrtx$ appears after all of $\vwtx$ in $\hat{s}$.
\end{proof}

\begin{claim}
For any pair of transactions ($\vrtx$, $\vwtx$) where $\vrtx$ reads a key $k$
from $\vwtx$ in $h$,
no transaction $\vwtx'$ that writes to $k$
can appear between $\vwtx$ and $\vrtx$ in $\hat{s}$.
\label{claim:nointervening}
\end{claim}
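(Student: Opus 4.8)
The plan is to argue by contradiction, invoking the \emph{most recent write} characterization of reading-from recorded among the preliminaries (``a read reading-from a write indicates that this write is the read's most recent write to this key in any matching schedule''). Suppose, toward a contradiction, that some transaction $\vwtx'$ writing to $k$ appears between $\vwtx$ and $\vrtx$ in $\hat{s}$. Let $\wop$ be the write to $k$ in $\vwtx$ that $\vrtx$ reads from, let $\rop$ be the corresponding read in $\vrtx$, and let $\wop'$ be a write to $k$ in $\vwtx'$.

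First I would pin down the positions of these operations in $\hat{s}$. Because $\hat{s}$ is serial, transactions do not overlap, so each of $\vwtx$, $\vwtx'$, and $\vrtx$ occupies a contiguous block of the total order. The hypothesis that $\vwtx'$ lies between $\vwtx$ and $\vrtx$ therefore places the entire block of $\vwtx'$---and in particular $\wop'$---after every operation of $\vwtx$ and before every operation of $\vrtx$; hence $\wop$, $\wop'$, $\rop$ occur in that order in $\hat{s}$, all on key $k$.

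Then I would close the argument against the recorded fact. Since $\vrtx$ reads $k$ from $\vwtx$ in $h$ and $h$ matches $\hat{s}$, that fact says $\wop$ is $\rop$'s most recent write to $k$ in $\hat{s}$, i.e.\ no write to $k$ can lie between $\wop$ and $\rop$. But $\wop'$ is exactly such an intervening write, a contradiction. (Equivalently, one can unfold the matching definition: executing $\hat{s}$ would make $\rop$ return the value of $\wop'$ or of a still-later write to $k$ preceding $\rop$, which by the uniqueness-of-values assumption of Section~\ref{s:searchprelims} differs from $\wop$'s value, contradicting that $\rop$ reads $\wop$'s value.) The only step needing care is the passage from ``$\vwtx'$ between $\vwtx$ and $\vrtx$'' to ``$\wop'$ between $\wop$ and $\rop$,'' which is precisely where seriality (non-overlapping transactions) is essential; without it an intervening transaction's write could be interleaved so as not to fall between $\wop$ and $\rop$. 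Everything else is a direct appeal to the definitions already in hand, so I anticipate no further obstacle.
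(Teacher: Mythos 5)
Your proposal is correct and follows essentially the same route as the paper: assume an intervening writer $\vwtx'$, use seriality of $\hat{s}$ plus the matching (most-recent-write) property to conclude $\vrtx$ would not return $\vwtx$'s value, contradicting that $h$ matches $\hat{s}$. The only cosmetic difference is that you spell out the passage from transaction-level to operation-level ordering explicitly, whereas the paper leans on its Claim that $\vrtx$ appears after $\vwtx$ and leaves that step implicit.
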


\begin{proof}
  Assume to the contrary that there exists $\vwtx'$ that appears in between
  $\vwtx$ and $\vrtx$ in $\hat{s}$.
  By
   Claim~\ref{claim:wrorder},
  $\vrtx$ appears after $\vwtx$ in $\hat{s}$.
  Therefore, $\vwtx'$ appears in $\hat{s}$ before $\vrtx$ and after $\vwtx$.
  Thus, in $\hat{s}$, 
  $\vrtx$ does not return the value of $k$ written by $\vwtx$.
  But in $h$, $\vrtx$ returns the value of $k$ written by $\vwtx$. Thus, $\hat{s}$
  and $h$ do not match, a contradiction.
\end{proof}

In the following, we use $\vhead_k$ and $\vtail_k$ as shorthands to
represent, respectively,
the head transaction and the tail transaction of $\vwlist_k$.
And, we denote that $\vtx_i$ appears before $\vtx_j$ in $\hat{s}$ as $\vtx_i <_{\hat{s}} \vtx_j$.

\begin{claim}
For any pair of \chains $(\vwlist_i,\,\vwlist_j)$ on the same key $k$,
if $\vhead_i <_{\hat{s}} \vhead_j$,
then (1) $\vtail_i  <_{\hat{s}} \vhead_j$ 
and (2) for any transaction $\vrtx$ that reads $k$ from $\vtail_i$,
$\vrtx <_{\hat{s}} \vhead_j$.
\label{claim:headtail}
\end{claim}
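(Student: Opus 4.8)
The plan is to reason entirely inside the serial schedule $\hat{s}$ fixed at the start of Theorem~\ref{theorem:=>}, exploiting what Lemma~\ref{lemma:longestchain} tells us about a single \chain: since $\vwlist_i$ is a sequence of consecutive writes on $k$, (i)~every transaction in $\vwlist_i$ writes $k$, and (ii)~each element reads $k$ from its predecessor. Fact (ii) is what I would feed into the two earlier claims: applying Claim~\ref{claim:wrorder} to each adjacent pair orders the whole \chain in $\hat{s}$ (so $\vhead_i <_{\hat{s}} \vtail_i$), and applying Claim~\ref{claim:nointervening} to each adjacent pair shows that no writer of $k$ can sit strictly between two consecutive \chain elements in $\hat{s}$. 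These, together with Claim~\ref{claim:onechain} (a transaction lies in exactly one \chain per key), are the only ingredients I expect to need.

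For part~(1) I would argue by contradiction. Because $\vhead_j$ lives in a \chain distinct from $\vwlist_i$, Claim~\ref{claim:onechain} says $\vhead_j$ is not a member of $\vwlist_i$; in particular $\vhead_j \neq \vtail_i$, so by totality of $\hat{s}$ it suffices to rule out $\vhead_j <_{\hat{s}} \vtail_i$. If that held, then with the hypothesis $\vhead_i <_{\hat{s}} \vhead_j$ the transaction $\vhead_j$ would fall strictly between $\vhead_i$ and $\vtail_i$ in $\hat{s}$. Since the members of $\vwlist_i$ are totally ordered in $\hat{s}$ and $\vhead_j$ equals none of them, $\vhead_j$ would land in some gap $\vwlist_i[m] <_{\hat{s}} \vhead_j <_{\hat{s}} \vwlist_i[m{+}1]$. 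But $\vhead_j$ writes $k$ while $\vwlist_i[m{+}1]$ reads $k$ from $\vwlist_i[m]$, which Claim~\ref{claim:nointervening} forbids --- a contradiction. Hence $\vtail_i <_{\hat{s}} \vhead_j$.

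For part~(2), given $\vrtx$ reading $k$ from $\vtail_i$, I first need $\vrtx \neq \vhead_j$ for the strict inequality to even be meaningful, and I would establish it separately: if $\vrtx = \vhead_j$, then $\vhead_j$ reads $k$ from $\vtail_i$ and, being a \chain head on $k$, writes $k$, so $\vhead_j$ is a \adjacentwrite of $\vtail_i$ (Definition~\ref{def:adjacentwrite}); but then $\wwpairs$ records the entry $\langle k,\vtail_i,\vhead_j\rangle$, and \textsc{\PruneConstraintsRMW} would splice the \chain whose tail is $\vtail_i$ to the \chain whose head is $\vhead_j$ (via Claim~\ref{claim:tx1head}), placing $\vtail_i$ and $\vhead_j$ in the same \chain on $k$ and contradicting Claim~\ref{claim:onechain}. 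With $\vrtx \neq \vhead_j$ in hand, I again go by contradiction: Claim~\ref{claim:wrorder} gives $\vtail_i <_{\hat{s}} \vrtx$ and part~(1) gives $\vtail_i <_{\hat{s}} \vhead_j$, so if $\vhead_j <_{\hat{s}} \vrtx$ then $\vhead_j$ is a writer of $k$ strictly between $\vtail_i$ and $\vrtx$, which Claim~\ref{claim:nointervening} forbids. Therefore $\vrtx <_{\hat{s}} \vhead_j$.

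The step I expect to be the real obstacle is the $\vrtx = \vhead_j$ case in part~(2): unlike the rest of the argument, it is not pure reasoning about $\hat{s}$ but must reach back into how \chains are built, invoking the insertion into $\wwpairs$ during \textsc{\CreateKnownGraph} and the splicing in \textsc{\PruneConstraintsRMW} (hence Claim~\ref{claim:tx1head}) to derive the contradiction with Claim~\ref{claim:onechain}. Everything else reduces to the ``no intervening writer'' property of Claim~\ref{claim:nointervening}, applied either within $\vwlist_i$ or across the gap up to $\vhead_j$, which is routine once the \chain is recognized as a totally ordered, gap-free run of $k$-writers in $\hat{s}$.
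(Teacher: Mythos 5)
Your proof is correct and takes essentially the same route as the paper's: part~(1) is the same contradiction obtained by totally ordering $\vwlist_i$ via Claim~\ref{claim:wrorder} and then forbidding $\vhead_j$ from landing between consecutive chain elements via Claim~\ref{claim:nointervening} (the paper just treats the single-element chain case separately), and part~(2) is the same sandwiching of $\vhead_j$ between $\vtail_i$ and $\vrtx$. The only difference is that you explicitly rule out $\vrtx=\vhead_j$ through the $\wwpairs$/splicing argument, a degenerate case the paper's proof leaves implicit when it takes the negation of $\vrtx<_{\hat{s}}\vhead_j$ to be $\vhead_j<_{\hat{s}}\vrtx$; that added care is sound and does not change the approach.
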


\begin{proof}
First, we prove $\vtail_i <_{\hat{s}} \vhead_j$.
If $\vhead_j \in \vwlist_j$ then $\vhead_j \not\in \vwlist_i$, by
Claim~\ref{claim:onechain}.
If $\vwlist_i$ has only one transaction (meaning $\vhead_i = \vtail_i$),
then $\vtail_i = \vhead_i <_{\hat{s}} \vhead_j$.

Next, if $\vwlist_i$ is a multi-transaction \chain, it can be written as
\[
  \vtx_1,\ \cdots \ \vtx_p,\, \vtx_{p+1},\ \cdots\ \vtx_n.
\]
By Lemma~\ref{lemma:longestchain}, 
$\vwlist_i$ is a sequence of consecutive writes on $k$,
so each transaction reads $k$ from its prior transaction in $\vwlist_i$.
Then, by Claim~\ref{claim:wrorder},
$\vtx_p <_{\hat{s}} \vtx_{p+1}$, for $1 \le p < n$.
Now, assume to the contrary that $\vhead_j <_{\hat{s}} \vtail_i$ ($=\vtx_n$).
Then, by the given, $\vtx_1(=\vhead_i) <_{\hat{s}} \vhead_j <_{\hat{s}} \vtx_n$.
Thus,
for some  $1 \le p < n$, we have $\vtx_p <_{\hat{s}} \vhead_j <_{\hat{s}}
\vtx_{p+1}$. 
But this is a contradiction, because $\vtx_{p+1}$ reads $k$ from $\vtx_p$,
and thus by Claim~\ref{claim:nointervening},
$\vhead_j$ cannot appear between them in $\hat{s}$.

Second, we prove that any transaction $\vrtx$ that reads $k$ from $\vtail_i$ 
appears before $\vhead_j$ in $\hat{s}$. 
Assume to the contrary that $\vhead_j <_{\hat{s}} \vrtx$. We have from
the first half of the claim that $\vtail_i <_{\hat{s}} \vhead_j$. 
Thus, $\vhead_j$ appears between $\vtail_i$ and $\vrtx$ in $\hat{s}$,
which is again a contradiction, by Claim~\ref{claim:nointervening}.
\end{proof}

Now we prove that $Q(h)$ is acyclic by constructing
a compatible graph $\hat{g}$ and proving $\hat{g}$ is acyclic.
We have the following fact from function
\textsc{\BundleConstraints}.

\begin{fact}
In \textsc{\BundleConstraints},
each \constraint $\langle A,\,B \rangle$ %
is generated from a pair of \chains
$(\vwlist_1,\,\vwlist_2)$ on the same key $k$ in line~\ref{li:coalesce}.
All edges in edge set $A$ point to $\vhead_2$,
and all edges in $B$ point to $\vhead_1$.
This is because
all edges in $A$ have the form either
$(\vrtx, \vhead_2)$ 
or $(\vtail_1, \vhead_2)$; see lines~\ref{li:noread}
and~\ref{li:wmid}--\ref{li:wend}.
Similarly by swapping $\vwlist_1$ and $\vwlist_2$ (line~\ref{li:c2c1} and \ref{li:c2c2}),
edges in $B$ point to $\vhead_1$.
\label{fact:constraint}
\end{fact}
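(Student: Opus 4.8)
The plan is to treat this as a purely syntactic property of the outputs of the two short procedures \textsc{\GenChainToChainEdges} and \textsc{\BundleConstraints}, established by direct inspection of the code rather than by any combinatorial argument. The crux is one auxiliary observation about \textsc{\GenChainToChainEdges}: on input $(\vwlist_i,\,\vwlist_j,\,\vkey,\,\readfrom)$, \emph{every edge it returns points to} $\vwlist_j.\ffirst$, i.e., its destination is the head of the \emph{second} chain argument. Once this is established, the Fact follows immediately by applying it to the two calls inside \textsc{\BundleConstraints}.

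To prove the auxiliary observation, I would split into cases based on the conditional at line~\ref{li:noreadstart}, which tests whether $\readfrom[\langle \vkey,\,\vwlist_i.\flast \rangle]$ is empty. In the empty branch, the procedure returns the singleton $\{(\vwlist_i.\flast,\,\vwlist_j.\ffirst)\}$ (line~\ref{li:noread}), whose lone edge has destination $\vwlist_j.\ffirst$. In the non-empty branch, the procedure initializes $\edgeset$ to the empty set and, iterating over each reader $\vrtx \in \readfrom[\langle \vkey,\,\vwlist_i.\flast \rangle]$, adds exactly one edge $(\vrtx,\,\vwlist_j.\ffirst)$ per reader (line~\ref{li:wend}); each added edge again has destination $\vwlist_j.\ffirst$, and no other edges are ever inserted. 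As the two branches are exhaustive, every edge produced in either case shares the destination $\vwlist_j.\ffirst$, which proves the observation. This case analysis also exhibits the two admissible edge forms --- $(\vtail_i,\,\vhead_j)$ and $(\vrtx,\,\vhead_j)$ --- quoted in the statement.

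I would then specialize to \textsc{\BundleConstraints}: its returned \constraint $\langle A,\,B\rangle$ equals $\langle \edgeset_1,\,\edgeset_2\rangle$ (line~\ref{li:conset1set2}), where $\edgeset_1$ is the edge set that \textsc{\GenChainToChainEdges} computes from $(\vwlist_1,\,\vwlist_2,\dots)$ (line~\ref{li:c2c1}) and $\edgeset_2$ the one it computes from $(\vwlist_2,\,\vwlist_1,\dots)$ (line~\ref{li:c2c2}). Applying the observation to the first call ($i=1$, $j=2$) shows that every edge of $A=\edgeset_1$ points to $\vwlist_2.\ffirst = \vhead_2$; applying it to the second, argument-swapped call ($i=2$, $j=1$) shows that every edge of $B=\edgeset_2$ points to $\vwlist_1.\ffirst = \vhead_1$. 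This is exactly the claim.

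There is no genuine mathematical obstacle here; the only care required is bookkeeping around the argument swap between lines~\ref{li:c2c1} and~\ref{li:c2c2}. The key to keeping it straight is that the common destination of the produced edges is always the head of the \emph{second} chain argument, so exchanging the roles of $\vwlist_1$ and $\vwlist_2$ correctly flips that destination from $\vhead_2$ to $\vhead_1$. I would also flag that the claim constrains only the destinations of the edges and says nothing about their sources (which differ across the two branches); consequently the argument needs neither the chain structure of Lemma~\ref{lemma:longestchain} nor any property of the underlying history.
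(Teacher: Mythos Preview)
Your proposal is correct and follows the same approach as the paper: the Fact's justification is embedded in its own statement via direct inspection of \textsc{\GenChainToChainEdges} (lines~\ref{li:noread}, \ref{li:wmid}--\ref{li:wend}) and the argument swap in \textsc{\BundleConstraints} (lines~\ref{li:c2c1}--\ref{li:c2c2}), and your write-up is simply a more detailed spelling-out of that code inspection.
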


We construct graph $\hat{g}$ as follows:
first, let $\hat{g}$ be the known graph of $Q(h)$.
Then, for each constraint $\langle A,\,B \rangle$ in $Q(h)$, 
and letting $\vhead_1$ and $\vhead_2$ be defined as in
Fact~\ref{fact:constraint},
add $A$ to $\hat{g}$ if $\vhead_1 <_{\hat{s}} \vhead_2$,
and otherwise add $B$ to $\hat{g}$.
This process results in a directed graph $\hat{g}$.

Next, we show that
all edges in $\hat{g}$ are a subset of the total ordering in $\hat{s}$; this
implies $\hat{g}$ is acyclic.

First, the edges in the known graph (line~\ref{li:addwr} and~\ref{li:addrw})
are a subset of the total ordering given by $\hat{s}$.
Each edge added in line~\ref{li:addwr} represents
that the destination vertex
reads from the source vertex in $h$.
By Claim~\ref{claim:wrorder}, this ordering holds in $\hat{s}$.
As for the edges in line~\ref{li:addrw}, they are added to capture
the fact that a read
operation (in transaction \vrtx)
that reads from a write (in transaction $\vwlist[i]$) is sequenced before the
next write on the same key
(in transaction $\vwlist[i+1]$), an ordering that also holds in $\hat{s}$. (This
is known as an \emph{\rwrel} in the literature~\cite{adya99weak}.)
If this ordering doesn't hold in $\hat{s}$, then $\vwlist[i+1]
<_{\hat{s}} \vrtx$, and thus $\vwlist[i] <_{\hat{s}} \vwlist[i+1]
<_{\hat{s}} \vrtx$, which contradicts Claim~\ref{claim:nointervening}.

Second, consider the edges in $\hat{g}$ that come from
\constraints.
Take a \constraint $\langle A, B \rangle$ generated from \chains
$(\vwlist_1,\,\vwlist_2)$ on the same key. If $\vhead_1 <_{\hat{s}} \vhead_2$, then by
Fact~\ref{fact:constraint}
and construction of $\hat{g}$,
all added edges have the form
$(\vtail_1, \vhead_2)$ or $(\vrtx, \vhead_2)$, where
$\vrtx$ reads from $\vtail_1$.
By Claim~\ref{claim:headtail}, the source vertex
of these edges appears prior to $\vhead_2$ in $\hat{s}$;
thus, these edges respect the ordering in $\hat{s}$.
When 
$\vhead_2 <_{\hat{s}} \vhead_1$, 
the foregoing argument works the same, with appropriate relabeling.
Hence, all \constraint edges chosen in $\hat{g}$ are a subset of the
total ordering given by $\hat{s}$.
This completes the proof.
\end{proof}

\bigskip

\begin{theorem2}
If $Q(h)$ is acyclic, then the history $h$ is serializable.
\label{theorem:<=}
\end{theorem2}

\begin{proof}

Given that $Q(h)$ is acyclic, \sys accepts $h$.
Hence, by Lemma~\ref{lemma:algoreject}, $h$ is not \notwellformed.
And, by Lemma~\ref{lemma:longestchain}, each \chain (after
\textsc{\PruneConstraintsRMW}) is a sequence of consecutive writes.

Because $Q(h)$ is acyclic, there must exist an acyclic directed graph $q$ that is
compatible with $Q(h)$.

\begin{claim}
If $\vtx_i$ appears before $\vtx_j$ in a \chain $\vwlist_k$,
then graph $q$ has $\vtx_i \rightsquigarrow \vtx_j$.
\label{claim:wwinchain}
\end{claim}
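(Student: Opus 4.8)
The plan is to read the claim off the internal structure of a \chain: by Lemma~\ref{lemma:longestchain} the chain is a sequence of consecutive writes, and a sequence of consecutive writes forces a run of \emph{known} edges that any compatible graph must retain. So the whole argument reduces to composing known wr-edges.

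First I would fix notation. We are inside the proof of Theorem~\ref{theorem:<=}, where it has already been established that $h$ is not \notwellformed and hence, by Lemma~\ref{lemma:longestchain}, every \chain after \textsc{\PruneConstraintsRMW} is a sequence of consecutive writes. Write the chain as $\vwlist_k = [\vwlist_k[1], \ldots, \vwlist_k[n]]$ and suppose $\vtx_i = \vwlist_k[a]$ and $\vtx_j = \vwlist_k[b]$ with $a < b$ (``appears before'' means a strictly smaller position in the list, so such indices exist and the degenerate case $\vtx_i = \vtx_j$ does not arise). By Definition~\ref{def:consecutivewrites}, for each $p$ with $a \le p < b$ the transaction $\vwlist_k[p+1]$ reads key $k$ from $\vwlist_k[p]$.

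Second I would turn each reads-from relation into a concrete edge and then push it into $q$. By construction of \textsc{\CreateKnownGraph} (line~\ref{li:addwr}), whenever one transaction reads from another a wr-edge from the writer to the reader is added to the known graph $\vg$; hence $(\vwlist_k[p], \vwlist_k[p+1]) \in \vg$ for every $a \le p < b$. Because $q$ is compatible with $Q(h)$, it includes all edges of $\vg$, so all of these edges lie in $q$ as well. Concatenating them yields the walk $\vwlist_k[a] \to \vwlist_k[a+1] \to \cdots \to \vwlist_k[b]$ in $q$, i.e.\ a path $\vtx_i \rightsquigarrow \vtx_j$, which is exactly the claim.

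I do not expect a genuine obstacle here: this is precisely the head-to-tail reachability observation already used inside the proof of Lemma~\ref{lemma:longestchain}, generalized from the head/tail pair to an arbitrary ordered pair of the chain. The only points requiring care are (i) noting that ``before in the chain'' gives a strictly smaller index, so at least one consecutive edge exists, and (ii) explicitly invoking compatibility to guarantee that the relevant \emph{known} edges survive into $q$, rather than only the constraint-selected edges.
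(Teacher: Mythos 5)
Your proposal is correct and follows essentially the same route as the paper's proof: consecutive-writes structure (Lemma~\ref{lemma:longestchain}) gives a wr-edge (line~\ref{li:addwr}) between each adjacent pair in the chain, these edges compose into a path in the known graph, and compatibility of $q$ with $Q(h)$ carries that path into $q$. The added care about strict ordering of indices and explicitly invoking compatibility matches what the paper does implicitly.
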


\begin{proof}
Because $\vwlist_k$ is a sequence of consecutive writes,
a transaction $\vtx$ in $\vwlist_k$ reads from its immediate predecessor in $\vwlist_k$,
hence there is an edge in the known graph (generated by line~\ref{li:addwr})
from the predecessor to $\vtx$.
Because every pair of adjacent transactions in $\vwlist_k$ has such an
edge and $\vtx_i$ appears before $\vtx_j$ in $\vwlist_k$,
$\vtx_i \rightsquigarrow \vtx_j$ in $Q(h)$'s known graph.
As $q$ is compatible with $Q(h$),
such a path from $\vtx_i$ to $\vtx_j$ also exists in $q$.
\end{proof}

\begin{claim}
For any \chain $\vwlist_i$ (on a key $k$) and
any transaction $\vwtx_j \not\in \vwlist_i$ that writes to $k$,
graph $q$ has either:
(1) paths from $\vtail_i$ and transactions that read \vkey from $\vtail_i$ (if any)
to $\vwtx_j$,
or
(2) paths from $\vwtx_j$ to all the transactions in $\vwlist_i$.
\label{claim:wwtwochains}
\end{claim}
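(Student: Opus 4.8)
The plan is to exploit the constraint that $Q(h)$ places on the two distinct chains containing $\vwlist_i$ and $\vwtx_j$, and then propagate reachability along each chain using Claim~\ref{claim:wwinchain}. First I would note that since $\vwtx_j$ writes to $k$ and $\vwtx_j \notin \vwlist_i$, Claim~\ref{claim:onechain} places $\vwtx_j$ in exactly one chain on $k$, call it $\vwlist_j$, which is necessarily distinct from $\vwlist_i$. Because \textsc{\GenConstraint} emits a \constraint for every pair of distinct chains on the same key (lines~\ref{li:chainpairs}--\ref{li:genconstraint}), the pair $(\vwlist_i,\vwlist_j)$ yields a \constraint $\langle A, B\rangle$ via \textsc{\BundleConstraints}. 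By Fact~\ref{fact:constraint}, one of the two edge sets (say $A$) consists of edges pointing into $\vhead_j$ --- each of the form $(\vtail_i,\vhead_j)$ when no transaction reads $k$ from $\vtail_i$, or else $(\vrtx,\vhead_j)$ ranging over every $\vrtx$ that reads $k$ from $\vtail_i$ --- while the other set $B$ consists of the symmetric edges pointing into $\vhead_i$. Since $q$ is compatible with $Q(h)$, it contains exactly one of $A$, $B$.

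Next I would split on which edge set $q$ selected. If $q$ selects $A$, I aim for alternative (1). Here $\vhead_j \rightsquigarrow \vwtx_j$ holds in $q$ by Claim~\ref{claim:wwinchain}, since $\vhead_j$ precedes (or equals) $\vwtx_j$ in $\vwlist_j$. For the tail itself: if no transaction reads $k$ from $\vtail_i$, then $A$ supplies the edge $\vtail_i \to \vhead_j$ directly; otherwise each read $\vrtx$ of $\vtail_i$ has the known WR edge $\vtail_i \to \vrtx$ (line~\ref{li:addwr}) composed with the selected edge $\vrtx \to \vhead_j$, giving $\vtail_i \rightsquigarrow \vhead_j$. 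Composing with $\vhead_j \rightsquigarrow \vwtx_j$ yields paths from $\vtail_i$, and from every read-of-$\vtail_i$, to $\vwtx_j$, which is exactly alternative (1).

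Symmetrically, if $q$ selects $B$, I aim for alternative (2). Now $\vwtx_j \rightsquigarrow \vtail_j$ in $q$ by Claim~\ref{claim:wwinchain}, since $\vwtx_j$ precedes (or equals) the tail $\vtail_j$ of $\vwlist_j$. The selected edges then give $\vtail_j \rightsquigarrow \vhead_i$ (either directly through $\vtail_j \to \vhead_i$, or through $\vtail_j \to \vrtx \to \vhead_i$ for a read $\vrtx$ of $\vtail_j$ reached by the WR edge $\vtail_j \to \vrtx$), hence $\vwtx_j \rightsquigarrow \vhead_i$. Finally $\vhead_i \rightsquigarrow$ every transaction of $\vwlist_i$, again by Claim~\ref{claim:wwinchain}, delivering alternative (2).

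The main bookkeeping obstacle will be the two layers of ambiguity to tame: the unordered pair $\{\vwlist_i,\vwlist_j\}$ may have been handed to \textsc{\BundleConstraints} in either order, and \textsc{\GenChainToChainEdges} internally branches on whether reads of the tail exist. The ordering ambiguity is absorbed by observing that, whichever way the call was made, one of the two selectable edge sets points into $\vhead_j$ (forcing alternative (1)) and the other points into $\vhead_i$ (forcing alternative (2)); the read-present/read-absent branch is handled uniformly by the same maneuver of gluing a WR edge into a read onto the selected \constraint edge out of that read. The actual content is thus just verifying that, in each branch, the selected edge set spliced with Claim~\ref{claim:wwinchain}'s intra-chain reachability reaches the intended endpoint.
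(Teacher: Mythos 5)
Your proof is correct and follows essentially the same route as the paper's: identify the distinct chain $\vwlist_j$ containing $\vwtx_j$ via Claim~\ref{claim:onechain}, use the constraint $\langle A,B\rangle$ generated for the pair of chains (with the structure given by Fact~\ref{fact:constraint}), splice the selected edge set with the WR edges into reads of the tail to reach the opposite head, and then extend within each chain by Claim~\ref{claim:wwinchain}. The only cosmetic difference is that you absorb the boundary cases $\vwtx_j=\vhead_j$ and $\vwtx_j=\vtail_j$ into the phrase ``precedes (or equals),'' which the paper handles as an explicit remark at the end.
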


\begin{proof}
  Call the \chain that $\vwtx_j$ is in $\vwlist_j$.
  By Claim~\ref{claim:onechain},
  $\vwlist_j$ exists and $\vwlist_j \neq \vwlist_i$.

  For $\vwlist_i$ and $\vwlist_j$,
  $Q(h)$ has a \constraint $\langle A,\,B \rangle$
  that is generated from them (line~\ref{li:genconstraint}).
  This is because $\vwlist_i$ and $\vwlist_j$ touch the same key $k$,
  and \sys's algorithm creates one \constraint for every
  pair of \chains on the same key (line~\ref{li:chainpairs}).
  (We assume $\vwlist_i$ is the first argument of function
  \textsc{\BundleConstraints} and $\vwlist_j$ is the second.)

  First, we argue that the edges in edge set $A$
  establish $\vtail_i \rightsquigarrow \vhead_j$
  and $\vrtx \rightsquigarrow \vhead_j$ ($\vrtx$ reads $k$ from $\vtail_i$)
  in the known graph;
  and $B$ establishes $\vtail_j \rightsquigarrow \vhead_i$.
  Consider edge set $A$. There are two cases: (i)
  there are reads $\vrtx$ reading from $\vtail_i$, and (ii) there is no such read.
  In case (i),
  the algorithm adds $\vrtx \to \vhead_j$ for every $\vrtx$ reading-from $\vtail_i$
  (line~\ref{li:wmid}--\ref{li:wend}).
  And $\vrtx \to \vhead_j$ together with
  the edge $\vtail_i \to \vrtx$ (added in line~\ref{li:addwr})
  establish $\vtail_i \rightsquigarrow \vhead_j$.
  In case (ii),
  \sys's algorithm adds an edge $\vtail_i \to \vhead_j$ to $A$
  (line~\ref{li:noread}), and there is no $\vrtx$ in this case.
  Similarly, by switching $i$ and $j$
  in the above reasoning (except we don't care about the reads in this case),
  edges in $B$ establish $\vtail_j \rightsquigarrow \vhead_i$.

  Second, because $q$ is compatible with $Q(h)$,
  it either (1) contains $A$:
  \begin{align*}
    \vtail_i/\vrtx &\rightsquigarrow \vhead_j &\text{[proved in the first half]}\\
                   &\rightsquigarrow \vwtx_j  &\text{[Claim~\ref{claim:wwinchain}; $\vwtx_j \in \vwlist_j$]}
  \end{align*}
  or else (2) contains $B$:
  \begin{align*}
    \vwtx_j  & \rightsquigarrow \vtail_j  &\text{[Claim~\ref{claim:wwinchain}; $\vwtx_j \in \vwlist_j$]}\\
             & \rightsquigarrow \vhead_i &\text{[proved in the first half]}\\
             &\rightsquigarrow \vtx  &\text{[Claim~\ref{claim:wwinchain}; $\vtx \in \vwlist_i$]}
  \end{align*}
    The
  argument still holds if $\vwtx_j = \vhead_j$ in case (1): remove the
  second step in (1). Likewise, if
  $\vwtx_j = \vtail_j$ in case (2), remove the first step in (2). 
\end{proof}

\begin{claim}
For any pair of transactions $(\vwtx,\,\vrtx)$
where $\vrtx$ reads a key $k$ from $\vwtx$
and any other transaction $\vwtx'$ that writes to $k$,
graph $q$ has either $\vwtx' \rightsquigarrow \vwtx$ or $\vrtx \rightsquigarrow \vwtx'$.
\label{claim:wrandw}
\end{claim}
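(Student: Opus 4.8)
The idea is to read the claim off the \chain structure that the compatible acyclic graph $q$ already respects, leaning on Claims~\ref{claim:onechain}, \ref{claim:wwinchain}, and~\ref{claim:wwtwochains}. First I would fix the unique \chain on $k$ containing the writer $\vwtx$, call it $\vwlist_i$ (unique by Claim~\ref{claim:onechain}), and similarly note that $\vwtx'$, being a writer of $k$, sits in its own unique \chain on $k$. The argument then splits on whether $\vwtx'$ shares $\vwlist_i$ with $\vwtx$.

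I would first isolate a small bridging fact that gets used repeatedly: unless $\vwtx$ is the tail of $\vwlist_i$, it has an immediate successor $\vtx_s$ in $\vwlist_i$, and $q$ then contains a path $\vrtx \rightsquigarrow \vtx_s$. This holds because, $\vwlist_i$ being a sequence of consecutive writes (Lemma~\ref{lemma:longestchain}), \InferRWEdges inserts the \rwrel edge $\vrtx \to \vtx_s$ (line~\ref{li:addrw}) whenever $\vrtx \ne \vtx_s$, while if $\vrtx = \vtx_s$ the path is trivial (the reader is itself the successor write, an RMW). When $\vwtx$ \emph{is} the tail, I instead record that $\vrtx$ reads $k$ from $\vtail_i$.

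For the case $\vwtx' \notin \vwlist_i$, I would apply Claim~\ref{claim:wwtwochains} to $\vwlist_i$ and $\vwtx'$. Its second alternative gives $\vwtx' \rightsquigarrow \vwtx$ (a path to every member of $\vwlist_i$), i.e.\ the first disjunct of the goal. Its first alternative gives $\vtail_i \rightsquigarrow \vwtx'$ together with a path to $\vwtx'$ from every reader of $\vtail_i$; combining this with the bridging fact (either $\vrtx \rightsquigarrow \vtx_s \rightsquigarrow \vtail_i \rightsquigarrow \vwtx'$ via Claim~\ref{claim:wwinchain}, or, when $\vwtx = \vtail_i$, the tail-reader clause applied directly to $\vrtx$) yields $\vrtx \rightsquigarrow \vwtx'$, the second disjunct. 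For the case $\vwtx' \in \vwlist_i$, I would compare positions within the \chain: if $\vwtx'$ precedes $\vwtx$, Claim~\ref{claim:wwinchain} gives $\vwtx' \rightsquigarrow \vwtx$; if $\vwtx'$ follows $\vwtx$, then $\vtx_s$ exists and lies at or before $\vwtx'$, so $\vrtx \rightsquigarrow \vtx_s \rightsquigarrow \vwtx'$ by the bridging fact and Claim~\ref{claim:wwinchain}. Either disjunct suffices.

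The main obstacle I anticipate is precisely this bridging step. Claim~\ref{claim:wwtwochains} is stated in terms of the \emph{tail} of $\vwlist_i$ and the transactions reading from the tail, whereas $\vrtx$ may read from a write in the \emph{interior} of $\vwlist_i$; threading the \rwrel edge produced by \InferRWEdges through the within-\chain reachability of Claim~\ref{claim:wwinchain} is what connects the interior read back to the tail (or to $\vwtx'$). The remaining care is in the degenerate sub-cases---$\vrtx$ coinciding with the successor write $\vtx_s$, and using the ``other transaction'' hypotheses $\vwtx' \ne \vwtx$ and $\vwtx' \ne \vrtx$---to ensure the composed paths are genuine and do not collapse into forbidden self-loops.
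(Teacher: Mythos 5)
Your proposal is correct and follows essentially the same route as the paper's proof: fix the unique \chain containing $\vwtx$ (Claim~\ref{claim:onechain}), split on whether $\vwtx'$ lies in that \chain, and discharge the cases with Claim~\ref{claim:wwinchain} for within-\chain paths, Claim~\ref{claim:wwtwochains} for the cross-\chain case, and the \InferRWEdges \rwrel edge to route $\vrtx$ to $\vwtx$'s successor. The only difference is organizational---your ``bridging fact'' merges the paper's separate sub-cases of $\vrtx$ inside versus outside the \chain (RMW reader versus external reader) into one statement, which is a harmless consolidation.
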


\begin{proof}

By Claim~\ref{claim:onechain},
$\vwtx$ must appear in some \chain $\vwlist_i$ on $k$.
Each of the three transactions ($\vwtx$, $\vrtx$, and $\vwtx'$)
has two possibilities relative to $\vwlist_i$:

\begin{myenumerate}
  \item $\vwtx$ is either the tail or non-tail of $\vwlist_i$.
  \item $\vrtx$ is either in $\vwlist_i$ or not.
  \item $\vwtx'$ is either in $\vwlist_i$ or not.
\end{myenumerate}

In the following, we enumerate all combinations
of the above possibilities and prove the claim
in all cases.

\begin{myitemize2}

\item $\vwtx = \vtail_i$.

Then, $\vrtx$ is not in $\vwlist_i$. (If $\vrtx$ is in $\vwlist_i$, its enclosing transaction
would have to be subsequent to $\vwtx$ in $\vwlist_i$, which is a
contradiction, since $\vwtx$ is last in the chain.)

  \begin{myitemize2}

    \item $\vwtx' \in \vwlist_i$.

      Because $\vwtx$ is the tail, $\vwtx'$ appears before $\vwtx$ in $\vwlist_i$.
      Thus, $\vwtx' \rightsquigarrow \vwtx$ in $q$ (Claim~\ref{claim:wwinchain}).

    \item $\vwtx' \not\in \vwlist_i$.

      By invoking Claim~\ref{claim:wwtwochains} for $\vwlist_i$ and
      $\vwtx'$,
      $q$ either has (1) paths from each read ($\vrtx$ is one of them)
      reading-from $\vtail_i$ ($=\vwtx$) to $\vwtx'$,
      therefore $\vrtx \rightsquigarrow \vwtx'$.
      Or else $q$ has (2) paths from $\vwtx'$
      to every transaction in $\vwlist_i$,
      and $\vwtx \in \vwlist_i$, thus $\vwtx' \rightsquigarrow \vwtx$.

  \end{myitemize2}

\item $\vwtx \not= \vtail_i \land \vwtx \in \vwlist_i$.

  \begin{myitemize2}

    \item $\vrtx \in \vwlist_i$.

      Because $\vwlist_i$ is a sequence of consecutive writes on $k$
      (Lemma~\ref{lemma:longestchain})
      and $\vrtx$ reads $k$ from $\vwtx$,
      $\vrtx$ is the \adjacentwrite of $\vwtx$. Therefore, $\vrtx$ appears
      immediately after $\vwtx$ in $\vwlist_i$.

      \begin{myitemize2}

        \item $\vwtx' \in \vwlist_i$.

          Because $\vrtx$ appears immediately after $\vwtx$ in $\vwlist_i$,
          $\vwtx'$ either appears before $\vwtx$ or after $\vrtx$.
          By Claim~\ref{claim:wwinchain}, there is
          either $\vwtx' \rightsquigarrow \vwtx$ or $\vrtx \rightsquigarrow \vwtx'$ in $q$.

        \item $\vwtx' \not\in \vwlist_i$.

          By invoking Claim~\ref{claim:wwtwochains} for $\vwlist_i$ and $\vwtx'$,
          $q$ either has (1) $\vtail_i \rightsquigarrow \vwtx'$,
          together with $\vrtx \rightsquigarrow \vtail_i$ (or $\vrtx = \vtail_i$) by Claim~\ref{claim:wwinchain},
          therefore $\vrtx \rightsquigarrow \vwtx'$.
          Or else $q$ has (2) $\vwtx' \rightsquigarrow \vwtx$
          ($\vwtx'$ has a path to every transaction in $\vwlist_i$, and $\vwtx \in \vwlist_i$).

      \end{myitemize2}

    \item $\vrtx \not\in \vwlist_i$.

      If $\vrtx \not\in \vwlist_i$, because of \textsc{\InferRWEdges}
      (line~\ref{li:rwrels}), $\vrtx$ has an edge (in the known graph, hence in $q$)
      to the transaction that immediately follows $\vwtx$ in $\vwlist_i$,
      denoted as $\vwtx^*$
      (and $\vwtx^*$ must exist because $\vwtx$ is not the tail of the \chain).

      \begin{myitemize2}

        \item $\vwtx' \in \vwlist_i$.

          Because $\vwtx^*$ appears immediately after $\vwtx$ in $\vwlist_i$,
          $\vwtx'$ either appears before $\vwtx$ or after $\vwtx^*$.
          By Claim~\ref{claim:wwinchain},
          $q$ has either $\vwtx' \rightsquigarrow \vwtx$
          or $\vwtx^* \rightsquigarrow \vwtx'$
          which, together with edge $\vrtx \to \vwtx^*$ from \textsc{\InferRWEdges},
          means $\vrtx \rightsquigarrow \vwtx'$.

        \item $\vwtx' \not\in \vwlist_i$.

          By invoking Claim~\ref{claim:wwtwochains} for $\vwlist_i$ and $\vwtx'$,
          $q$ has either (1) $\vtail_i \rightsquigarrow \vwtx'$
          which, together with $\vrtx \to \vwtx^*$ (from \textsc{\InferRWEdges})
          and $\vwtx^* \rightsquigarrow \vtail_i$ (Claim~\ref{claim:wwinchain}),
          means $\vrtx \rightsquigarrow \vwtx'$.
          Or else $q$ has (2) $\vwtx' \rightsquigarrow \vwtx$
          ($\vwtx'$ has a path to every transaction in $\vwlist_i$, and $\vwtx \in \vwlist_i$).
  \end{myitemize2}
  \end{myitemize2}
\end{myitemize2}
\end{proof}

By topologically sorting $q$, we get a serial schedule $\hat{s}$.
Next, we prove $h$ matches $\hat{s}$, hence $h$ is serializable (Definition~\ref{def:serializable}).

Since $h$ and $\hat{s}$ have the same set of transactions
(because $q$ has the same transactions as the known graph of $Q(h)$, and
thus also the same as $h$),
we need to prove
only that for every read that reads from a write in $h$,
the write is the most recent write to that read in $\hat{s}$.

First, for every pair of transactions $(\vwtx,\,\vrtx)$ such that $\vrtx$ reads a key $k$ from $\vwtx$
in $h$, $q$ has an edge $\vwtx \to \vrtx$
(added to the known graph in line~\ref{li:addwr}); thus
$\vrtx$ appears after $\vwtx$ in $\hat{s}$ (a topological sort of $q$).
Second, by invoking Claim~\ref{claim:wrandw} for $(\vwtx,\,\vrtx)$,
any other transaction writing to $k$ is either ``topologically prior'' to
$\vwtx$ or ``topologically subsequent'' to $\vrtx$.
This ensures that, the most recent write
of $\vrtx$'s read (to $k$) belongs to $\vwtx$ in $\hat{s}$,
hence $\vrtx$ reads the value of $k$ written by $\vwtx$ in $\hat{s}$ as it does in $h$.
This completes the proof.
\end{proof}

\newpage
\section{\fontsize{13}{13} \selectfont Garbage collection correctness proof}
\label{sec:appxb}

\begin{figure*}[p!]

\newcommand{\accept}{{\small\textsc{accept}}}
\newcommand{\reject}{{\small\textsc{reject}}}
\newcommand{\mtab}{\hspace{\algorithmicindent}}
\newcommand{\mmtab}{\mtab\mtab}
\newcommand{\mmmmtab}{\mmtab\mmtab}

\footnotesize
\rule{\linewidth}{.08em}

\begin{minipage}[t][][t]{.48\textwidth} 
\begin{algorithmic}[1]
  \Procedure{\VerifySerializability}{\t{history\_stream}}
  \State $i \gets 0$; $\vg \gets$ empty graph
  \State $\wwpairs \gets$ empty map $\{\langle \textrm{Key}, \textrm{Tx}\rangle \to \textrm{Tx}\}$
  \State $\readfrom \gets$ empty map $\{\langle \textrm{Key}, \textrm{Tx}\rangle \to
                                    \textrm{Set}\langle\textrm{Tx}\rangle\}$
  \While{\t{True}}
    \State $i \gets i + 1$
    \State $h_i \gets $ fetch a continuation from \t{history\_stream} \label{li2:historyinit}
    \State $\vg,\readfrom,\wwpairs \gets$
    \State \mmtab $\CreateKnownGraphTwo(\vg, \readfrom,\,\wwpairs,\,h_i)$ \label{li2:creategraph2}

    \State

    \State $\vg,\,\t{psccs} \gets \EncodeAndSolve(\vg,\,\readfrom,\,\wwpairs)$

    \State

    \State $\vg,\readfrom,\wwpairs \gets \GarbageCollection(\vg,\,\t{psccs})$
  \EndWhile

  \EndProcedure

  \State

  \Procedure{\CreateKnownGraphTwo}{\vg,\,\readfrom,\, \wwpairs,\,$h$}
    \For {transaction \vtx in $h$} 
        \State $\vg.\textrm{Nodes} \pluseq \vtx$
        \State $\vtx.\ffrozen = \t{False}$
        \For {read operation \rop in \vtx}
            \State \textbf{if} $\rop.\textrm{read\_from\_tx}$ not in $\vg$: \ \reject
            \State $\vg.\textrm{Edges} \pluseq (\rop.\textrm{read\_from\_tx},\ \vtx)$   \label{li2:edge1}
            \State $\readfrom[\langle \rop.\textrm{key},\,\rop.\textrm{read\_from\_tx} \rangle] \pluseq \vtx$
        \EndFor
        \For {all Keys $\vkey$ that are both read and written by $\vtx$}
           \State \rop $\gets$ the operation in $\vtx$ that reads $\vkey$
                \If {$\wwpairs[\langle \vkey,\,\rop.\textrm{read\_from\_tx} \rangle] \neq \t{null}$}
                  \State \reject
                \EndIf
                \State $\wwpairs[\langle \vkey,\,\rop.\textrm{read\_from\_tx} \rangle] \gets \vtx$
        \EndFor
    \EndFor
    \State
    \For {each client $c$}  \label{li2:clientstart}
      \State $\t{list}_c \gets$ an ordered list of trsanctions issued by $c$ in $\vg$
      \For {$i$ in $[0,\,\textrm{length}(\t{list}_c)-2]$}
        \State $\vg.\textrm{Edges} \pluseq (\t{list}_c[i],\t{list}_c[i+1])$ \label{li2:clientend} \label{li2:edge2}
      \EndFor
    \EndFor
    \State
    \State \Return $\vg,\,\readfrom,\,\wwpairs$
  \EndProcedure

  \State

  \Procedure{\EncodeAndSolve}{$\vg,\,\readfrom,\,\wwpairs$}
    \State $\vconstraints  \gets \GenConstraint(g,\,\readfrom,\,\wwpairs)$
    \State $\vconstraints,\,\vg \gets \PruneConstraintsTR(\vconstraints,\,\vg)$

    \State
    \State $\t{formula} \gets \EncodeSMT(\vconstraints,\,\vg)$
    \State $\t{solved} \gets $ MonoSAT.solve(\t{formula})
    \State \textbf{if} not \t{solved}: \reject \label{li2:reject1}

    \State
    \State $\t{psccs} \gets \GenIndependentClusters(\vconstraints,\,\vg)$
    \State \Return \vg, \t{psccs}
  \EndProcedure

  \State

  \Procedure{\GenIndependentClusters}{$\vconstraints,\,\vg$}
    \State $g' \gets g$
    \State $g'.\textrm{Edges} \gets g'.\textrm{Edges}\ \cup$ \{all edges in $\vconstraints$\}
    \State $\t{psccs} \gets \ComputeStronglyConnectedComponents(g')$
    \State \Return \t{psccs}
  \EndProcedure

  \State

  \Procedure{\GarbageCollection}{\vg, \t{psccs}}
    \State $\epochagree \gets \AssignEpoch(\vg)$
    \State $\SetFrozen(\vg,\,\epochagree)$
    \State $\SetObsolete(\vg,\, \epochagree)$
    \State $\SetRemovable(\vg,\, \t{psccs})$
    \State \Return $\SafeDeletionTwo(\vg,\,\readfrom,\,\wwpairs)$
  \EndProcedure

\algstore{linenum}
\end{algorithmic}

\end{minipage}
\hspace{1ex}
\begin{minipage}[t][][t]{.50\textwidth} 
\begin{algorithmic}[1]
\algrestore{linenum}

  \Procedure{\AssignEpoch}{\vg} \label{li2:defepoch}
  \State $\t{epoch\_num}  \gets 0$
  \State $\t{topo\_tx} \gets \textrm{TopologicalSort}(\vg)$

  \For {\vtx in \t{topo\_tx}}   \label{li2:startwfence}
    \If {\vtx writes to key \t{``EPOCH''}}
      \State $\vtx.\fepoch \gets \t{epoch\_num}$
      \State $\t{epoch\_num} \gets \t{epoch\_num} + 1$  \label{li2:endwfence}
    \EndIf
  \EndFor

  \For {\vtx in \t{topo\_tx}}  \label{li2:startrfence}
    \If {\vtx reads but not writes key \t{``EPOCH''}}
      \State $\vtx.\fepoch \gets \vtx.\textrm{read\_from\_tx}.\fepoch$ \label{li2:endrfence}
    \EndIf
  \EndFor

  \State

  \State $\epochagree \gets \inf$
  \For {each client $c$}  \label{li2:startnormal}
    \State $\t{list}_c \gets$ an ordered list of trsanctions issued by $c$ in $\vg$
    \State $\t{rlist}_c \gets$ reversed ordered list of $\t{list}_c$
    \State $\t{cur\_epoch} \gets \inf$
    \For {\vtx in $\t{rlist}_c$}
      \If {\vtx touches key \t{``EPOCH''}}
        \If {$\t{cur\_epoch} = \inf$}
          \State $\epochagree \gets \textrm{min}(\epochagree,\, \vtx.\fepoch)$ \label{li2:trackepochagree}
        \EndIf
        \State $\t{cur\_epoch} \gets$ $\vtx.\fepoch$
      \Else:
        \State $\vtx.\fepoch = (\t{cur\_epoch}=\inf\ ? \ \inf : \t{cur\_epoch} - 1)$ \label{li2:endnormal} \label{li2:assignnormal}
      \EndIf
    \EndFor
  \EndFor

  \State\Return $\epochagree$

  \EndProcedure

  \State

  \Procedure{\SetFrozen}{\vg, $\epochagree$} \label{li2:deffrozen}
    \State $\vfe \gets \epochagree - 2$

    \For {\vtx in \vg}
      \If {$\vtx.\fepoch \le \vfe$ \textbf{and}\\
        \mmtab\ \ all predecssors of \vtx in \vg have epoch $\le \vfe$}
        \State $\vtx.\ffrozen \gets \t{True}$
      \EndIf
    \EndFor
  \EndProcedure

  \State

  \Procedure{\SetObsolete}{\vg, $\epochagree$}
    \State $\vfe \gets \epochagree - 2$

    \For {\vtx in \vg}
    \If {$\vtx.\fepoch \le \vfe$}
      \State $\t{\obsolete} \gets \t{True}$
      \For {Key \vkey{} that $\vtx$ writes}
        \If {$\not\exists tx_j, \ s.t.\ \vtx \rightsquigarrow \vtx_j
             \land \vtx_j.\fepoch \le \vfe \land \vtx_j \textrm{ writes  } \vkey$}
        \State $\t{\obsolete} \gets \t{False}$; \textbf{break}
        \EndIf
      \EndFor
      \State $\vtx.\fobsolete \gets \t{\obsolete}$
    \EndIf
    \EndFor
  \EndProcedure

  \State

  \Procedure{\SetRemovable}{$\vg,\, \t{psccs}$}
  \For {\vtx in \vg}
    \If {$\vtx.\ffrozen = \t{True}$}
      \If {$\vtx.\fobsolete = \t{True}$ \textbf{or} \vtx is read-only}
        \State $\vtx.\fdelcandcand \gets \t{True}$
      \EndIf
    \EndIf
  \EndFor
  \For {\t{pscc} in \t{psccs}}
    \If {$\forall \vtx \in \t{pscc},\ \vtx.\fdelcandcand = \t{True}$}
      \For {\vtx in \t{pscc}}
        \State $\vtx.\fdelcand \gets \t{True}$
      \EndFor
    \EndIf
  \EndFor
  \EndProcedure

  \State

  \Procedure{\SafeDeletionTwo}{$\vg,\, \readfrom,\, \wwpairs$}
  \For {\vtx in \vg}
    \If {$\vtx.\fdelcand = \t{True}$ \textbf{and} $\vtx$ doesn't touch key \t{``EPOCH''}}
      \label{li2:realdelete} \label{li2:nodeletefence}
      \State $\vg.\textrm{Nodes} \subeq \vtx$
      \State $\vg.\textrm{Edges} \subeq \{\textrm{edges with \vtx as one endpoint}\}$
      \State $\readfrom \subeq \{\textrm{tuples containing \vtx}\}$
      \State $\wwpairs \subeq \{\textrm{tuples containing \vtx}\}$
    \EndIf
  \EndFor
  \State \Return $\vg,\readfrom,\wwpairs$
  \EndProcedure

\end{algorithmic}
\end{minipage}

\rule{\linewidth}{.08em}
\caption{\sys's algorithm for verification in rounds.}
\label{fig:algocodefull}
\end{figure*}
 
\subsection{Verification in rounds}

Besides the ``one-shot verification'' described in \S\ref{s:search}
and Appendix~\ref{sec:appxa},
\sys also works for online verification and does verification in rounds
(pseudocode is described in Figure~\ref{fig:algocodefull}).
In each round, \sys's verifier checks serializability on the transactions
that have been received.
In the following, we define terms used in
the context of verification in rounds:
\textit{complete history},
\textit{continuation},
\textit{\clientser}, and 
\textit{\depinfo}.

\heading{Complete history and continuation.}
A \textit{complete history} is a prerequisite of checking serializability.
If a history is incomplete and some of the transactions are unknown,
it is impossible to decide whether this history is serializable.

\begin{definition2}[Complete history]
A \emph{complete history} is a history where all read operations read from
the write operations in the same history.
\label{def:completehistory}
\end{definition2}

For verification in rounds, in each round,
\sys's verifier receives a set of transactions
that may read from the transactions in prior rounds.
We call such newly coming transactions a
continuation~\cite{hadzilacos89deleting}.

\begin{definition2}[Continuation]
A \emph{continuation} $r$ of a complete history $h$ is a set of transactions
in which all the read operations read from transactions in either $h$ or $r$.
\label{def:continuation}
\end{definition2}

We denote the combination of a complete history $h$ and its continuation $r$
as $h \circ r$. By Definition~\ref{def:completehistory}, $h \circ r$ is also a complete history.
Also, we call the transactions in future continuations of the current history as
  \textit{future transactions}.

In the following discussion, we assume that the transactions received in each
  round are continuations of the known history.
However, in practice, the received transactions may not form a complete history
and \sys's verifier has to adopt a preprocessing phase to filter out the
transactions whose predecessors are unknown and save them for future rounds
(for simplicity, such preprocessing is omitted in Figure~\ref{fig:algocodefull}
which should have happen in line~\ref{li2:historyinit}.)

\heading{\CF{\clientser}.}
As mentioned in \S\ref{s:fence},
transactions' serialization order in practice should respect their causality
which, in our context, is the transaction issuing order by clients (or session).
So, if a history satisfies serializability (Definition~\ref{def:serializable})
and the corresponding serial schedule preserves the transaction issuing order,
we say this history is \textit{\clientser}, defined below.

\begin{definition2}[\CF{\clientser} history]
A \emph{\clientser history}
is a history that matches a serial schedule $\hat{s}$,
such that $\hat{s}$ preserves the transaction issuing order for any client.
\label{def:clientser}
\end{definition2}

Notice that \sys requires that each client is single-threaded and blocking (\S\ref{s:background}).
So, for one client, its transaction issuing order is the order
seen by the corresponding history collector
(one client connects to one collector).
The verifier also knows such order by referring to the history fragments.

\heading{\CF{\depinfo}.}
In the following, we define a helper notion \textit{\depinfo}
that contains the information of a history that has parsed by \sys's algorithm.
An \depinfo $e$ of a history $h$ is a tuple $(\vg,\,\readfrom,\,\wwpairs)$
generated by \textsc{\CreateKnownGraphTwo}
in Figure~\ref{fig:algocodefull}, line~\ref{li2:creategraph2}.

Notice that in \sys's algorithm %
each round reuses the \depinfo $e=(\vg,\,\readfrom,\,\wwpairs)$
from the preceding round.
In the following, we use $\E(e, h)$ 
to represent
$\textsc{\CreateKnownGraphTwo}(\vg,\readfrom,\wwpairs,h)$.
And, we use $\E(h)$ as a shortened form of $\E(\emptyset,h)$.

\begin{fact}
For a complete history $h$ and its continuation $r$,
$\E(h \circ r) = \E(\E(h), r)$.
Because $\readfrom$ and $\wwpairs$ only depend on the information
carried by each transaction,
and this information is the same
no matter whether processing $h$ and $r$ together or separately.
For client ordering edges (Figure~\ref{fig:algocodefull},
line~\ref{li2:clientstart}--\ref{li2:clientend}),
since they are the ordering of transactions seen by the collectors,
the edges remain the same as well.

\label{fact:identicaldepinfo}
\end{fact}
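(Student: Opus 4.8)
The plan is to establish the equality of the two output triples componentwise. Since $\E$ returns $(\vg,\,\readfrom,\,\wwpairs)$, it suffices to show that the node set and edge set of $\vg$, the map $\readfrom$, and the map $\wwpairs$ each agree between the single-shot computation $\E(h \circ r)$ and the staged computation $\E(\E(h),\,r)$. The unifying observation I would lean on is that, apart from the client-ordering edges, everything \textsc{\CreateKnownGraphTwo} inserts is a deterministic function of the \emph{local} data of a single transaction --- its read operations and their $\textrm{read\_from\_tx}$ pointers, which are well-defined by the unique-value assumption (\S\ref{s:searchprelims}) --- and every insertion is performed by an accumulating, order-insensitive, idempotent operation ($\pluseq$ into a set or a map).

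First I would dispatch the pieces produced by the per-transaction loop. Writing $h \circ r$ as the disjoint union $h \cup r$, the outer loop visits exactly the same transactions whether they are fed jointly or across the two stages, and for each transaction $\vtx$ the emitted WR-edge $(\rop.\textrm{read\_from\_tx},\,\vtx)$, the \readfrom entry, and the \wwpairs entry are determined by $\vtx$ alone, independent of which other transactions have already been processed. Because $\pluseq$ is order-independent and idempotent, the accumulated $\vg.\textrm{Edges}$, \readfrom, and \wwpairs coincide across the two computations. I would also check that the two control-flow hazards behave identically: the RMW rejection fires exactly when some transaction has multiple \adjacentwrites on a key, a condition present in $h \circ r$ iff it is present when the same transactions are processed in stages; and the ``reject if $\rop.\textrm{read\_from\_tx}$ not in $\vg$'' guard never causes a spurious divergence, because $h$ is complete and $r$ is a continuation, so every read target lies within $h \circ r$.

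The step I expect to be the main obstacle is the client-ordering edges (Figure~\ref{fig:algocodefull}, lines~\ref{li2:clientstart}--\ref{li2:clientend}), as they are the one non-local ingredient and, crucially, are \emph{recomputed} from scratch on the current $\vg$ in every invocation. Here I would argue that for each client $c$ the ordered list of its issued transactions is a fixed property of the workload as recorded by the collectors (\S\ref{s:fence}) and so does not depend on batching; consequently the second invocation inside $\E(\E(h),\,r)$ ranges over all of $c$'s transactions present in the combined graph and adds precisely the consecutive-pair edges of the full list. By idempotence this already subsumes the prefix edges contributed during the first stage, and it is identical to the client-ordering edge set produced by $\E(h \circ r)$. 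Combining the agreement of all three components of the triple then yields $\E(h \circ r) = \E(\E(h),\,r)$.
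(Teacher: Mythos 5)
Your proposal is correct and follows essentially the same route as the paper's own justification: componentwise agreement, with $\readfrom$, $\wwpairs$, and the WR-edges determined locally by each transaction's own data (hence insensitive to whether $h$ and $r$ are processed jointly or in stages), and the client-ordering edges fixed by the per-client issue order recorded by the collectors. Your additional care about the rejection branches and the idempotence/order-insensitivity of the accumulating insertions only makes explicit what the paper leaves implicit.
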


\begin{definition2}[Deletion from an \depinfo]
A \emph{deletion} of a transaction $\vtx_i$ from an \depinfo $\E(h)$ is to
(1) delete the vertex $\vtx_i$ and edges containing $\vtx_i$
from the known graph $\vg$ in $\E(h)$; and
(2) delete tuples that include $\vtx_i$ from $\readfrom$ and $\wwpairs$.
\label{def:del}
\end{definition2}

We use $\E(h) \del \vtx_i$ to denote deleting $\vtx_i$
from \depinfo $\E(h)$.

\subsection{\CF{\polyg}, \mypolyg, pruned \polyg, and pruned \mypolyg}

Notice that an \depinfo contains all information from a history.
So, instead of building from a history,
both \polyg (\S\ref{subsec:bruteforce}) and \mypolyg (Definition~\ref{def:mypolyg})
can be built from an \depinfo.

Specifically, constructing a \polyg $(V,\,E,\,C)$
from an \depinfo $\E(h)$ works as follows (which is similar to what is in \S\ref{subsec:bruteforce}):
\begin{myitemize2}

  \item $V$ are all vertices in $\E(h).\vg$.

  \item $E = \{(\vtx_i,\,\vtx_j)\ |\ \langle \,\textrm{\_},\,\vtx_i,\,\vtx_j\, \rangle \in \E(h).\readfrom \}$;
  that is, $\vtx_i \wrarrow{x} \vtx_j$, for some $x$.

  \item $C = \{ \langle\,(\vtx_j,\, \vtx_k),\  (\vtx_k,\, \vtx_i)\, \rangle \mid
    (\vtx_i \wrarrow{x} \vtx_j)\, \land\,\\ 
    \hspace*{3em}(\vtx_k\textrm{\ writes to $x$}) \land \vtx_k \ne \vtx_i \land \vtx_k \ne \vtx_j\}$.
\end{myitemize2}
We denote the \polyg generated from \depinfo $\E(h)$ as $\Po(\E(h))$.

Since constructing an \depinfo is part of \sys's algorithm,
it is natural to construct a \mypolyg from an \depinfo,
which works as follow:
assign \mypolyg's known graph to be $\E(h).\vg$
and generate constraints by invoking
\[
\textsc{\GenConstraint}(\E(h).\vg,\,\E(h).\readfrom,\,\E(h).\wwpairs).
\]
We denote the \mypolyg generated from \depinfo $\E(h)$ as $\Qo(\E(h))$.

In order to test \clientserty, %
we add cliens' transaction issuing order to \polyg and \mypolyg
by inserting edges for transactions that are issued by
the same client. We call such edges \textit{client ordering edges} (short as CO-edges).
For each client, these CO-edges
point from one transaction to its immediate next transaction
(Figure~\ref{fig:algocodefull}, line~\ref{li2:clientstart}--\ref{li2:clientend}).

\begin{lemma2}
For a serializable history $h$,
a serial schedule that $h$ matches is some topological sort of an acyclic graph
that is compatible with the \polyg without CO-edges
(and \mypolyg without CO-edges)
of $h$; and topological sorting
an acyclic graph that is compatible with the \polyg without CO-edges
(and \mypolyg without CO-edges) of $h$
results in a serial schedules that $h$ matches.
\label{lemma:toposortschedule}
\end{lemma2}

\begin{proof}

First, we prove the Lemma for \polyg (then later \mypolyg).
In Papadimitriou's proof~\cite{papadimitriou79serializability} (\S3, Lemma~2),
when proving that $h$ is serializable $\Rightarrow$ \polyg is acyclic,
the proof constructs an acyclic compatible graph
according to a serial schedule, which means that this serial schedule
is a topological sort of the constructed comptible graph.
On the other hand, when proving that \polyg is acyclic $\Rightarrow$ $h$ is serializable,
the proof gets the serial schedule from topological sorting an ayclic compatible graph.

Similarly, for \mypolyg, in Thoerem~\ref{theorem:=>} (Appendix~\ref{sec:appxa}),
the proof constructs an acyclic compatible graph from a serial schedule;
and in Thoerem~\ref{theorem:<=}, the proof topologically sorts an acyclic compatible graph
to generate a serial schedule.
\end{proof}

In the following, all {\polyg}s $\Po(\E(h))$ and {\mypolyg}s $\Qo(\E(h))$
include client ordering edges by default.

\begin{lemma2}
Given a complete history $h$ and its \depinfo $\E(h)$,
the following logical expressions are equivalent:
\begin{myenumerate}[  {(}1{)}]
 \item history $h$ is \clientser.
 \item \polyg $\Po(\E(h))$ is acyclic.
 \item \mypolyg $\Qo(\E(h))$ is acyclic.
\end{myenumerate}
\label{lemma:polygraphser}
\end{lemma2}

\begin{proof}
First, we prove that $(1) \iff (2)$.

$(1) \implies (2)$:
Because $h$ is \clientser, there exists a serial schedule $\hat{s}$ that
$h$ matches and preserves the transaction issuing order of clients.
By Lemma~\ref{lemma:toposortschedule}, $\hat{s}$ is one of the topological sorts of some graph
$\hat{g}$ that is compatible with the \polyg without client ordering edges.
By adding client ordering edges to $\hat{g}$, we get $\hat{g+}$.
Graph $\hat{g+}$ is still compatible with $\Po(\E(h))$,
because edges of $\hat{g+}$ are a subset of the total ordering of $\hat{s}$
($\hat{s}$ preserves the clients' transaction issuing order).
Thus, $\hat{g+}$ is acyclic, hence $\Po(\E(h))$ is also acyclic.

$(2) \implies (1)$:
Because \polyg $\Po(\E(h))$ is acyclic, there exists a compatible graph
$\hat{g+}$ that is acyclic.
By removing all the client ordering edges
from $\hat{g+}$, we have $\hat{g}$ which
is compatible with the \polyg without client ordering edges.
Because $\hat{g}$ has the same nodes but less edges than $\hat{g+}$,
a topological sort $\hat{s}$ of $\hat{g+}$ is also a topological sort of $\hat{g}$.
By Lemma~\ref{lemma:toposortschedule}, $\hat{s}$ is a serial schedule that $h$ matches.
Because $\hat{g+}$ has client ordering edges,
$\hat{s}$ preserves transaction issuing order of clients,
hence $h$ is \clientser.

Similarly, we can prove $(1) \iff (3)$ by replacing \polyg (with and without
client ordering edges) with \mypolyg (with and without client ordering edges).
\end{proof}

\heading{Pruned (\sys) \polyg.}
Given a \mypolyg $\Qo(\E(h)) = (\vg,\,\vconstraints)$,
we call the \mypolyg after invoking $\textsc{\PruneConstraintsTR}(\vconstraints,\,\vg)$
(Figure~\ref{fig:algocodemain}, line~\ref{li:prunefunc}) as a \textit{pruned
\mypolyg}, denoted as $\Qp(\E(h))$.
Similarly, if we treat a constraint in a \polyg
(for example $\langle \vtx_i \to \vtx_j, \vtx_j \to \vtx_k \rangle$)
as a constraint in \mypolyg but with each edge set having only one edge
($\langle \{\vtx_i \to \vtx_j\}, \{\vtx_j \to \vtx_k\} \rangle$),
then we can apply $\textsc{\PruneConstraintsTR}$ to
a \polyg $\Po(\E(h))$
and get a \textit{pruned \polyg}, denoted as $\Pp(\E(h))$.

Note that $\Qo(\E(h))$ and $\Qp(\E(h))$ are
what \sys's algorithm actually creates;
$\Po(\E(h))$ and $\Pp(\E(h))$ are helper notions
for the proof only---they are not actually materialized.

\begin{lemma2}
$\Qo(\E(h))$ is acyclic $\iff$ $\Qp(\E(h))$ is acyclic,
and
$\Po(\E(h))$ is acyclic $\iff$ $\Pp(\E(h))$ is acyclic.
\label{lemma:prunedq}
\end{lemma2}

\begin{proof}
First, we prove $\Qo(\E(h))$ is acyclic $\iff$ $\Qp(\E(h))$ is acyclic.

\vspace{1ex}
\noindent
``$\Rightarrow$''.
To begin with, we prove that pruning one constraint $\langle A,B \rangle$
from $\Qo(\E(h))$ does not affect the acyclicity of the remaining \mypolyg.
If so, pruning multiple constraints on an acyclic \mypolyg still results in
an acyclic \mypolyg.

Now, consider the constraint $\langle A,B \rangle$ ($A$
and $B$ are edge sets) that has been pruned in $\Qo(\E(h))$,
and assume it gets pruned because of an edge
$(\vtx_i,\vtx_j) \in A$ such that $\vtx_j \rightsquigarrow \vtx_i$ in the known graph.

Because $\Qo(\E(h))$ is acyclic, there exists a compatible graph $\hat{g}$ that is acyclic.
For the binary choice of $\langle A,B \rangle$,
$\hat{g}$ must choose $B$;
otherwise $\hat{g}$ would have a cycle due to the edge $(\vtx_i,\vtx_j)$ in $A$
and $\vtx_j \rightsquigarrow \vtx_i$ in the known graph.
And, \textsc{\PruneConstraintsTR} (Figure~\ref{fig:algocodemain}, line~\ref{li:beginprune}--\ref{li:endprune})
does the same thing---add edges in $B$ to $\Qp(\E(h))$'s known graph,
when the algorithm detects edges in $A$ conflict with the known graph.
Hence, $\hat{g}$ is compatible with $\Qp(\E(h))$ and $\Qp(\E(h))$ is acyclic.

\vspace{1ex}
\noindent
``$\Leftarrow$''.
Because $\Qp(\E(h))$ is acyclic, there exists a compatible graph $\hat{g'}$ that is acyclic.
Consider all constraints in $\Qo(\E(h))$: for the pruned constraints,
$\hat{g'}$ contains edges from one of the two edge sets in the constraint;
for those constraints that is not pruned, $\Qp(\E(h))$ has them and
$\hat{g'}$ selects one edge set from each of them ($\hat{g'}$ is compatible with $\Qp(\E(h))$).
Thus, $\hat{g'}$ contains one edge set from all constraints in $\Qo(\E(h))$,
so it is compatible with $\Qo(\E(h))$.
Plus, $\hat{g'}$ is acyclic, hence $\Qo(\E(h))$ is acyclic.

Now we prove that $\Po(\E(h))$ is acyclic $\Leftrightarrow$ $\Pp(\E(h))$ is acyclic.
Because the constraint in a \polyg is a specialization of the constraint in a \mypolyg
(each edge set only contains one edge),
the above argument is still true
by replacing $\Qo(\E(h))$, $\Qp(\E(h))$ to $\Po(\E(h))$, $\Pp(\E(h))$ respectively.
\end{proof}

\begin{lemma2}
Given a history that is \clientser,
for any two transactions $\vtx_i$ and $\vtx_j$,
$\vtx_i \rightsquigarrow \vtx_j$ in the known graph of $\Pp(\E(h))$
$\iff$
$\vtx_i \rightsquigarrow \vtx_j$ in the known graph of $\Qp(\E(h))$
\label{lemma:sameknowngraph}
\end{lemma2}

\begin{proof} Because $h$ is \wellformed, $\E(h)$'s known graph is acyclic
and both $\Pp(\E(h))$ and $\Qp(\E(h))$'s known graphs
contain edges in $\E(h)$'s known graph.

\vspace{1ex}
\noindent
``$\Rightarrow$''.
We prove that for any edge $\vtx_a \to \vtx_b$ in path $\vtx_i \rightsquigarrow \vtx_j$
of $\Pp(\E(h))$ ($\vtx_a$ might be $\vtx_i$ and $\vtx_b$ might be $\vtx_j$),
there always exists $\vtx_a \rightsquigarrow \vtx_b$ in $\Qp(\E(h))$.
In $\Pp(\E(h))$ and $\Qp(\E(h))$'s known graph,
there are four types of edges.
Three of them---reading-from edges (Figure~\ref{fig:algocodefull}, line~\ref{li2:edge1}),
anti-dependency edge (Figure~\ref{fig:algocodemain}, line~\ref{li:addrw}),
and client order edges (Figure~\ref{fig:algocodefull}, line~\ref{li2:edge2})---are
captured by $\E(h)$'s known graph which shared by both $\Pp(\E(h))$ and $\Qp(\E(h))$.
Hence, if $\Pp(\E(h))$'s known graph has $\vtx_a \to \vtx_b$, $\Qp(\E(h))$
also has it.

Next, we prove that when edge $\vtx_a \to \vtx_b$ is the last type: edges added by
\textsc{\PruneConstraintsTR} (Figure~\ref{fig:algocodemain},
line~\ref{li:addprune1},\ref{li:addprune2}) in $\Pp(\E(h))$,
$Qp(\E(h))$'s known graph also has $\vtx_a \rightsquigarrow \vtx_b$.

Consider a constraint in $\Pp(\E(h))$ is $\langle \rone \to \wtwo, \wtwo \to \wone \rangle$
where $\wone$ and $\wtwo$ writes to the same key; $\rone$ reads this key from $\wone$.
For $\wone$ and $\wtwo$ in $\Qp(\E(h))$, because they write the same key,
they are either (1) in the same \chain, or else (2) $\wtwo$ and $\wtwo$ belong
to two \chains and there is a constraint about these two \chains.

Given $\vtx_a \to \vtx_b$ is added by pruning a constraint in $\Pp(\E(h))$,
there are two possibilities:
\begin{myitemize2}

\item $\vtx_a \to \vtx_b$ is $\rone \to \wtwo$, which means $\wone \rightsquigarrow \wtwo$
  (otherwise, the constraint would not be pruned). In $\Qp(\E(h))$, for above case (1),
  because $\wone \rightsquigarrow \wtwo$, $\rone$ reading from $\wone$
  has paths to the successive transactions (including $\wtwo$) in the \chain;
  for (2), because $\wone \rightsquigarrow \wtwo$, this constraint
  would be pruned and $\rone \rightsquigarrow \wtwo$.

\item $\vtx_a \to \vtx_b$ is $\wtwo \to \wone$, which means $\wtwo \rightsquigarrow \rone$.
  In $\Qp(\E(h))$, for (1), $\wtwo$ must appear earlier than $\wone$ in the
  \chain (hence $\wtwo \rightsquigarrow \rone$), because otherwise $\rone
  \rightsquigarrow \wtwo$, a contradiction;
  for (2), because $\wtwo \rightsquigarrow \rone$, the tail of \chain that $\wtwo$ is in
  has a path to the head of $\wone$'s \chain, hence $\wtwo \rightsquigarrow \wone$ in $\Qp(\E(h))$.

\end{myitemize2}

\vspace{1ex}
\noindent
``$\Leftarrow$''. Similarly, by swapping $\Pp(\E(h))$ and $\Qp(\E(h))$ in the above argument,
we need to prove that given a pruned constraint $\langle A,B \rangle$ in $\Qp(\E(h))$
which contains $\vtx_a \to \vtx_b$, there exists $\vtx_a \rightsquigarrow \vtx_b$
in $\Pp(\E(h))$.

Again, for a constraint $\langle A,B \rangle$ about two \chains $\vwlist_i$ and
$\vwlist_j$ in $\Qp(\E(h))$ ($\vhead_i$/$\vtail_i$ is the head/tail of
$\vwlist_i$; $\vrtx_i$ is a read transaction reads from $\vtail_i$).
There are two possibilities:
\begin{myitemize2}

\item $\vtx_a \to \vtx_b$ is $\vrtx_i \to \vhead_j$, which means $\vhead_i
\rightsquigarrow \vtail_j$. Consider the constraint $\langle \vrtx_i \to
\vhead_j, \vhead_j \to \vtail_i \rangle$ in $\Pp(\E(h))$.
Given that $h$ is serializable, two \chains must be schedule sequentially
and cannot overlap,
hence $\vtail_i \rightsquigarrow \vhead_j$.
Then, this constraint in $\Pp(\E(h))$ would be pruned and there is an edge
$\vrtx_i \to \vhead_j$.

\item $\vtx_a \to \vtx_b$ is $\vtail_i \to \vhead_j$, which means
$\vhead_i \rightsquigarrow \vtail_j$.
Call the second last transaction in $\vwlist_i$ $\vtx_k$.
Consider the constraint $\langle \vtail_i \to \vhead_j, \vhead_j \to \vtx_k \rangle$
($\vtail_i$ reads from $\vtx_k$).
Again, because $h$ is serializable, two \chains cannot overlap,
and $\vtx_k \rightsquigarrow \vhead_j$.
Thus, the constraint is pruned and $\Pp(\E(h))$ has $\vtail_i \to \vhead_j$.

\end{myitemize2}

\end{proof}

\heading{(Extended) \notwellformed history.}
Given that a \mypolyg $\Qo(\E(h))$ and a pruned \mypolyg $\Qp(\E(h))$ are equivalent in acyclicity,
we extend the definition of an \notwellformed history (Definition~\ref{def:notwellformed})
to use $\Qp(\E(h))$ which rules out more local malformations
that are not \clientser.

\begin{definition2}[An \notwellformed history]
\emph{An \notwellformed history} $h$ is a history that
either (1) contains a transaction that has multiple \adjacentwrites on one key,
or (2) has a cyclic known graph $\vg$ in $\Qp(\E(h))$.
\label{def:notwellformed2}
\end{definition2}

\begin{corollary2}
\sys rejects (extended) \notwellformed histories.
\label{corollary:algoreject}
\end{corollary2}

\begin{proof}
By Lemma~\ref{lemma:algoreject}, \sys rejects a history when
(1) it contains a transaction that has multiple \adjacentwrites one one key;
(2) If the known graph has a cycle in the known graph of $\Qp(\E(h))$,
\sys detects and rejects this history when checking acyclicity in the
constraint solver.
\end{proof}

\subsection{\CF{\polyscc}}
In this section, we define \textit{{\polyscc}s} (short as \pscc) which capture
the possible cycles that are generated from constraints.
Intuitively, if two transactions appears in one \pscc,
it is possible (but not certian) there are cycles between them;
but if these two transactions do not belong to the same \pscc,
it is impossible to have a cycle including both transactions.

\begin{definition2}[\CF{\polyscc}]
Given a history $h$ and its pruned \mypolyg $\Qp(\E(h))$,
the \emph{\polysccs} are the strongly connected components
of a directed graph that is the known graph
with all edges in the constraints added to it.
\label{def:polyscc}
\end{definition2}

\begin{lemma2}
In a history $h$ that is \wellformed,
for any two transactions $\vtx_i$ and $\vtx_j$ writing the same key,
if $\vtx_i \not\rightsquigarrow \vtx_j$ and
$\vtx_j \not\rightsquigarrow \vtx_i$ in the known graph of $\Qp(\E(h))$,
then $\vtx_i$ and $\vtx_j$ are in the same \pscc.
\label{lemma:samepscc}
\end{lemma2}

\begin{proof}
By Claim~\ref{claim:onechain}, each of $\vtx_i$ and $\vtx_j$ appears and only appears in one \chain
(say $\vwlist_i$ and $\vwlist_j$ respectively).
Because $\vtx_i \not\rightsquigarrow \vtx_j$ and
$\vtx_j \not\rightsquigarrow \vtx_i$, $\vwlist_i \neq \vwlist_j$.
\sys's algorithm generates a constraint for every pair of \chains on the same key
(Figure~\ref{fig:algocodemain}, line~\ref{li:chainpairs}),
so there is a constraint $\langle A,B \rangle$ for $\vwlist_i$ and $\vwlist_j$,
which includes $\vtx_i$ and $\vtx_j$.

Consider this constraint $\langle A,B \rangle$.
One of the two edge sets ($A$ and $B$) contains edges that establish a path
from the tail of $\vwlist_i$ to the head of $\vwlist_j$---either a direct edge
(Figure~\ref{fig:algocodemain}, line~\ref{li:noread}), or through a read
transaction that reads from the tail of $\vwlist_i$
(Figure~\ref{fig:algocodemain}, line~\ref{li:wend}).
Similarly, the other edge set establishes a path from the tail of $\vwlist_j$
to the head of $\vwlist_i$.
In addition, by Lemma~\ref{lemma:longestchain}, in each \chain, there is a path from
its head to its tail through the reading-from edges in the known graph
(Figure~\ref{fig:algocodemain}, line~\ref{li:addwr}).
Thus, there is a cycle involving all transactions of these two \chains.
By Definition~\ref{def:polyscc}, all the transactions in these two
\chains---including $\vtx_i$ and $\vtx_j$---are in one \pscc.
\end{proof}

\subsection{Fence transaction, epoch, \obsolete transaction, and frozen transaction}
\label{appxb:fence}

The challenge of garbage collecting transactions is that \clientserty
does not respect real-time ordering across clients and it is unclear to the verifier which
transactions can be safely deleted from the history (\S\ref{s:truncationhard}).
\sys uses \textit{fence transactions} and \textit{epochs}
which generate \textit{\obsolete transactions} and \textit{frozen transactions}
that address this challenge.

\heading{Fence transactions.}
As defined in \S\ref{s:fence}, \textit{fence transactions} are predefined
transactions that are periodically issued by each client and access a
predefined key called the \textit{epoch key}.
Based on the value read from the epoch key, a fence transaction
is either a \textit{write fence transaction} (short as \wfence) or a
\textit{read fence transaction} (short as \rfence):
\wfences read-and-modify the epoch key; and \rfences only read the epoch key.
In a complete history, we define that the fence transactions are \textit{\wellff} as follows.

\begin{definition2}[\CF{\wellff} fence transactions]
In a history, fence transactions are \emph{\wellff} when
(1) all write fence transactions are a sequence of consecutive writes to the
epoch key; and (2) all read fence transactions read from known write fence
transactions.
\label{def:wellfformedfence}
\end{definition2}

\begin{claim}
Given a complete history $h$ that is \wellformed,
fence transactions in $h$ are \wellff.
\label{claim:wellfformedfence}
\end{claim}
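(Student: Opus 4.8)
The plan is to verify the two clauses of Definition~\ref{def:wellfformedfence} separately, leaning on the chain machinery already established for arbitrary keys. Two facts are used throughout: the epoch key is dedicated to fence transactions, and among those only \wfences write it; so together with completeness of $h$ (Definition~\ref{def:completehistory}), every read of the epoch key in $h$ reads from a \wfence. Clause~(2) is then immediate: an \rfence only reads the epoch key, this read reads from a write in $h$ because $h$ is complete, and since only \wfences write the epoch key that write belongs to a \wfence, which is exactly what clause~(2) requires.

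For clause~(1) I would first invoke the results proved for general keys. By Claim~\ref{claim:onechain}, each \wfence (it writes the epoch key) lies in exactly one \chain on the epoch key, and by Lemma~\ref{lemma:longestchain}—applicable because $h$ is \wellformed—every such \chain is a sequence of consecutive writes (Definition~\ref{def:consecutivewrites}). Moreover every transaction on an epoch \chain writes the epoch key and is therefore a \wfence. Hence the \wfences are already partitioned into sequences of consecutive writes, and the only remaining task is to show there is a \emph{single} such \chain.

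To prove uniqueness I would argue by contradiction about chain heads. Suppose two distinct \chains on the epoch key survive \PruneConstraintsRMW, with heads $\vhead_1$ and $\vhead_2$ (both \wfences). Each head is an RMW on the epoch key, so it reads the epoch key from some write in $h$; by completeness and the dedicated-key fact, that write is either another \wfence or the initial write of the epoch key. In the former case, the RMW causes \CreateKnownGraphTwo to record a \wwpairs entry $\langle \textrm{epoch}, P, \vhead_a \rangle$ with $P$ the \wfence read from; the corresponding iteration of \PruneConstraintsRMW then, by Claim~\ref{claim:tx1head}, has $P$ as a chain tail and $\vhead_a$ as a chain head and splices them, so $\vhead_a$ is no longer a head in the final state—a contradiction. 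Therefore every surviving head must read the initial epoch write; modeling that initial value as written by a distinguished initial transaction, condition~(1) of $h$ being \wellformed (no transaction has multiple \adjacentwrites) allows at most one \wfence to read it, so there is at most one head and hence exactly one \chain.

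The main obstacle is precisely this uniqueness step: the partition-into-chains facts come essentially for free from Claim~\ref{claim:onechain} and Lemma~\ref{lemma:longestchain}, but pinning the number of epoch \chains to one requires the splicing argument together with a careful treatment of the unique chain head arising from the initial epoch write. I would also note that the acyclicity half of $h$ being \wellformed (Definition~\ref{def:notwellformed2}) is what excludes the degenerate alternative in which every \wfence reads from another \wfence: following the splices would then close a cycle $\vhead_a \rightsquigarrow \vhead_a$ in the known graph of the pruned \mypolyg, contradicting \wellformed{}ness and thereby guaranteeing the existence of the initial-reading head.
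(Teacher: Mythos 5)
Your proof is correct, but it takes a genuinely different route from the paper's for clause~(1). The paper argues directly and without the \chain machinery: since only \wfences write the epoch key and each \wfence is an RMW on it, repeatedly following the read-from predecessor of a \wfence must (by acyclicity of the known graph and finiteness) terminate at the abstract initial transaction; this makes the \wfences plus the initial transaction a tree rooted at the initial transaction, and the no-multiple-\adjacentwrites condition collapses that tree into a list, which is by definition a sequence of consecutive writes. You instead reuse the Appendix-A results—Claim~\ref{claim:onechain}, Lemma~\ref{lemma:longestchain}, and Claim~\ref{claim:tx1head}—so that the wfences are already partitioned into sequences of consecutive writes, and reduce everything to showing the epoch key has exactly one \chain, which you get from the head-splicing argument (a surviving head cannot read from another \wfence, since the corresponding \wwpairs entry would have spliced it into a non-head position, and heads are never re-created) plus the observation that at most one \wfence can read the initial epoch value. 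The two treatments invoke the same two ingredients of \wellformed{}ness (acyclicity, and uniqueness of \adjacentwrites) but in different places: the paper uses acyclicity to guarantee the predecessor walk reaches the initial transaction, while you use it to rule out the cyclic read-from configurations that would break the splicing step (e.g.\ a chain spliced with itself); and both proofs handle the initial transaction with the same informal gloss—indeed you are slightly more explicit than the paper that the root can have at most one RMW successor. Your approach buys economy by reusing already-proven lemmas and isolating the only genuinely new content (chain uniqueness); the paper's buys a self-contained, more elementary argument that does not depend on the dynamics of \textsc{\PruneConstraintsRMW}. Clause~(2) is handled identically in both.
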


\begin{proof}
First, we prove that all \wfences are a sequence of consecutive writes.
Because the epoch key is predefined and reserved for fence transactions,
\wfences are the only transactions that update this key.
Given that \wfences read-modify-write the epoch key,
a \wfence read from either another \wfence or
the (abstract) initial transaction if the epoch key hasn't been created.

Given that $h$ is \wellformed, by Definition~\ref{def:notwellformed2}, there is no cycle in the known graph.
Hence, if we start from any \wfence and
repeatedly find the predecessor of current \wfence on the epoch key
(the predecessor is known because \wfences also read the epoch key),
we will eventually reach the initial transaction
(because the number of write fence transactions in $h$ is finite).
Thus, all \wfences and the initial transaction are connected
and form a tree (the root is the initial transaction).
Also, because $h$ is \wellformed, no write transaction has two successive writes on the same key.
So there is no \wfence that has two children in this tree,
which means that the tree is actually a list. And each node
in this list reads the epoch key from its preceding node and all of them write
the epoch key.
By Definition~\ref{def:consecutivewrites}, this list of \wfences is a sequence
of consecutive writes.

Second, because $h$ is a complete history, all \rfences read
from transactions in $h$. Plus, only \wfences update the epoch key,
so \rfences read from known \wfences in $h$.
\end{proof}

\heading{Epochs.}
\CF{\wellff} fence transactions cluster
normal transactions (transactions that are not fence transactions)
into \textit{epochs}.
Epochs are generated as follows (\textsc{\AssignEpoch} in
Figure~\ref{fig:algocodefull}, line~\ref{li2:defepoch}).
First, the verifier traverses the \wfences
(they are a sequence of consecutive writes)
and assigns them epoch numbers which are their positions in the write fence sequence
(Figure~\ref{fig:algocodefull}, line~\ref{li2:startwfence}--\ref{li2:endwfence}).
Second, the verifier assigns epoch numbers to \rfences which are the epoch numbers
from the \wfences they read from
(Figure~\ref{fig:algocodefull}, line~\ref{li2:startrfence}--\ref{li2:endrfence}).
Finally, the verifier assigns epoch numbers to normal transactions and uses the epoch number of their
successive fence transactions (in the same client) minus one
(Figure~\ref{fig:algocodefull}, line~\ref{li2:assignnormal}).

During epoch assigning process, the verifier keeps track of the largest epoch number
that all clients have exceeded, denoted as \textit{$\epochagree$}
(Figure~\ref{fig:algocodefull}, line~\ref{li2:trackepochagree}).
In other words, every client has issued at least one fence transaction that
has epoch number $\ge \epochagree$.

One clarifying fact is that the epoch assigned to each transaction by the
verifier is not the value (an integer) of the epoch key which is generated by
clients. The verifier doesn't need the help from clients to assign epochs.

In the following, we denote a transaction with an epoch number $t$ as
a transaction with $\epochis{t}$.

\begin{lemma2}
If a history $h$ is \wellformed,
then a fence transaction with a smaller epoch number has a
path to any fence transaction with a larger epoch number
in the known graph of $\E(h)$.
\label{lemma:fencepath}
\end{lemma2}

\begin{proof}
First, we prove that a fence transaction with $\epochis{t}$ has a path to
another fence transaction with $\epochis{t+1}$.
Given history $h$ is \wellformed, by Claim~\ref{claim:wellfformedfence},
fence transactions are \wellff, which means
all the \wfences are a sequence of consecutive writes (by Definition~\ref{def:wellfformedfence}).
Because the \wfence with $\epochis{t}$ ($t$ is its position
in the sequence) and the \wfence with $\epochis{t+1}$ are adjacent in the sequence,
there is an edge (generated from reading-from dependency) from \wfence
with $\epochis{t}$ to \wfence with $\epochis{t+1}$.

Now consider a \rfence with $\epochis{t}$ which reads the epoch key
from the \wfence with $\epochis{t}$.
Because \sys's algorithm adds anti-dependency edges which point from
one write transaction's succeeding read transactions to its \adjacentwrite on the same key
(Figure~\ref{fig:algocodemain}, line~\ref{li:addrw}), there is an edge from the \rfence with $\epochis{t}$
to the \wfence with $\epochis{t+1}$.
Plus, all \rfences with $\epochis{t+1}$ read from the \wfence with $\epochis{t+1}$,
hence fence transactions with $\epochis{t}$ have paths to fence transactions with $\epochis{t+1}$.

By induction,
for any fence transaction with $\epochis{t+\Delta}$ ($\Delta \ge 1$),
a fence transaction with $\epochis{t}$
has a path to it.
\end{proof}

\begin{claim}
For a history $h$ and any its continuation $r$,
if $h \circ r$ is \wellformed,
then any normal transaction with $\epochis{\le \epochagree - 2}$ has a path to any
future normal transaction in the known graph of $\E(h \circ r)$.
\label{claim:epochagree}
\end{claim}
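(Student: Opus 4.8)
The plan is to realize the required path $\vtx_i \rightsquigarrow \vtx_j$ as a concatenation of three segments, following the three-part decomposition sketched for the \textbf{Guarantee} in \S\ref{s:guarantee}, but discharging each segment with the lemmas already proved for the round-based setting. Write $c$ for the client that issues the future transaction $\vtx_j$, let $F_{ea}$ denote the \wfence with $\epochis{\epochagree}$ (this write fence exists because $\epochagree$ is the epoch of an actual fence and every fence shares its epoch with some write fence), and let $F_c$ be the last fence transaction issued by $c$ in $h\circ r$. The three segments I would build are $\vtx_i \rightsquigarrow F_{ea}$, then $F_{ea}\rightsquigarrow F_c$, then $F_c \rightsquigarrow \vtx_j$.

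\textbf{First segment.} Since $\vtx_i$ is a normal transaction with a finite epoch $\le \epochagree-2$, the epoch-assignment rule (Figure~\ref{fig:algocodefull}, line~\ref{li2:assignnormal}) guarantees a \emph{succeeding} fence $F$ on $\vtx_i$'s own client with $F.\fepoch = \vtx_i.\fepoch+1 \le \epochagree-1$, and that $\vtx_i$ reaches $F$ through a chain of client-order edges, giving $\vtx_i \rightsquigarrow F$. Because $F.\fepoch < \epochagree = F_{ea}.\fepoch$ and $h\circ r$ is \wellformed, Lemma~\ref{lemma:fencepath} yields $F\rightsquigarrow F_{ea}$, hence $\vtx_i \rightsquigarrow F_{ea}$. (This collapses the first two parts of the body's sketch, since Lemma~\ref{lemma:fencepath} already chains fences of increasing epoch and thus subsumes the intermediate $F_{ea-1}\to F_{ea}$ edge.)

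\textbf{Second and third segments.} By the definition of $\epochagree$ as a minimum over clients (Figure~\ref{fig:algocodefull}, line~\ref{li2:trackepochagree}), every client---in particular $c$---has a last fence in $h\circ r$ whose epoch is $\ge \epochagree$; this is $F_c$, so $F_{ea}.\fepoch = \epochagree \le F_c.\fepoch$ and Lemma~\ref{lemma:fencepath} gives $F_{ea} \rightsquigarrow F_c$ (degenerating to equality or a single reading-from/anti-dependency edge when the epochs coincide). For the last segment I would invoke Fact~\ref{fact:identicaldepinfo}: $\vtx_j$ lives in some future continuation $r'$ (Definition~\ref{def:continuation}), and processing $(h\circ r)\circ r'$ only adds nodes and edges, so every edge used in $\vtx_i \rightsquigarrow F_{ea}\rightsquigarrow F_c$ persists in $\E((h\circ r)\circ r')$. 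Because clients are single-threaded and blocking, $\vtx_j$ is issued after every transaction of $c$ in $h\circ r$, and in particular after $F_c$; the client-order edges added in \textsc{\CreateKnownGraphTwo} (lines~\ref{li2:clientstart}--\ref{li2:clientend}) then supply $F_c \rightsquigarrow \vtx_j$. Concatenating the three segments yields $\vtx_i \rightsquigarrow \vtx_j$.

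The main obstacle is conceptual rather than computational: the statement quantifies over a \emph{future} $\vtx_j$ that is absent from $\E(h\circ r)$, so I must be precise that the asserted path lives in the known graph once $\vtx_j$'s continuation has been folded in, and that no segment secretly relies on well-formedness of the \emph{extended} history (only $h\circ r$ is assumed \wellformed). The device that keeps this honest is choosing $F_c$ \emph{inside} $h\circ r$: this confines every application of Lemma~\ref{lemma:fencepath} to the current history, leaving the only forward-reaching step, $F_c \rightsquigarrow \vtx_j$, to be supplied purely by deterministic client-order edges, which need no acyclicity. I would also make explicit the standing assumption behind the \textbf{Guarantee}---that $\vtx_j$'s client $c$ already appears in $h\circ r$ with a fence of epoch $\ge \epochagree$---which is exactly what the $\min$-over-all-clients definition of $\epochagree$ provides, and I would dispose of the boundary case $\epochagree<2$ (where the hypothesis on $\vtx_i$ is vacuous).
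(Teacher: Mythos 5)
Your proof is correct and follows essentially the same route as the paper's: a client-order path from $\vtx_i$ to its succeeding fence (epoch $\le \epochagree-1$), Lemma~\ref{lemma:fencepath} to reach a fence of epoch $\ge \epochagree$ on the future transaction's client (which exists by the definition of $\epochagree$), and client-order edges from that fence to $\vtx_j$. The only differences are cosmetic---your explicit intermediate stop at $F_{ea}$ (with the degenerate equal-epoch case handled by the read-fence edge) and your framing of $\vtx_j$ as lying in a later continuation, whereas the paper applies the fence-path lemma in one step and takes $\vtx_j \in r$ directly.
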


\begin{proof}
Take any normal transaction $\vtx_i$ with $\epochis{t}$ ($t \le
\epochagree -2$) and call $\vtx_i$'s client $C_1$.
Because the epoch of a normal transaction equals the epoch number of
its successive fence transactions in the same client minus one (Figure~\ref{fig:algocodefull}, line~\ref{li2:assignnormal}),
there is a fence transaction $\vtx_{f}$ in $C_1$ with $\epochis{t+1}$ ($t+1 \le \epochagree -1$).
By Lemma~\ref{lemma:fencepath}, $\vtx_{f}$ has a path to any fence transaction
with $\epochis{\epochagree}$.
And, by the definition of $\epochagree$, all clients have
at least one fence transactions with $\epochis{\ge \epochagree}$ in $h$.
Thus, there is always a path from $\vtx_i$---through $\vtx_{f}$ and the last
fence transactions of a client in $h$---to the future
transactions in $r$.
\end{proof}

\heading{Frozen transaction.}
In this section, we define \textit{frozen transactions}.
A frozen transaction is a transaction that no future transaction can be
scheduled prior to these transactions in any possible serial schedule.
Intuitively, if a transaction is frozen, this
transaction can never be involved in any cycles containing future transactions.

\begin{definition2}[Frozen transaction]
For a history $h$ that is \wellformed,
a \emph{frozen transaction} is a transaction with $\epochis{\le \epochagree -2}$
and all its predecessors (in the known graph of $\E(h)$)
also with $\epochis{\le \epochagree - 2}$.
\label{def:frozen}
\end{definition2}

\heading{\CF{\obsolete} transaction}.
Recall that the challenge of garbage collection is that \sys's verifier does
not know whether a value can be read by future transactions.
In the following, we define \textit{\obsolete transactions}
which cannot be read by future transactions.

\begin{definition2}[\CF{\obsolete} transaction]
For a history $h$ that is \wellformed,
an \emph{\obsolete transaction} on a key $x$ is a transaction
with $\epochis{\le \epochagree -2}$
that writes to $x$ and has a successor (in the known graph of $\E(h)$)
that also has $\epochis{\le \epochagree -2}$ and writes to $x$.
\label{def:obsolete}
\end{definition2}

\begin{corollary2}
For an \obsolete transaction on a key $x$,
it's predecessors which has $\epochis{\epochagree - 2}$ and writes $x$
is also an \obsolete transaction on $x$.
\label{corollary:obsolete}
\end{corollary2}

\begin{proof}
Call the \obsolete transaction $\vtx_i$ and its predecessor with
$\epochis{\epochagree - 2}$ that writes $x$ as $\vtx_j$.
By Definition~\ref{def:obsolete}, $\vtx_i$ has a successor $\vtx_k$ that has
$\epochis{\epochagree - 2}$ and writes $x$.
Hence, as a predecessor of $\vtx_i$'s, $\vtx_j$ is also a predecessor of $\vtx_k$.
Thus, by Definition~\ref{def:obsolete}, $\vtx_j$ is an \obsolete transaction.
\end{proof}

\begin{claim}
For a history $h$ and any its continuation $r$
that satisfies $h \circ r$ is \wellformed,
no future transaction can read key $x$ from an \obsolete transaction on $x$.
\label{claim:obsoletenofutureread}
\end{claim}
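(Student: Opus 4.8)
The plan is to argue by contradiction: assume some future transaction $T_f$ reads key $x$ from an \obsolete transaction $T_i$ on $x$, and derive that the known graph of the pruned \mypolyg $\Qp(\E(h \circ r))$ contains a cycle, contradicting the hypothesis that $h \circ r$ is \wellformed (Definition~\ref{def:notwellformed2}). First I would unpack Definition~\ref{def:obsolete}: being \obsolete on $x$ supplies a transaction $T_k$ with $\epochis{\le \epochagree - 2}$ that also writes $x$ and satisfies $T_i \rightsquigarrow T_k$ in the known graph; by Fact~\ref{fact:identicaldepinfo} this path survives into $\E(h \circ r)$. Since $x$ is a normal key, $T_k$ and $T_f$ are both normal transactions, so Claim~\ref{claim:epochagree} applies and yields a path $T_k \rightsquigarrow T_f$. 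Thus the final known graph already contains $T_i \rightsquigarrow T_k \rightsquigarrow T_f$: a path from the write $T_i$ that $T_f$ reads to $T_f$ itself, passing through a second write $T_k$ to the same key.

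The heart of the argument is to show that the encoding additionally forces the reverse connection $T_f \rightsquigarrow T_k$, which closes the cycle. Intuitively, because $T_f$ reads $x$ from $T_i$ and $T_k$ cannot precede $T_i$ (we already have $T_i \rightsquigarrow T_k$), $T_k$ must be ordered after $T_f$'s read, and \sys captures exactly this ordering. I would make this precise by locating the unique \chains on $x$ containing $T_i$ and $T_k$ (Claim~\ref{claim:onechain}), say $\vwlist_a$ and $\vwlist_b$, and splitting into two cases. If $\vwlist_a = \vwlist_b$, then $T_i$ precedes $T_k$ in the \chain (otherwise $T_k \rightsquigarrow T_i$ by Claim~\ref{claim:wwinchain} would already give a cycle), so $T_i$ is not the tail; \InferRWEdges then adds the \rwrel edge $T_f \to T_i^{*}$ to the successor write $T_i^{*}$, and $T_i^{*} \rightsquigarrow T_k$ inside the \chain (Claim~\ref{claim:wwinchain}) delivers $T_f \rightsquigarrow T_k$.

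If $\vwlist_a \ne \vwlist_b$, I would invoke the coalesced constraint $\langle A, B\rangle$ that \BundleConstraints creates for this pair of \chains. Intra-\chain reading-from paths (Lemma~\ref{lemma:longestchain}, Claim~\ref{claim:wwinchain}) together with $T_i \rightsquigarrow T_k$ give $\vhead_a \rightsquigarrow \vtail_b$ in the known graph, so the branch $B$ (which would force $\vtail_b \rightsquigarrow \vhead_a$) conflicts with an existing path; hence \PruneConstraintsTR fires and commits branch $A$, adding the edges routing $\vtail_a$, and any reader of $\vtail_a$, to $\vhead_b$. From here I obtain $T_f \rightsquigarrow T_k$ by tracking $T_f$ to $\vhead_b$: if $T_i = \vtail_a$ then $T_f$ is itself a reader of $\vtail_a$, so $A$ supplies $T_f \to \vhead_b$; otherwise \InferRWEdges gives $T_f \to T_i^{*}$ and the intra-\chain path carries $T_i^{*} \rightsquigarrow \vtail_a \rightsquigarrow \vhead_b$. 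In either sub-case $\vhead_b \rightsquigarrow T_k$ completes $T_f \rightsquigarrow T_k$, and combined with $T_k \rightsquigarrow T_f$ this produces the cycle.

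I expect the cross-\chain case to be the main obstacle, precisely because the forcing of $T_f \rightsquigarrow T_k$ there is not a single known edge but the consequence of pruning the coalesced constraint: I must verify that the conflicting path $\vhead_a \rightsquigarrow \vtail_b$ is present in the known graph \emph{before} this constraint is considered, so that \PruneConstraintsTR is guaranteed to commit branch $A$, and I must dispatch the degenerate possibilities (for instance $T_f = T_i^{*}$, or $\vtail_a$ having versus lacking readers) uniformly. The remaining edge case, where $x$ is the epoch key, is benign: such a $T_i$ would be a \wfence, fence transactions are never garbage collected, and the same cycle argument goes through with Lemma~\ref{lemma:fencepath} supplying $T_k \rightsquigarrow T_f$ in place of Claim~\ref{claim:epochagree}.
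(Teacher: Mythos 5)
Your proposal is correct, but it takes a genuinely different (and heavier) route than the paper. The paper's own proof never descends into chains or pruning: from Definition~\ref{def:obsolete} it extracts the second writer $\vtx_j$ of $x$ with $\epochis{\le \epochagree-2}$ and $\vtx_i \rightsquigarrow \vtx_j$, forms the single \polyg constraint $\langle \vtx_k \to \vtx_j,\ \vtx_j \to \vtx_i \rangle$ induced by the triple (writer read from, other writer, future reader), and observes that each arm closes a cycle with a path already in the known graph---$\vtx_j \rightsquigarrow \vtx_k$ by Claim~\ref{claim:epochagree} and $\vtx_i \rightsquigarrow \vtx_j$ by obsoleteness---so \wellformed{}ness is contradicted; no case analysis is needed. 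You instead re-derive the forced ordering inside \sys's concrete encoding: the same-chain versus cross-chain split, the \InferRWEdges anti-dependency edge, and \PruneConstraintsTR committing the branch of the coalesced constraint that routes readers of $\vtail_a$ to $\vhead_b$. What your version buys is that the contradiction is exhibited literally as a cycle in the known graph of $\Qp(\E(h \circ r))$, which is exactly what the extended Definition~\ref{def:notwellformed2} refers to, whereas the paper's shorter argument leaves implicit how the doomed polygraph constraint manifests in the pruned \sys encoding. The cost is the bookkeeping you yourself flag, and two small points need care there: if both branches of the coalesced constraint conflict with existing paths, pruning may commit $B$ rather than $A$, but then the known graph is cyclic at once, so the contradiction still lands; and in the epoch-key edge case a future \emph{read} fence inherits a small epoch number, so Lemma~\ref{lemma:fencepath} alone does not yield $T_k \rightsquigarrow T_f$---you need the client-order-plus-fence argument from the proof of Claim~\ref{claim:epochagree} (the paper sidesteps this by implicitly treating the reader as a normal transaction). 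Neither point breaks your argument.
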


\begin{proof}
Assume to the contrary that there
exists a future transaction $\vtx_k$ reading key $x$ from an \obsolete transaction $\vtx_i$.
By Definition~\ref{def:obsolete}, there must exist another transaction $\vtx_j$
with $\epochis{\le \epochagree - 2}$ that writes to $x$ and $\vtx_i \rightsquigarrow \vtx_j$.

Now, consider transactions ($\vtx_i$, $\vtx_j$, $\vtx_k$).
In a \polyg,  they form a constraint: $\vtx_i$ and
$\vtx_j$ both writes to key $x$; and $\vtx_k$ reads $x$ from $\vtx_i$.
The constraint is $\langle \vtx_k \to \vtx_j, \vtx_j \to \vtx_i \rangle$.
However, both options create cycles in the known graph of $\E(h \circ r)$
which is a contradiction to that $h \circ r$ is \wellformed
(Definition~\ref{def:notwellformed}):
(1) if choose $\vtx_k \to \vtx_j$,
because $\vtx_j$ is $\epochis{\le \epochagree - 2}$ and $\vtx_k$ is a future transaction in $r$,
by Claim~\ref{claim:epochagree}, $\vtx_j \rightsquigarrow \vtx_k$;
(2) if choose $\vtx_j \to \vtx_i$,
because of $\vtx_i \rightsquigarrow \vtx_j$, there is a cycle.
\end{proof}

\begin{claim}
For a history $h$ and any its continuation $r$
that satisfies $h \circ r$ is \wellformed,
if a transaction $\vtx_i$ reads key $x$ from an \obsolete transaction on $x$,
then $\vtx_i$ has paths to future transactions
in the known graph of $\Pp(\E(h \circ r))$.
\label{claim:readfromobsolete}
\end{claim}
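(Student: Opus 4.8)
The plan is to exhibit, inside the known graph of $\Pp(\E(h \circ r))$, a two-leg path $\vtx_i \rightsquigarrow \vtx_s \rightsquigarrow T$ to every future transaction $T$ (i.e.\ every transaction of $r$), where $\vtx_s$ is the witness that made the read-from transaction \obsolete. Write $\vtx_o$ for the \obsolete transaction that $\vtx_i$ reads $x$ from. By Definition~\ref{def:obsolete}, $\vtx_o$ writes $x$, has $\epochis{\le \epochagree - 2}$, and has a successor $\vtx_s$ in $\E(h)$'s known graph with $\epochis{\le \epochagree - 2}$ that also writes $x$; in particular $\vtx_o \rightsquigarrow \vtx_s$, a path already present (via Fact~\ref{fact:identicaldepinfo}) in $\E(h\circ r)$'s known graph. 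First I would note that, by Claim~\ref{claim:obsoletenofutureread}, $\vtx_i$ cannot itself be a future transaction, so $\vtx_i \in h$ and it suffices to reach all of $r$ from it.

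The first leg, $\vtx_i \rightsquigarrow \vtx_s$, is the step I expect to be the main obstacle. If the only witness is $\vtx_s = \vtx_i$ (that is, $\vtx_i$ is an RMW on $x$ carrying $\epochis{\le \epochagree - 2}$), this leg is trivial. Otherwise $\vtx_s \neq \vtx_i$ and $\vtx_s \neq \vtx_o$, so the read-from edge $\vtx_o \wrarrowx \vtx_i$ together with the writer $\vtx_s$ produces, in the vanilla polygraph $\Po(\E(h\circ r))$, the \constraint $\langle (\vtx_i,\vtx_s),\,(\vtx_s,\vtx_o)\rangle$. Since $\vtx_o \rightsquigarrow \vtx_s$ holds in the known graph, the second alternative $\vtx_s \to \vtx_o$ would close a cycle, so \PruneConstraintsTR ought to resolve this \constraint by committing its first alternative, thereby adding $\vtx_i \to \vtx_s$ to the known graph of $\Pp(\E(h\circ r))$. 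The delicate point is to argue that pruning commits to $\vtx_i \to \vtx_s$ and not to the cycle-creating $\vtx_s \to \vtx_o$: the symmetric test inside \PruneConstraintsTR could in principle fire first if $\vtx_s \rightsquigarrow \vtx_i$ already held. I would rule this out by observing that committing $\vtx_s \to \vtx_o$ would then make the known graph of $\Pp(\E(h\circ r))$ cyclic, which the hypothesis that $h\circ r$ is \wellformed forbids once we pass through the polygraph-equivalence lemmas (Lemma~\ref{lemma:prunedq} and Lemma~\ref{lemma:polygraphser}). Hence the first alternative is the only admissible resolution, giving $\vtx_i \rightsquigarrow \vtx_s$.

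For the second leg, $\vtx_s \rightsquigarrow T$ for every future $T$, I would use the epoch machinery. Because $\vtx_s$ writes the ordinary key $x$, it is a normal transaction, and it carries $\epochis{\le \epochagree - 2}$; thus Claim~\ref{claim:epochagree} supplies a path from $\vtx_s$ to every future \emph{normal} transaction in $\E(h\circ r)$'s known graph, hence in $\Pp(\E(h\circ r))$. For a future \emph{fence} transaction I would instead route through $\vtx_s$'s next fence (of epoch $\le \epochagree - 1$) and invoke Lemma~\ref{lemma:fencepath}, which links fence transactions of increasing epoch and so reaches any later fence, including the future ones. Composing the two legs yields $\vtx_i \rightsquigarrow \vtx_s \rightsquigarrow T$ inside the known graph of $\Pp(\E(h\circ r))$ for every future transaction $T$, which is exactly the claim. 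The only routine residue is checking that each edge invoked ($\vtx_o \rightsquigarrow \vtx_s$, the pruned edge $\vtx_i \to \vtx_s$, and the epoch/fence paths) genuinely lives in $\Pp(\E(h\circ r))$ rather than in an unpruned or cobra (\mypolyg) variant.
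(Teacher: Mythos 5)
Your proposal takes essentially the same route as the paper's own proof: identify the successor $\vtx_k$ (your $\vtx_s$) that writes $x$ with $\epochis{\le \epochagree-2}$ from Definition~\ref{def:obsolete}, note that the constraint $\langle \vtx_i \to \vtx_k,\ \vtx_k \to \vtx_o\rangle$ is resolved by \PruneConstraintsTR in favor of $\vtx_i \to \vtx_k$ because $\vtx_o \rightsquigarrow \vtx_k$ already lies in the known graph, and then compose with Claim~\ref{claim:epochagree} to reach future transactions. The extra care you add---the $\vtx_s=\vtx_i$ RMW corner case, the worry about which pruning branch fires, and routing future fence transactions via Lemma~\ref{lemma:fencepath}---goes beyond what the paper's short proof spells out but does not change the argument.
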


\begin{proof}
By Definition~\ref{def:obsolete}, the obsolete transaction (call it $\vtx_j$)
has a successor that has $\epochis{\le \epochagree -2}$ and writes $x$.
Call this successor $\vtx_k$.

Because both $\vtx_j$ and $\vtx_k$ write $x$ and $\vtx_i$ reads from $\vtx_j$,
they form a constraint $\langle \vtx_i \to \vtx_k, \vtx_k \to \vtx_j \rangle$.
Since $\vtx_k$ is a successor of $\vtx_j$, this constraint is pruned by
\textsc{\PruneConstraintsTR} and $\vtx_i \to \vtx_k$ in $\Pp(\E(h \circ r))$.
Plus, by Claim~\ref{claim:epochagree}, $\vtx_k$ has paths to any future transactions,
hence so does $\vtx_i$.

\end{proof}

\subsection{\CF{\delcand} transactions and {\solvedcon}s}
In this section, we define \textit{\delcand transactions} which are deleted
from the \depinfo by \sys's algorithm (Figure~\ref{fig:algocodefull},
line~\ref{li2:realdelete}).

\begin{definition2} [Candidates to remove]
In a history that is \wellformed, 
a transaction is a \emph{candidate to remove}, when
\begin{myitemize2}
  \item it is a frozen transaction; and
  \item it is either a read-only transaction
    or an \obsolete transaction on all keys it writes.
\end{myitemize2}
\label{def:delcandcand}
\end{definition2}

\begin{claim}
For a history $h$ and any its continuation $r$
that satisfies $h \circ r$ is \wellformed,
a future transaction cannot read from a candidate to remove.
\label{claim:noreadfromfuture}
\end{claim}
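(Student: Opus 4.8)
The plan is to reduce the statement to the two alternatives in Definition~\ref{def:delcandcand} and dispatch each with a short case analysis, leaning on the definition of ``reads from'' (\S\ref{subsec:defandpreliminaries}) and on Claim~\ref{claim:obsoletenofutureread}. First I would recall that, by definition, a candidate to remove $\vtx_c$ is frozen and, moreover, is \emph{either} (i) read-only \emph{or} (ii) \obsolete on every key it writes. The elementary observation driving everything is that if some transaction $\vtx_k$ reads a key $x$ from $\vtx_c$, then by the definition of the reading-from relation $\vtx_c$ must contain a write operation on $x$; in particular $\vtx_c$ writes $x$. This is what makes the two bullets of Definition~\ref{def:delcandcand} genuinely exhaust all the ways a transaction could be read from.

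For case (i), since $\vtx_c$ is read-only it contains no write operation at all, so by the observation no transaction---future or otherwise---can read from it, and the statement holds vacuously. For case (ii), I would argue by contradiction: suppose a future transaction $\vtx_k$ reads some key $x$ from $\vtx_c$. By the observation $\vtx_c$ writes $x$, and since $\vtx_c$ is \obsolete on every key it writes, $\vtx_c$ is in particular \obsolete on $x$. But then Claim~\ref{claim:obsoletenofutureread}, applied to $h \circ r$ (which is \wellformed by hypothesis, matching that claim's premise exactly), says that no future transaction can read $x$ from an \obsolete transaction on $x$---contradicting the existence of $\vtx_k$. Combining the two cases gives that no future transaction reads from $\vtx_c$.

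There is no substantive obstacle remaining: the real work was already carried out in establishing Claim~\ref{claim:obsoletenofutureread}, which in turn rests on the epoch guarantee (Claim~\ref{claim:epochagree}). The only points requiring care here are (a) invoking that ``reads from'' forces the source to be a writer of the key in question, so that the read-only and fully-\obsolete alternatives really do cover every possibility, and (b) noting that the frozen condition of Definition~\ref{def:delcandcand} is not actually needed for this particular statement---only the second bullet is used.
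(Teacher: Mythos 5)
Your proposal is correct and follows essentially the same route as the paper's proof: split on the two alternatives in Definition~\ref{def:delcandcand}, dismiss the read-only case since reading-from requires the source to write the key, and apply Claim~\ref{claim:obsoletenofutureread} in the \obsolete{} case. Your version merely spells out the details (the writer-of-the-key observation and the unused frozen condition) that the paper's one-sentence proof leaves implicit.
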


\begin{proof}
By Definition~\ref{def:delcandcand}, a candidate to remove is either
a read-only transaction which do not have writes,
or else an \obsolete transaction on all keys it writes,
which by Claim~\ref{claim:obsoletenofutureread} cannot be read by future transactions.
\end{proof}

\begin{definition2} [\CF{\delcand}]
For a history that is \wellformed, a transaction is \emph{\delcand} when
it is a candidate to remove 
and all the transactions in the same
\pscc are also candidates to remove.
\label{def:delcand}
\end{definition2}

\heading{\CF{\solvedcon} and \unsolvedcon}.
As mentioned in \S\ref{subsec:bruteforce}, a constraint in a \polyg involves
three transactions: two write transactions ($\wone$, $\wtwo$) writing to the same key
and one read transaction ($\rone$) reading this key from $\wone$.
And, this constraint ($\langle \rone \to \wtwo, \wtwo \to \wone \rangle$) has two
ordering options, either (1) $\wtwo$ appears before both $\wone$ and $\rone$ in
a serial schedule, or (2) $\wtwo$ appears after them.
We call a constraint as a \textit{\solvedcon} when the known graph
has already captured one of the options.

\begin{definition2}[\CF{\solvedcon}]
For a history $h$ that is \wellformed,
a constraint $\langle \rone \to \wtwo, \wtwo \to \wone \rangle$
is a \emph{\solvedcon},
when the known graph of $\E(h)$ has either
(1) $\wtwo \rightsquigarrow \wone,\, \wtwo \rightsquigarrow \rone$,
or (2) $\wone \rightsquigarrow \wtwo, \,\rone \rightsquigarrow \wtwo$.
\label{def:solvedcon}
\end{definition2}

\begin{fact}
Eliminating {\solvedcon}s doesn't affect whether a \polyg is
acyclic, because the ordering of the three transactions in a \solvedcon has
been already captured in the known graph.
\label{fact:solvedcon}
\end{fact}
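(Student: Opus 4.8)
The plan is to prove the biconditional hidden in the Fact: writing $P$ for the \polyg and $P'$ for the same graph with one \solvedcon $c=\langle \rone \to \wtwo,\ \wtwo \to \wone\rangle$ deleted (the known graph and all other \constraints left unchanged), I would show that $P$ is acyclic iff $P'$ is acyclic. By Definition~\ref{def:solvedcon} I may assume the well-formed case in which the known graph is acyclic, and by symmetry I would treat only case~(1), where the known graph already contains the paths $\wtwo \rightsquigarrow \wone$ and $\wtwo \rightsquigarrow \rone$; case~(2) is identical after swapping the two options, with the forced edge being $\rone \to \wtwo$ and the relevant reachability $\rone \rightsquigarrow \wtwo$.

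The main tool I would set up first is a linear-order characterization of polygraph acyclicity: a \polyg is acyclic iff there is a total order of its vertices that extends every known edge and, for each \constraint, sends at least one of its two edges forward. The forward direction is just a topological sort of an acyclic compatible graph; the reverse builds a compatible graph by keeping, for each \constraint, the unique edge that goes forward. The one point needing care is that a \constraint cannot have \emph{both} edges forward: since $\rone$ reads from $\wone$, the backing known edge $\wone \to \rone$ forces $\wone$ before $\rone$, so $\rone \prec \wtwo \prec \wone \prec \rone$ would be a cycle; hence ``at least one forward'' coincides with ``exactly one forward'', which is what compatibility demands.

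With this in hand both directions are short. For ``$P$ acyclic $\Rightarrow P'$ acyclic'', any order witnessing $P$ realizes the strictly smaller constraint set of $P'$ as well, so nothing is to prove. For the converse, I would take an order $\prec$ witnessing $P'$; it extends the (common) known graph, so $\wtwo \rightsquigarrow \wone$ forces $\wtwo \prec \wone$, which means the surviving edge $\wtwo \to \wone$ of $c$ already goes forward under $\prec$. Thus $\prec$ realizes $c$ too, hence realizes all of $P$'s constraints, and $P$ is acyclic. This is exactly the content of the Fact's one-line justification: the ordering demanded by $c$ is already captured by the known-graph reachability $\wtwo \rightsquigarrow \wone$.

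I expect the only real obstacle to be establishing the order characterization rigorously, in particular the ``at most one edge forward'' sub-claim, since it is what lets me pass between the edge-set notion of compatibility (exactly one edge per \constraint) and the cleaner order-based reasoning. Working directly with compatible graphs instead would force me to check that re-inserting $\wtwo \to \wone$ does not accidentally place \emph{both} edges of some \emph{other} \constraint sharing that edge into the graph — a bookkeeping case analysis that the topological-order framing avoids entirely. I would also sanity-check the degenerate situations, namely a cyclic known graph (then both $P$ and $P'$ are non-acyclic, as every compatible graph contains the known edges, so the biconditional holds trivially) and the cases $\rone\in\{\wone,\wtwo\}$ already excluded by the construction of \constraints, to confirm the argument degrades gracefully.
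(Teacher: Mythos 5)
Your proposal is correct, and both directions check out: the forward direction is immediate from monotonicity in the \constraint set, and the converse correctly uses that any order extending the known edges already places the surviving edge of the \solvedcon forward (in case (1) via $\wtwo \rightsquigarrow \wone$, in case (2) via $\rone \rightsquigarrow \wtwo$), while the degenerate cyclic-known-graph case is handled trivially. The paper, however, does not give a proof at all: the Fact is asserted with only the inline justification that ``the ordering of the three transactions has already been captured in the known graph.'' Your route is a rigorous development of exactly that idea, and its genuinely new ingredient is the intermediate lemma characterizing \polyg acyclicity by a total order that extends the known edges and sends at least one edge of each \constraint forward, together with the observation that the backing known edge $\wone \to \rone$ rules out both edges going forward, so ``at least one'' coincides with the ``exactly one'' demanded by compatibility. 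That lemma is what lets you avoid the bookkeeping the paper's one-liner glosses over (re-inserting the forced edge into a compatible graph of the reduced \polyg and checking acyclicity and exact-choice compatibility for other \constraints that might share edges with it, or with the known graph); a direct compatible-graph argument would also go through, since adding an edge parallel to an existing path cannot create a new cycle, but your order-based framing packages this more cleanly and extends verbatim to removing several {\solvedcon}s at once, since the known graph is unchanged and the remaining \constraints stay solved.
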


For those constraints that are not \solvedcons, we call them
\textit{\unsolvedcons}.
Notice that both \solvedcons and \unsolvedcons are defined
on \polyg (not \mypolyg).

\begin{lemma2}
Given a history $h$ and a removable transaction $t$ in $h$, for any its
  continuation $r$ that satisfies $h \circ r$ is \wellformed,
  there is no \unsolvedcon that includes both $t$ and a future
  transaction in $r$.
\label{lemma:solvedcon}
\end{lemma2}

\begin{proof}
Call a constraint $\langle \rone \to \wtwo, \wtwo \to \wone \rangle$
($\wone$ and $\wtwo$ write to the same key $x$;
$\rone$ reads $x$ from $\wone$)
where one of the three transactions is \delcand and another
is a future transaction.

In the following, by enumerating all combinations of
possibilities, we prove such a constraint is always a \solvedcon.

\begin{myitemize2}

\item The \delcand transaction is $\rone$.

Because $\rone$ is \delcand, it is a frozen transaction.
As a predecessor of $\rone$ ($\rone$ reads from $\wone$),
by Definition~\ref{def:frozen},
$\wone$ has $\epochis{\le \epochagree - 2}$.
Hence, the last transaction $\wtwo$ must be the future transaction.
By Claim~\ref{claim:epochagree}, both $\wone$ and $\rone$ have paths to
future transactions including $\wtwo$.
Thus, this constraint is a \solvedcon.

\item The \delcand transaction is $\wone$.

Because $\rone$ reads $x$ from $\wone$,
by Claim~\ref{claim:noreadfromfuture}, $\rone$ cannot be a future transaction.
Hence, the future transaction must be $\wtwo$.

Because $\wone$ has writes, by Definition~\ref{def:delcandcand},
$\wone$ is an \obsolete transaction on key $x$.
Plus $\rone$ reads $x$ from $\wone$, by Claim~\ref{claim:readfromobsolete},
$\rone$ has paths to future transactions including $\wtwo$. Thus, this
constraint is a \solvedcon.

\item The \delcand transaction is $\wtwo$.

Because $h$ is a complete history, it is impossible to have
$\rone \in h$ but the transaction it reads $\wone \in r$.
Also, because there must be one future transaction (in $r$),
$\rone \in h \land \wone \in h$ is impossible.
Hence, there are two possibilities:

  \begin{myitemize2}

  \item $\wone \in r \land \rone \in r$.

  By Claim~\ref{claim:epochagree}, $\wtwo$ has a path to future
  transactions including $\wone$ and $\rone$. Hence, the constraint is a
  \solvedcon.

  \item $\wone \in h \land \rone \in r$.

  Now, consider the relative position of $\wone$ and $\wtwo$.
  Because a future transaction $\rone$ reads from $\wone$, by Claim~\ref{claim:noreadfromfuture},
  $\wone$ is not a candidate to remove.
  Further, by Definition~\ref{def:delcand}, $\wone$ and $\wtwo$ do not belong to the same \pscc.
  Hence, by Lemma~\ref{lemma:samepscc}, either $\wone \rightsquigarrow \wtwo$ or
  $\wtwo \rightsquigarrow \wone$.

  Assume $\wone \rightsquigarrow \wtwo$.
  Because $\wtwo$ a frozen transaction ($\wtwo$ is \delcand),
  by Definition~\ref{def:frozen}, $\wtwo$ has $\epochis{\epochagree - 2}$.
  Thus, by Corollary~\ref{corollary:obsolete}, $\wone$ is an \obsolete transaction on $x$,
  but it is read by a future transaction $\rone$,
  a contradiction to Claim~\ref{claim:noreadfromfuture}.

  Above all, $\wtwo \rightsquigarrow \wone$.
  And, by Claim~\ref{claim:epochagree}, $\wtwo \rightsquigarrow \rone$.
  Hence, this constraint is a \solvedcon.
  \end{myitemize2}

\end{myitemize2}

\end{proof}

\begin{lemma2}
Given a history $h$ that is \wellformed and a continuation $r$,
for any transaction $\vtx_i$ that is \delcand,
if the known graph of
$\E(h \circ r) \del \vtx_i$ is acyclic,
then the known graph of $\E(h \circ r)$ is acyclic.
\label{lemma:theknowngraphdelcand}
\end{lemma2}

\begin{proof}
Assume to the contrary that the known graph of $\E(h \circ r)$ has a cycle.
Because $h$ is \wellformed, there is no cycle in the known graph of $\E(h)$,
hence the cycle must include at least one transaction (say $\vtx_j$) in $r$.
Also, this cycle must include $\vtx_i$, otherwise $\E(h \circ r) \del
\vtx_i$ is not acyclic.

Now, consider the path $\vtx_j \rightsquigarrow \vtx_i$ in the cycle.
If such path has multiple edges, say $\vtx_j \rightsquigarrow \vtx_k \to \vtx_i$,
then because $\vtx_i$ is a frozen transaction, by Definition~\ref{def:frozen},
transaction $\vtx_k$ has $\epochis{\epochagree - 2}$.
Hence, by Claim~\ref{claim:epochagree}, $\vtx_k$ has a path to the future transaction $\vtx_j$
which generates a cycle in $\E(h \circ r) \del \vtx_i$, a contradiction.

On the other hand, assume the path is an edge $\vtx_j \to \vtx_i$.
There are four types of edges in the known graph,
but $\vtx_j \to \vtx_i$ can be none of them.
In particular,
\begin{myitemize2}

\item Edge $\vtx_j \to \vtx_i$ cannot be a reading-from edge (Figure~\ref{fig:algocodefull}, line~\ref{li2:edge1}) because $h$ is a complete
history.

\item Edge $\vtx_j \to \vtx_i$ cannot be an anti-dependency edge (Figure~\ref{fig:algocodemain}, line~\ref{li:addrw}),
in which case $\vtx_j$ has to read from a predecessor of $\vtx_i$.
By Corollary~\ref{corollary:obsolete}, the predecessor being read
is an \obsolete transaction, a contradiction to Claim~\ref{claim:obsoletenofutureread}.

\item Edge $\vtx_j \to \vtx_i$ cannot be a client order edge
(Figure~\ref{fig:algocodefull}, line~\ref{li2:edge2}),
because there is at least one fence transaction issued after $\vtx_i$ in the same client.

\item Edge $\vtx_j \to \vtx_i$ cannot be an edge added by \textsc{\PruneConstraintsTR}
(Figure~\ref{fig:algocodemain}, line~\ref{li:addprune1},\ref{li:addprune2}).
Because, by Lemma~\ref{lemma:solvedcon}, there is no \unsolvedcon between
$\vtx_i$ and $\vtx_j$.

\end{myitemize2}
\end{proof}

\subsection{The main argument}

\begin{lemma2}
Given a history $h$ and any continuation $r$
that satisfies $h \circ r$ is \wellformed,
for any transaction $\vtx_i$ that is \delcand,
$\E(\E(h) \del \vtx_i, r) \iff \E(h \circ r) \del \vtx_i$
\label{lemma:samedepinfo}
\end{lemma2}

\begin{proof}
First, we prove $\E(\E(h) \del \vtx_i, r) \iff \E(\E(h), r) \del \vtx_i$,
which means that the final \depinfo remains the same
no matter when \sys's algorithm deletes $\vtx_i$---before or after processing $r$.

An \depinfo has three components: $\readfrom$, $\wwpairs$, and the known graph $\vg$.
For $\readfrom$,
because $\vtx_i$ is \delcand, by Claim~\ref{claim:noreadfromfuture},
no future transactions in $r$ can read from it.
So deleting $\vtx_i$ before or after processing $r$ does not change the $\readfrom$,
and also no reading-from edges (from $\vtx_i$ to transactions in $r$)
are added to the known graph (Figure~\ref{fig:algocodefull}, line~\ref{li2:edge1}).
Similarly, for $\wwpairs$, there is no read-modify-write transactions in $r$
that read from $\vtx_i$, hence $\wwpairs$ are the same
in $\E(\E(h) \del \vtx_i, r)$ and $\E(\E(h), r) \del \vtx_i$,
and no edges are
added during \textsc{\InferRWEdges}
(Figure~\ref{fig:algocodemain}, line~\ref{li:addrw}).
Also, because $\vtx_i$ has $\epochis{\epochagree-2}$, there must be a fence transaction
that comes after $\vtx_i$ from the same client, hence there is no client order edge
from $\vtx_i$ to transactions in $r$ (Figure~\ref{fig:algocodefull}, line~\ref{li2:edge2}).
Thus, the known graphs in both {\depinfo}s are also the same.

Finally, by lemma~\ref{fact:identicaldepinfo},
$\E(\E(h), r) \del \vtx_i \iff \E(h \circ r) \del \vtx_i$.
\end{proof}

\begin{lemma2}
In a history that is \clientser,
for any {\unsolvedcon} $\langle \rone \to \wtwo, \wtwo \to \wone \rangle$
that includes a \delcand transaction in $\Pp(\E(h)$), %
all three transactions ($\wone$, $\wtwo$, and $\rone$) are in the same \pscc.
\label{lemma:inpscc}
\end{lemma2}

\begin{proof}
Consider the relative position of $\wone$ and $\wtwo$ in the known graph
of $\Pp(\E(h))$.

\begin{myitemize2}

  \item Assume $\wtwo \rightsquigarrow \wone$. Because of $\wone \to \rone$
  ($\rone$ reads from $\wone$), $\wtwo \rightsquigarrow \rone$. By
  Definition~\ref{def:solvedcon}, the constraint is a \solvedcon, a
  contradiction.
  
  \item Assume $\wone \rightsquigarrow \wtwo$. By \textsc{\PruneConstraintsTR}
  (Figure~\ref{fig:algocodemain}, line~\ref{li:addprune1},\ref{li:addprune2}),
  $\rone \rightsquigarrow \wtwo$. Again, by Definition~\ref{def:solvedcon}, the
  constraint is a \solvedcon, a contradiction.
  
  \item Finally, $\wone \not\rightsquigarrow \wtwo$ and $\wtwo \not\rightsquigarrow \wone$.
  By Lemma~\ref{lemma:sameknowngraph}, $\wone$ and $\wtwo$ are concurrent in the known graph
  of $\Qp(\E(h))$ too. Thus, by Lemma~\ref{lemma:samepscc}, they are in the same \pscc.
\end{myitemize2}
\end{proof}

\begin{theorem2}
Given a history $h$ that is \clientser and a continuation $r$,
for any transaction $\vtx_i$ that is \delcand,
there is:

$\Pp( \E(h \circ r) \del \vtx_i ) $ is acyclic
$\Leftrightarrow$  %
$\Pp( \E(h \circ r) )$ is acyclic.
\label{theorem:safedeleteone}
\end{theorem2}

\begin{proof}
First, we prove that if $h \circ r$ is \notwellformed,
\sys rejects (so that neither $\Po( \E(h \circ r) \del \vtx_i ) $ nor $\Po(
\E(h \circ r) )$ is acyclic).
By Definition~\ref{def:notwellformed}, $h \circ r$ either (1)
contains a write transaction having multiple \adjacentwrites,
or else (2) has cycles in the known graph of $\E(h \circ r)$.

For (1), if the write transaction which has multiple \adjacentwrites
is $\vtx_i$, given that $h$ is \clientser, there is at least one
\adjacentwrite is in $r$. However, $\vtx_i$ is an \obsolete transaction,
hence the algorithm detects a violation in $\E(h \circ r) \del \vtx_i$
(Claim~\ref{claim:noreadfromfuture}).
On the other hand, if the write transaction is not $\vtx_i$,
such transaction is detected in $\E(h \circ r) \del \vtx_i$
the same way as in $\E(h \circ r)$.
For (2), by Lemma~\ref{lemma:theknowngraphdelcand},
$\E(h \circ r) \del \vtx_i$ also has cycles
which the algorithm will reject.

Now, we consider the case when $h \circ r$ is \wellformed.
Because $h \circ r$ is \wellformed,
there is no cycles in the known graph of $\Pp(\E(h \circ r))$.
In $\Pp(\E(h\circ r))$, by Lemma~\ref{lemma:inpscc},
all transactions in \unsolvedcons that involves $\vtx_i$
are in the same \pscc (call this \pscc $\vpscc_i$).
Because $\vtx_i$ in $\vpscc_i$ is \delcand, by Definition~\ref{def:delcand},
all transactions in $\vpscc_i$ are \delcand.
Hence, no transaction in $\vpscc_i$ is involved in {\unsolvedcon}s with
either future transactions in $r$ (Lemma~\ref{lemma:solvedcon})
or other transactions in $h$ (Lemma~\ref{lemma:inpscc}).

\vspace{1ex}
\noindent
``$\Rightarrow$''.
Next, we prove that $\Pp(\E(h \circ r))$ is acyclic by construction an acyclic
compatible graph $\hat{g}$.
By Fact~\ref{fact:solvedcon}, we only need to concern \unsolvedcons
that might generate cycles.
Consider the transactions in $h \circ r$ but not in $\vpscc_i$,
the \unsolvedcons are the same in both
$\Pp(\E(h\circ r))$ and $\Pp(\E(h \circ r) \del \vtx_i)$;
given that $\Pp(\E(h \circ r) \del \vtx_i)$ is acyclic, there exists
a combination of options for \unsolvedcons that makes $\hat{g}$ acyclic
in these transactions.
Now, consider transactions in $\vpscc_i$.
Because all transactions in $\vpscc_i$ are in $h$ and $h$ is \clientser,
there exists a combination of options for the \unsolvedcons in $\vpscc_i$
so that $\hat{g}$ has no cycle in $\vpscc_i$.
Finally, because there is no \unsolvedcon between $\vpscc_i$ and other transactions in $h \circ r$
(proved in the prior paragraph), $\hat{g}$ is acyclic.

\vspace{1ex}
\noindent
``$\Leftarrow$''.
Because $\Po(\E(h\circ r))$ is acyclic, there exists an acyclic compatible graph $\hat{g}$.
We can construct a compatible graph $\hat{g'}$ for $\Po(\E(h\circ r) \del \vtx_i)$
by choosing all constraints according to $\hat{g}$---choose the edges in constraints
that appear in $\hat{g}$.
Given that the known graph in $\Po(\E(h\circ r) \del \vtx_i)$ is a subgraph of
$\Po(\E(h\circ r))$'s, $\hat{g'}$ is a subgraph of $\hat{g}$.
Hence, $\hat{g'}$ is acyclic, and $\Po(\E(h\circ r) \del \vtx_i)$ is acyclic.
\end{proof}

In the following, we use $h_i$ to represent the transactions fetched in
$i_{\textrm{th}}$ round.
The first round's history $h_1$ is a complete history itself;
for the $i_{\textrm{th}}$ round ($i \ge 2$), $h_i$ is a continuation of the prior history
$h_1 \circ \cdots \circ h_{i-1}$.
We also use $d_i$ to denote the transactions deleted
in the $i_{\textrm{th}}$ round.

\begin{lemma2}
Given that history $h_1 \circ \cdots \circ h_i \circ h_{i+1}$ is \wellformed,
if a transaction is \delcand in $h_1 \circ \cdots \circ h_i$,
then it remains \delcand in $h_1 \circ \cdots \circ h_i \circ h_{i+1}$.
\label{lemma:stilldelcand}
\end{lemma2}

\begin{proof}
Call this \delcand transaction $\vtx_i$ and the \pscc it is in during round $i$
as $\vpscc_i$.
Because \sys's algorithm does not delete fence transactions
(Figure~\ref{fig:algocodefull}, line~\ref{li2:nodeletefence}),
the epoch numbers for normal transactions in round $i$
remain the same in round $i+1$.
Hence, the $\epochagree$ in round $i+1$ is greater than or equal to the one in
round $i$.
Thus, if \sys's algorithm (\textsc{\SetFrozen} and \textsc{\GenFrontier})
sets a transaction (for example $\vtx_i$) as a candidate to remove in round $i$,
it still is in round $i+1$.

Because history $h_1 \circ \cdots \circ h_i \circ h_{i+1}$ is \wellformed,
transactions in $\vpscc_i$ do not have cycles with transactions in $h_{i+1}$.
Also, by Lemma~\ref{lemma:inpscc}, transactions in $\vpscc_i$ do not have
\unsolvedcons with $h_{i+1}$.
Thus, $\vpscc_i$ remains to be a \pscc in round $i+1$.
Above all, by Definition~\ref{def:delcand}, $\vtx_i$ is \delcand in round $i+1$.
\end{proof}

\begin{theorem2}
\sys's algorithm runs for $n$ rounds and doesn't reject
$\iff$
history $h_1 \circ h_2 \cdots \circ h_n$ is \clientser.
\end{theorem2}

\begin{proof}
We prove by induction.

For the first round,
\sys's algorithm only gets history $h_1$ (line~\ref{li2:historyinit})
and constructs its \depinfo $\E(h_1)$ (line~\ref{li2:creategraph2}).
Because \textsc{\VerifySerializability} doesn't reject,
the pruned \mypolyg $\Qp(\E(h_1))$ is acyclic, and
\begin{align*}
&\Qp(\E(h_1)) \text{ is acyclic} &\\
  & \iff \Qo(\E(h_1)) \text{ is acyclic} & \text{[Lemma~\ref{lemma:prunedq}]}\\
  &\iff h_1 \text{ is \clientser} & \text{[Lemma~\ref{lemma:polygraphser}]}
\end{align*}

For round $i$, assume that
history $h_1 \circ h_2 \cdots \circ h_{i-1}$ is \clientser
and \sys's algorithm doesn't reject for the last $i-1$ rounds.
In round $i$,
\sys' algorithm first fetches $h_i$,
gets the \depinfo from the last round which is
$\E(h_1 \circ \cdots \circ h_{i-1}) \del (d_0 \cup \cdots \cup d_{i-1})$,
and constructs a pruned \mypolyg
$\Qp(\E(\E(h_1 \circ \cdots \circ h_{i-1}) \del (d_0 \cup \cdots \cup d_{i-1}), h_i))$.
In the following, we prove that \sys's algorithm doesn't reject (the pruned \mypolyg is acyclic)
if and only if $h1 \circ \cdots \circ h_i$ is \clientser.

\begin{align*}
&\Qp(\E(\E(h_1 \circ \cdots \circ h_{i-1}) \del (d_0 \cup \cdots \cup d_{i-1}), h_i))  \text{ is acyclic} \\
&\iff \Qp(\E(h_1 \circ \cdots \circ h_i) \del (d_0 \cup \cdots \cup d_{i-1}))  \text{ is acyclic} \\
&  \text{\rightline{[Lemma~\ref{lemma:samedepinfo}, \ref{lemma:stilldelcand}]}} \\
& \iff \Pp(\E(h_1 \circ \cdots \circ h_i) \del (d_0 \cup \cdots \cup d_{i-1}))  \text{ is acyclic} \\
&  \text{\rightline{[Lemma~\ref{lemma:prunedq},\ref{lemma:polygraphser}]}} \\
& \iff \Pp(\E(h_1 \circ \cdots \circ h_i)) \text{ is acyclic}  \\
& \text{\rightline{[Theorem~\ref{theorem:safedeleteone}, Lemma~\ref{lemma:stilldelcand}]}} \\
& \iff \Po(\E(h_1 \circ \cdots \circ h_i))  \text{ is acyclic} \\
&  \text{\rightline{[Lemma~\ref{lemma:prunedq}]}} \\
& \iff h_1 \circ \cdots \circ h_i \text{ is \clientser} \\
& \text{\rightline{[Lemma~\ref{lemma:polygraphser}]}}
\end{align*}

\end{proof}

\end{document}